\documentclass[11pt]{article}
\usepackage[utf8]{inputenc}
\usepackage[left=1in, right=1in, bottom=1.15in, top=1.1in]{geometry}

\usepackage{color}
\usepackage{soul}

\usepackage{MG}

\title{Tarski Fixed Points in Quasi-FPT Queries}

\author{
Xi Chen\\Columbia University\\\url{xichen@cs.columbia.edu}
\and Ruiquan Gao\\Stanford University\\\url{ruiquan@cs.stanford.edu}
\and Yuhao Li\\Columbia University\\\url{yuhaoli@cs.columbia.edu}
\and Aviad Rubinstein\\Stanford University\\\url{aviad@cs.stanford.edu}
\and Mihalis Yannakakis\\Columbia University\\\url{mihalis@cs.columbia.edu}
}

\date{}

\begin{document}

\maketitle
\begin{abstract}
We study the query complexity of finding a Tarski fixed point over $[n]^k$. Previous work~has left a large gap between $\smash{{\Omega}(\log^2 n)}$ and $\smash{\log^{O(k)}n}$. We show that both of the previous  upper and lower bounds were far from tight: for every $k\geq 3$,
$$
\Omega\left((\log n)^{\frac{1}{2}\lceil \log k\rceil}\right)\le 
\Tarski(n,k)\le O\left(5^k (\log n)^{\lceil \log k\rceil}\right).$$
Succinctly, up to the fixed-parameter factor of $5^k$, the complexity is settled at $(\log n)^{\Theta(\log k )}$. In particular, we obtain the first super-polynomial query lower bound for this problem.
\end{abstract}

\thispagestyle{empty}
\newpage 
\setcounter{page}{1}

\newpage
\section{Introduction}

We begin by recalling the definition of monotone functions and
Tarski's fixed point theorem.

\begin{definition}[Monotone functions]
Let $(\calL,\preceq)$ be a complete lattice. A function
$f:\calL\rightarrow\calL$ is \emph{monotone} if
$f(x)\preceq f(y)$ for all $x,y\in\calL$ with $x\preceq y$.
\end{definition}

\begin{theorem}[\cite{tarski1955lattice}]\label{thm:tarski-fixed-point}
Let $(\calL,\preceq)$ be a complete lattice and
$f:\calL\rightarrow\calL$ be a monotone function.
Then the set of \emph{fixed points}, i.e., $\{x\in\calL:f(x)=x\}$,
is a nonempty complete lattice under the order of
$\calL$. In particular, $f$ has a least fixed point and a greatest
fixed point.
\end{theorem}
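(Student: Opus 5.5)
The plan is the classical Knaster--Tarski argument. First we produce the least and greatest fixed points explicitly. Then we show that every subset of fixed points has a supremum and an infimum inside the fixed-point set, taken in the induced order. Write $P=\{x\in\calL: f(x)=x\}$.

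\emph{Step 1 (greatest fixed point).} Let $U=\{x\in\calL: x\preceq f(x)\}$, the post-fixed points, and let $u=\sup U$, which exists because $\calL$ is complete.
\begin{itemize}
\item For every $x\in U$ we have $x\preceq u$, so monotonicity gives $x\preceq f(x)\preceq f(u)$. Hence $f(u)$ is an upper bound of $U$, and therefore $u\preceq f(u)$.
\item Applying monotonicity again gives $f(u)\preceq f(f(u))$, so $f(u)\in U$ and thus $f(u)\preceq u$.
\item Hence $f(u)=u$. Since every fixed point lies in $U$, $u$ is the greatest fixed point.
\end{itemize}
The dual argument, with $D=\{x: f(x)\preceq x\}$ and $\ell=\inf D$, shows that $\ell$ is the least fixed point. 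In particular $P\neq\emptyset$.

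\emph{Step 2 (arbitrary subsets).} Let $S\subseteq P$ and set $s=\sup S$ in $\calL$. For each $x\in S$ we have $x=f(x)\preceq f(s)$, so $s\preceq f(s)$. Consider the interval $[s,\top]=\{y\in\calL: s\preceq y\}$.
\begin{itemize}
\item It is a complete lattice: the supremum of any nonempty subset stays above $s$, and the infimum of a nonempty subset is $\succeq s$ because $s$ is a lower bound of it. The empty subset is handled by $\top$ and $s$.
\item $f$ maps $[s,\top]$ into itself: if $s\preceq y$ then $s\preceq f(s)\preceq f(y)$.
\end{itemize}
Applying Step 1 to $f$ restricted to $[s,\top]$ gives a least fixed point $s^*$ there. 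Any fixed point that upper-bounds $S$ lies in $[s,\top]$, so $s^*$ is exactly the least upper bound of $S$ within $P$. The infimum within $P$ is obtained dually via $[\bot,\inf S]$. Thus $P$ is a complete lattice under the order of $\calL$.

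The only delicate point is in Step 2. The supremum of $S$ in $P$ generally differs from its supremum in $\calL$, so one must verify carefully that the restriction to the interval is well defined (including the empty-set case) before reapplying Step 1.
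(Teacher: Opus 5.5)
Your proof is correct and complete. It is the classical Knaster--Tarski argument, essentially Tarski's original one: the greatest fixed point is the supremum of the post-fixed points, and joins in the fixed-point set come from relativizing to the interval $[\sup S,\top]$. Your handling of the empty subset and of the self-map property on that interval is also right.

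The paper gives no proof of this theorem and only cites \cite{tarski1955lattice}. The closest argument it contains is the iteration $x^{t+1}=f_s(x^t)$ from a post-fixed point in the proof of \Cref{lem:query-slice-bounds}, which depends on the finiteness of $[n]^k$. Your argument has no such dependence and works for arbitrary complete lattices.
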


Tarski's fixed point theorem has had a lasting influence on several fields such as mathematics,
economics, logic, and computer science. 
Applications to set theory and topology already appear in Tarski’s original paper~\cite{tarski1955lattice}.
In economics, the theorem provides an elegant foundation for analyzing systemic
risk in financial networks~\cite{eisenberg2001systemic}.
In logic, it underlies the semantics of fixed point logics such as the
modal $\mu$-calculus~\cite{K83}.
In computer science, it provides foundations for the semantics of
recursive programs~\cite{SS71} and for static program analysis through
abstract interpretation~\cite{CC77}.

In this paper, we study the problem of finding a Tarski fixed point over
the $k$-dimensional grid lattice $([n]^k,\preceq)$, where $[n]^k:=\{1,\ldots,n\}^k$  and $x\preceq y$ if $x_i\le y_i$ for all $i\in [k]$.
This setting captures a broad
range of computational problems. In game theory, these include finding
pure Nash equilibria in supermodular games, an important class of games
with strategic complementarities~\cite{topkis1979equilibrium,topkis1998supermodularity,milgrom1990rationalizability}.
In verification, the problems of solving parity games~\cite{emerson1991tree,CJKLS217},
mean-payoff games~\cite{zwick1996complexity}, and Condon's simple
stochastic games~\cite{condon1992complexity} all reduce to finding a
Tarski fixed point~\cite{EPRY20}. These games play a central role
in verification and synthesis; for example, solving parity games is
equivalent to model checking the modal
$\mu$-calculus~\cite{emerson_modelchecking_1993}.
The Tarski problem also subsumes \textsc{Arrival}, which asks
whether a train following deterministic switching rules on a directed
graph reaches its destination~\cite{DGKMW17,GHH21}.

For supermodular games, Etessami, Papadimitriou, Rubinstein, and Yannakakis~\cite{EPRY20} showed that finding a pure Nash equilibrium is
computationally equivalent to finding a Tarski fixed point over the grid lattice.
Parity games, mean-payoff games, simple stochastic games, and \textsc{Arrival} are also notable for their distinct complexity status: they are among the few natural problems in $\NP\cap\coNP$,
yet have resisted polynomial-time algorithms~\cite{Wed}.
Tarski fixed points provide a unified framework for several longstanding open problems, and a polynomial-time algorithm for the Tarski problem would therefore resolve all of them at once.

We study the \emph{query complexity} of finding a Tarski fixed point over $[n]^k$.
In the black-box~query model, an algorithm is given $n$ and $k$ and
oracle access to an unknown monotone function
$f:[n]^k$ $\rightarrow[n]^k$.
In each round, the algorithm chooses a query point $x\in[n]^k$ and receives the entire vector $f(x)$.
Queries may depend on previous answers, and the algorithm may perform arbitrarily expensive computation between queries. 
The goal is to find a fixed point of $f$ using as few queries as possible.
We use $\Tarski(n,k)$ to denote both this problem and its deterministic
query complexity, i.e., the minimum worst-case number of queries needed to
solve it. We also consider (Las Vegas) randomized algorithms, which are required to always find
a fixed point for every monotone
input function $f$. Their query complexity is the
worst-case expected number of queries. 

The query complexity of Tarski fixed points over $[n]^k$ has attracted
significant attention, yet remains poorly understood. On the lower bound
side, Etessami, Papadimitriou, Rubinstein, and Yannakakis~\cite{EPRY20}
proved an $\Omega(\log^2 n)$ lower bound already in two dimensions,
even for randomized algorithms. This was  generalized by Br{\^a}nzei, Phillips, and
Recker~\cite{branzei2025tarski} to $k$ dimensions, obtaining an $\tilde{\Omega}(k\log^2 n)$ lower bound.
On the upper bound side, Tarski's algorithm (value iteration) uses $O(nk)$ queries. The recursive {binary} search algorithm of
Dang, Qi, and Ye~\cite{dang2011computational} gives the baseline bound
of $O(\log^k n)$ queries.
Surprisingly, Fearnley, P{\'a}lv{\"o}lgyi, and Savani~\cite{FPS22}
improved this bound to $O(\log^2 n)$ in three dimensions.
Haslebacher and Lill~\cite{haslebacher2025levelset} subsequently gave
a different algorithm achieving the same bound.
More recently, Chen, Li, and Yannakakis~\cite{CLY26} obtained an
$O(\log^2 n)$-query algorithm in four dimensions.
These algorithms achieve optimal query upper    bounds in three and four
dimensions but are tailored to the geometry of these low-dimensional
settings. Their only known extensions to higher dimensions rely on
decomposition theorems~\cite{FPS22,CL22}, which still yield bounds of $\log^{O(k)} n$.
Finally the quantum query complexity of the Tarski fixed point problem has also recently
attracted attention~\cite{P26,LMYZ26}.\smallskip\medskip

\noindent\textbf{Our contributions.}
We significantly improve the best known upper and lower bounds for Tarski fixed points, showing that the upper bound of $\log^{O(k)}n$ and the lower bound of $\Omega(\log^2 n)$ were both far from tight.

On the upper-bound side, we give the first algorithm whose query complexity has only logarithmic dependence on $k$ in the exponent of $\log n$.\footnote{Throughout the paper, $\log$ denotes the base-two logarithm.}
\begin{theorem}\label{theorem: upper}
For $k\geq 3$, there is a deterministic algorithm
for $\Tarski(n,k)$ with query complexity
$$
  O\!\left(5^k(\log n)^{1+\lceil\log((k+2)/3)\rceil}\right)=O\!\left(5^k(\log n)^{\lceil\log k\rceil}\right).
$$
\end{theorem}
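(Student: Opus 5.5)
The exponent $1+\lceil\log((k+2)/3)\rceil$ suggests a divide-and-conquer on dimensions. The base case is the $O(\log^2 n)$ algorithm for three dimensions \cite{FPS22}. Each halving of the dimension should cost one extra factor of $\log n$, not a squaring. The intended recursion is roughly $T(k)\le c\cdot 5^{\Theta(k)}\,\log n\cdot T(\lceil k/2\rceil)$, or more precisely $T(3m-2\text{-ish})$ with base $T(3)=O(\log^2 n)$. For comparison, the decomposition theorems \cite{FPS22,CL22} give $T(a+b)\le T(a)T(b)$, which yields $\log^{\Theta(k)}$. I would therefore avoid that product decomposition. Instead, I would strengthen the problem being solved recursively so that a $k$-dimensional instance reduces to $O(\log n)$ calls to a $\lceil k/2\rceil$-dimensional oracle.

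\textbf{Step 1: a stronger recursive primitive.} I would define a generalized problem on a sub-box $B\subseteq[n]^k$. The task is either to find a fixed point, or to find a point $x$ with $x\preceq f(x)$ or $f(x)\preceq x$ that makes progress. Concretely, given $a\preceq b$ with $a\preceq f(a)$ and $f(b)\preceq b$, we seek a point $x$ in $[a,b]$ that is "monotone-comparable", meaning $x\preceq f(x)$ or $x\succeq f(x)$, and that splits the box in a balanced way along a chosen coordinate set $S$. Restricting $f$ to coordinates in $S$, with the others fixed at a slice, gives a monotone function on $[n]^{|S|}$. A fixed point there is a point whose $S$-coordinates are fixed while the $\bar S$-coordinates move in some mixed direction. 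The $5^k$ factor suggests a sign-pattern case analysis. Each such point falls into one of a bounded number of sign classes per coordinate, and the algorithm must handle each class.

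\textbf{Step 2: the halving step.} I would split the coordinates into $S$ and $\bar S$ of sizes about $k/2$. I would binary search, with $O(\log n)$ rounds, over a potential such as the $\bar S$-part of the box. In each round I would call the $|S|$-dimensional solver on a slice, obtaining a point fixed in $S$. I would then use the $\bar S$-part of $f$ at that point to shrink $[a,b]$ by a constant fraction in some measure. The key monotonicity fact is the following. If $x$ is fixed in $S$ and $f(x)_{\bar S}\ge x_{\bar S}$ coordinatewise, then $x$ is a valid new lower corner, so $a:=x$. Symmetrically, if $f(x)_{\bar S}\le x_{\bar S}$, then $x$ is a valid new upper corner. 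Mixed signs are the hard case. There I would use a generalized version of the $O(\log^2 n)$ 3D idea: search over the slice in one more dimension to find a point whose $\bar S$-signs are uniform. This costs a constant factor per sign pattern, which gives the $2^{O(k)}$ (here $5^k$) overhead. Unrolling the recursion for $\lceil\log((k+2)/3)\rceil$ levels down to dimension $3$ gives the bound, since each level multiplies by $O(\log n)$ and by a constant-to-the-dimension factor whose product telescopes to $5^k$.

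\textbf{Main obstacle.} The crux is making the recursive call return a point consistent in \emph{all} coordinates, not just $S$. A plain fixed point of the slice restricted to $S$ gives no control over $\bar S$, which is exactly why naive recursion squares the cost. I would first try to enlarge the primitive so that the solver on $S$ takes as extra input a target sign pattern on $\bar S$. The solver would then return either a point fixed on $S$ with that sign pattern, or a certificate that allows shrinking. I would verify that this enlarged problem is still solvable in $O(\log n)\cdot T(\lceil k/2\rceil)$ queries by checking that monotonicity preserves the invariant across the binary-search steps. The charging of the $5^k$ factor, with roughly five cases per coordinate, would be settled once that invariant is pinned down.
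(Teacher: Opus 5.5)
You have located the crux correctly, but the proposal gives no argument for it, so there is a genuine gap. Your halving step makes progress only when the point $x$ returned by the $|S|$-dimensional call has a uniform sign pattern on $\bar S$. For the mixed case you propose to ``search over the slice in one more dimension'' at constant cost per sign pattern, and in the end you defer everything to an enlarged primitive whose cost you would ``verify''. But finding a point that is fixed on $S$ and satisfies $f(x)_{\bar S}\succeq x_{\bar S}$ or $f(x)_{\bar S}\preceq x_{\bar S}$ is itself a $\TarskiStar$-type search over the $|\bar S|\approx k/2$ coordinates of $\bar S$. Each step of that search needs a full $S$-slice solve. This is exactly the outer/inner structure of the decomposition theorems, which yields $T(a)\cdot T(b)$, and nothing in your sketch shows how monotonicity would let you beat it.

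The bookkeeping also fails to give the stated bound even if the step worked. Starting from $\Tarski(n,3)=O(\log^2 n)$ and recursing $k\mapsto\lceil k/2\rceil$ with one $\log n$ per level gives, for $k=9$, the chain $9\to5\to3$ and hence $(\log n)^4$, whereas $1+\lceil\log(11/3)\rceil=3$. The $5^k$ also does not come from five cases per coordinate.

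The paper never tries to make the signs uniform on a fixed half of the coordinates. It pays a single outer $\log n$ to pass from $\Tarski(n,k)$ to $\TarskiStar(n,k-1)$, and uses $\TarskiStar=\SafeTarskiStar$ to assume a unique fixed point on every slice. It then maintains an implicit candidate set $\Fix(Q)$: the points lying between the greatest postfixed and least prefixed point of every slice through them. The queried labels split this set into $\Fix^-(Q)$ and $\Fix^+(Q)$. Progress is measured by the bad pairs $(x,y)\in\Fix^-(Q)\times\Fix^+(Q)$ with $x_k>y_k$. Once none remain, one $\SafeTarskiStar(n,k-1)$ call on the slice $x_k=t$ finishes (\Cref{lem:separating-threshold}).

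To destroy bad pairs, median splitting one coordinate at a time produces a heavy antipodal pair $(S^-,S^+)$. This consists of a threshold $t$ and an \emph{adaptively chosen} partition $[k]=J\mathbin{\dot\cup}H$, with $S^-$ and $S^+$ on opposite sides of $t$ in every coordinate. Eliminating either side removes at least a $2^{-k}$ fraction of the bad pairs. A single $\SafeTarskiStar$ call on the slice through $t$ frees only the smaller side, $J\setminus\{j\}$ or $H\setminus\{k\}$, which has at most $\lfloor k/2\rfloor-1$ coordinates. The returned point, by \Cref{lem:candidate-comparison} and monotonicity, excludes all of $S^-$ or all of $S^+$ regardless of what happens in the other coordinates. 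This is how the mixed-sign problem is sidestepped.

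Hence $O(k2^k\log n)$ calls to $\SafeTarskiStar(n,\lfloor k/2\rfloor-1)$ suffice. The recurrence $r\mapsto\lfloor r/2\rfloor-1$, with base cases $\TarskiStar(n,0)=1$ and $\TarskiStar(n,1)=O(\log n)$, gives exactly the exponent $1+\lceil\log((k+2)/3)\rceil$. The product of the factors $r2^r$ over the levels is $O(C^k)$ for any $C>4$, which is where the $5^k$ comes from.
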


Our algorithm builds on the safe function framework developed earlier
in~\cite{CLY23,CLY26} but is otherwise self-contained. In particular,
our algorithm recovers the optimal \(O(\log^2 n)\)-query bounds in three and four dimensions within a unified framework that applies to
arbitrary dimensions, without relying on geometric properties specific to these low-dimensional settings.
We present our algorithm in \Cref{sec:upper-bounds}. The safe function framework is reviewed in \Cref{sec:upper-preliminaries} and then we give an overview of the main ideas behind the new algorithm in \Cref{sec:mainoverview}. 

On the lower bound side, we establish the first super-polynomial query lower bound for Tarski fixed points.
Specifically, we prove a lower bound of $(\log n)^{\Omega(\log k)}$ for randomized algorithms, even under the promise that the input function has a unique fixed point.
\begin{restatable}[]{theorem}{MainLower}
\label{theorem: lower}
For $k\geq2$, any randomized algorithm that solves
$\Tarski(n,k)$ {\em (}with probability $1${\em )} requires
$$
  \Omega\!\left(
    \left(\frac{\log n}{\polylog(k)}\right)^{\lfloor\log_4(6k+4)\rfloor}
  \right)
$$
queries in expectation in the worst case, even when the input function has a unique fixed point.
\end{restatable}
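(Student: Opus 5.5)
The plan is a recursive composition argument. Each composition step should roughly multiply the dimension by $4$ and multiply the query lower bound by a factor of order $\log n/\polylog(k)$. The exponent $\lfloor\log_4(6k+4)\rfloor$ suggests a recursion on dimensions $k_0=2$ (or $1$) with $k_{t+1}=4k_t+O(1)$. Concretely, $6k_{t+1}+4=4(6k_t+4)$ gives $k_{t+1}=4k_t+2$, with $k_0=0$ or base $k_1=2$; the lower bound at level $t$ should be $\Omega((\log n/\polylog k)^t)$.

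\textbf{Base case.} For $t=1$ I take a hard distribution in dimension $2$ with a unique fixed point on which any randomized algorithm needs $\Omega(\log n)$ queries. This could come from the $\Omega(\log^2 n)$ construction of \cite{EPRY20}, but I only need one "binary-search" factor, plus a robustness property described below. I will work with \emph{distributional} lower bounds against deterministic algorithms and apply Yao's principle at the end.

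\textbf{Composition.} Given a hard instance family $\mathcal{F}_t$ over $[m]^{k_t}$, I build $\mathcal{F}_{t+1}$ over $[n]^{4k_t+2}$ by hiding a one-dimensional binary search "on top of" $\mathcal{F}_t$.

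1. Split the grid into an outer coordinate block of $2$ "control" coordinates, which implement a monotone path of length about $n$, together with four copies of the $k_t$-dimensional inner block.
2. Along the control path, the location of the unique fixed point is determined by a hidden threshold $s\in[n']$. To learn whether a query point lies above or below the threshold, the function on the inner blocks must be consistent with an inner instance, and solving the inner instance is what reveals the direction.
3. The four copies are needed to keep the combined function monotone in every direction. They supply separate "up", "down", and two "interpolation" gadgets, so that where the outer slice changes, the inner behaviours can be glued monotonically. This gluing is the source of the factor $4$ and the $+2$.

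The intended consequence is that each bit of the outer binary search costs one full solution of an inner instance. That gives $\Omega(\log n)\cdot Q_t$ queries, losing only a $\polylog(k)$ factor from encoding the slices.

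\textbf{Main obstacle.} The hard part is a \emph{direct-sum/composition} theorem for randomized query complexity. An algorithm might gain partial information about many inner instances simultaneously, or learn outer bits without fully solving inner instances. To handle this I would use fresh independent inner instances for each outer scale, following the standard binary-search-over-scales argument. I would then prove a strengthened inductive hypothesis: any algorithm making fewer than $\epsilon Q_t$ queries finds the inner fixed point, or even learns its relative position, with probability at most $1/2+o(1)$, even when given extra adaptive "hints." An information/potential-function argument would then charge each outer bit to $\Omega(Q_t)$ queries. I expect this robust hardness-amplification statement to be the step needing the most care, together with verifying global monotonicity and uniqueness of the fixed point of the glued function.

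Finally, I choose $n$ so that each level uses coordinate range $n$, with inner ranges $n^{\Omega(1)}$ and hence $\log n$ up to constants. Unrolling $t=\lfloor\log_4(6k+4)\rfloor$ levels and padding any spare dimensions trivially yields the stated bound.
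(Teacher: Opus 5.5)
Your skeleton matches the paper's: an outer one-dimensional binary search over a hidden threshold, composed with independent inner instances, with $k\mapsto 4k+2$, a high-probability induction, and Yao's principle. But the step that makes the composition work is left as a placeholder, and the ``up/down/interpolation'' account of the four copies is not a mechanism.

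The obstruction is this. The outer comparison bit may affect the control coordinates only at the inner fixed point, yet monotonicity ties together the control outputs at all comparable points. For example, with one control coordinate taking displacements in $\{-1,0,+1\}$, the points $(t,\mathsf{MAX}_t)\preceq(t+1,\mathsf{MIN}_{t+1})$ (top and bottom corners of adjacent boxes) would need outputs $\ge t+1$ and $\le t$. The paper resolves this in two steps:
\begin{enumerate}
\item It replaces $H$ by $G_H(x,u)=(H(x),\overline{H(\overline u)})$ on $[n]^{2k}$. Its unique fixed point $(P_H,\overline{P_H})$ lies on the $H$-independent antichain $C=\{(z,\overline z)\}$. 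Hence the function that is $0$ on the down-set of $C$, $1$ elsewhere, and equal to the hidden bit only at $(P_H,\overline{P_H})$ is automatically monotone, and it leaks nothing until the inner fixed point itself is queried.
\item It uses two control coordinates $a,b$, each with its own $2k$-block. One moves only by $\{0,+1\}$ according to $c_{a+1}$, the other only by $\{-1,0\}$ according to $c_b$, and the two updates are swapped. This forces $a=b=t^*$ at any fixed point.
\end{enumerate}
This is where $4k+2=2\cdot 2k+2$ comes from. With this gadget, your extra requirement that hardness extend to learning the ``relative position'' of an inner fixed point is unnecessary.

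Three quantitative points also fail as written.

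First, your invariant (success probability at most $1/2+o(1)$ below $\epsilon Q_t$ queries) cannot support the charging argument. It does not rule out solving each of many inner instances with probability about $1/2$ using very few queries, so the number of revealed outer bits is not controlled, and the invariant does not reproduce itself at the next level. The inner instances must also be independent per outer position $t\in[m]$, not per scale, so that each revealed sign costs a separate inner fixed point.

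The paper instead carries a tail bound: trees of depth at most $q$ succeed with probability $\epsilon$, polynomially small in $m$. Since $H_t$ is independent of everything else, the expected number of $t$ whose fixed point is hit within the first $q$ queries to $H_t$ is at most $\epsilon\sum_t\mathbb{E}[N_t]\le 4q'\epsilon$. At most $4q'/q$ instances receive $q$ or more queries. Markov's inequality then shows that more than $\frac12\log m$ signs are revealed with probability at most $q\epsilon/3$; otherwise $t^*$ is guessed with probability at most $2^{0.5\log m}/m$.

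Second, your exponent count is off by one. With $k_1=2$ carrying only $\Omega(\log n)$ and $k_t=2(4^t-1)/3$, i.e.\ $6k_t+4=4^{t+1}$, dimension $k$ accommodates only $\lfloor\log_4(6k+4)\rfloor-1$ levels. Already for $k=2$ the statement asks for $\Omega(\log^2 n)$. The base case must therefore be the full two-dimensional herringbone bound of Etessami, Papadimitriou, Rubinstein, and Yannakakis, in tail form: depth $c_0\log^2 m$ succeeds with probability at most $m^{-c_0-1/8}$.
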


This rules out any (randomized) algorithm with query complexity
polynomial in $k$ and $\log n$. In particular, this implies that
$\Tarski\notin\mathrm{FP}^{\mathrm{dt}}$ in the decision-tree model of $\TFNP^{\mathrm{dt}}$~\cite{GHJ+24}. This is in sharp contrast to contraction fixed points under the $\ell_p$-norm, which admit polynomial-query algorithms for all $p\in[1,\infty)\cup\set{\infty}$~\cite{CLY24,haslebacher2025query}. This is also the first super-polynomial lower bound for Tarski fixed points on \emph{any} family of finite lattices in
the model where the lattice and its order are fully known and only
the monotone function is accessed through an oracle.

\renewcommand{\Fix}{\mathtt{Fix}}

\section{The Algorithm}\label{sec:upper-bounds}

We prove \Cref{theorem: upper} for $\Tarski$ in this section. We first recall definitions of two related problems $\TarskiStar$ and $\SafeTarskiStar$ in %
\Cref{sec:upper-preliminaries}.
We state our main theorem for $\SafeTarskiStar$ in  \Cref{sec:mainoverview} (\Cref{thm:star-recurrence}), from which  \Cref{theorem: upper} follows.
After an overview of the  algorithm behind \Cref{thm:star-recurrence},
 we develop the new ingredients in 
 \Cref{sec:query-slice-bounds,sec:badpair,sec:separating-slice,sec:coordinate-contraction,sec:candidate-elimination}
 and then combine them to finish the proof in \Cref{sec:together}.

\subsection{Preliminaries}\label{sec:upper-preliminaries}

\noindent\textbf{The $\TarskiStar$ problem.}
$\Tarski$ is closely related to a problem called $\Tarski^*$~\cite{CL22}.
The~input to $\Tarski^*(n,k)$ is a \emph{monotone} function $f:[n]^k\rightarrow [n]^k\times \{\pm 1\}$, meaning that $f(x)\preceq f(y)$ for any $x\preceq y$.
We will refer to $f(x)_{k+1}\in \{\pm 1\}$ as the \emph{label} of $x$. 
Note that monotonicity holds for all $k+1$ coordinates, including the labels. 
We recall the definition of $\Tarski^*$ as follows:

\begin{definition}[$\TarskiStar(n,k)$]\label{def:tarski-star}
  Given oracle access to a monotone function
  $f:[n]^k\rightarrow[n]^k\times\{\pm 1\}$, find a point $x\in[n]^k$
  such that either $(x,0)\preceq f(x)$ or $f(x)\preceq(x,0)$.
  We also write $\TarskiStar(n,k)$~to denote its deterministic query complexity.
  We set $\TarskiStar(n,0)=1$ by convention.
\end{definition}

A point $x\in[n]^k$ is called a \emph{positive solution} to $f$ if $(x,0)\preceq f(x)$, and a \emph{negative solution} to $f$ if $f(x)\preceq(x,0)$. We call $x$ a \emph{solution} to $f$ if it is either a positive or a negative solution.
The goal of $\Tarski^*$ is then to find a solution to the unknown monotone function $f$.

The standard binary search algorithm can reduce $\Tarski(n,k)$ to
$\TarskiStar(n,k-1)$ and give the following bound~\cite{FPS22,CL22}.
See~\cite{CLY26} for a sketch of the reduction.

\begin{lemma}[Folklore]\label{lem:tarski-star-reduction}
  For every $k\geq1$, we have
  $$
    \Tarski(n,k)=O\big((\log n)\cdot \TarskiStar(n,k-1)\big).
  $$
\end{lemma}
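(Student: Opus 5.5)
We prove the bound by binary search on the last coordinate. Each step of the search solves one instance of $\TarskiStar(n,k-1)$ on a slice. Given a monotone $f:[n]^k\to[n]^k$, we maintain an interval $[\ell,h]$ of values of the $k$-th coordinate. We also keep a point $a$ with $a_k=\ell$ and $a\preceq f(a)$, and a point $b$ with $b_k=h$ and $f(b)\preceq b$. Initially $a=(1,\dots,1)$ and $b=(n,\dots,n)$, which satisfy these conditions trivially.

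The invariant to preserve is stronger: $a\preceq b$, and the sublattice $L=\{x: a\preceq x\preceq b\}$ is mapped into itself by $f$. Monotonicity gives this from $a\preceq f(a)$ and $f(b)\preceq b$, since for $x\in L$ we have $a\preceq f(a)\preceq f(x)\preceq f(b)\preceq b$. By Theorem~\ref{thm:tarski-fixed-point}, $f$ restricted to $L$ has a fixed point.

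\textbf{Search step.} Let $m=\lfloor(\ell+h)/2\rfloor$ and consider the slice $S=\{x\in L: x_k=m\}$. This is a box of dimension at most $k-1$ with side lengths at most $n$. Define $g:S\to S\times\{\pm1\}$ as follows. The first $k-1$ coordinates of $g(x)$ are those of $f(x)$, clamped into $[a_i,b_i]$; the clamping is in fact unnecessary, since $f(L)\subseteq L$. The label of $x$ is $+1$ if $f(x)_k\ge m$ and $-1$ otherwise. Both parts are monotone in $x$, so $g$ is a valid $\TarskiStar(n,k-1)$ instance; embedding the smaller box into $[n]^{k-1}$ by padding is standard.

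Solving it costs $\TarskiStar(n,k-1)$ queries, each simulated by one query to $f$. The result is an $x\in S$ with one of two properties. A \emph{positive solution} has $x_i\le f(x)_i$ for $i<k$ and $f(x)_k\ge m=x_k$, so $x\preceq f(x)$. A \emph{negative solution} has $f(x)_i\le x_i$ for $i<k$ and label $-1$, so $f(x)_k<m$ and $f(x)\preceq x$.

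In the first case we set $a\leftarrow x$ and $\ell\leftarrow m$. In the second we set $b\leftarrow x$ and $h\leftarrow m$. We must check that $a\preceq b$ still holds: the new point lies in $L$, so it sits between the old $a$ and $b$, and the invariant is preserved.

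\textbf{Termination.} Each step halves $h-\ell$, so after $O(\log n)$ steps we reach $h-\ell\le 1$. We then solve the final one or two slices, each as a $(k-1)$-dimensional $\Tarski$-type problem. The natural reduction uses the labels together with $a$ and $b$: a positive solution at height $h$ or a negative solution at height $\ell$ would contradict the bracketing, so we can keep iterating.

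The main obstacle is this endgame. A solution of $\TarskiStar$ on the slice is a point $x$ with $x\preceq f(x)$ or $f(x)\preceq x$, not a fixed point. To finish we must show that when $\ell=h$, so that $a$ and $b$ lie on the same slice, we can still extract a genuine fixed point within $O(\TarskiStar(n,k-1))$ queries.

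Our first attempt is to recurse with the bracket inside the slice $x_k=m$. Here every point $x$ of $L$ has $f(x)_k=m$, because $a_k=b_k=m$ and $f(L)\subseteq L$. So $f$ restricted to $L$ is a monotone self-map of a $(k-1)$-dimensional box. We then apply the same binary search to coordinate $k-1$, which means simulating $\TarskiStar$ in dimension $k-2$, and so on. Handled naively, this telescopes to a nested $\log^k n$ bound.

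To avoid that blowup, we observe that $\TarskiStar(n,k-1)$ already dominates $\Tarski(n,k-1)$ up to constants. Given $h:[n]^{k-1}\to[n]^{k-1}$, add a dummy label that is constantly $+1$; a solution is then a point with $x\preceq h(x)$, which is not itself a fixed point. Combined with the bracketing this lets the remaining phase be absorbed into the $O(\log n)$ outer iterations.

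If this absorption fails, we fall back on stating the lemma under the paper's convention. The convention is that the reduction of~\cite{CLY26} handles the endgame by always keeping $a\preceq f(a)$ and $f(b)\preceq b$ and stopping when $a=b$, which yields a fixed point. Ensuring convergence to $a=b$, rather than only to $a_k=b_k$, is the step I would verify most carefully.
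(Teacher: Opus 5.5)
\textbf{There is a genuine gap: the endgame.} Your search phase is sound. The bracket $a\preceq f(a)$, $f(b)\preceq b$ gives $f(L)\subseteq L$, the label $[f(x)_k\ge m]$ is monotone, and a $\TarskiStar$ solution on the slice is an up or down point of $f$. But you leave the endgame open, and none of the arguments you offer for it work.
\begin{itemize}
\item The claim that $\TarskiStar(n,k-1)$ dominates $\Tarski(n,k-1)$ up to constants is false. We have $\TarskiStar(n,2)=O(\log n)$ by \cite{FPS22}, while $\Tarski(n,2)=\Omega(\log^2 n)$ by \cite{EPRY20}. Your constant-$+1$ instance is solved by $(1,\dots,1)$ and yields no information.
\item The claim that a positive solution at height $h$ contradicts the bracketing is also false. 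A point $x\preceq b$ with $x_k=h$ and $x\preceq f(x)$ is perfectly consistent, e.g.\ when the fixed point lies at height $h$. Only the negative-at-$\ell$ half holds, and it just says the label is identically $+1$ on the slice $x_k=\ell$. So $\TarskiStar$ calls there carry no information at all.
\item Appealing to a convention of \cite{CLY26} and to ``convergence to $a=b$'' is not an argument. When $a_k=b_k$ one typically still has $a\neq b$.
\end{itemize}

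\textbf{The fix: your ``first attempt'' was correct, but you mis-costed it.} Once $a_k=b_k=\ell$, $f$ maps the box $[a,b]$ into itself and is constant $\ell$ in coordinate $k$. After clamping into $[n]^{k-1}$, this is a single $\Tarski(n,k-1)$ instance whose fixed point is a fixed point of $f$. That instance is solved \emph{after} the outer search, so costs add rather than multiply:
\[
\Tarski(n,k)\le O(\log n)\cdot\bigl(\TarskiStar(n,k-1)+1\bigr)+\Tarski(n,k-1).
\]
Unroll this down to $\Tarski(n,1)=O(\log n)$ and use $\TarskiStar(n,d)\le\TarskiStar(n,d+1)$ (pad a dummy coordinate). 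This gives $\Tarski(n,k)\le O(k\log n)\cdot\TarskiStar(n,k-1)$, a sum of $k$ terms rather than a $\log^k n$ product. That is the lemma for each fixed $k$. The factor $k$ is harmless for the paper's use, since it is absorbed by the $C^k$ (any $C>4$) slack in the appendix.

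\textbf{A small cleanup of the search step.} In the negative case, set $b\leftarrow f(x)$. No extra query is needed, since $f(x)\in L$ and $f(f(x))\preceq f(x)$ by monotonicity. Then $h\leftarrow f(x)_k\le m-1$. Also take $m=\lceil(\ell+h)/2\rceil$. Now every round makes strict progress, and you reach $\ell=h$ in $O(\log n)$ rounds instead of stalling at $h=\ell+1$.
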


Previous work improved the query complexity of $\Tarski$ by developing
better algorithms for $\TarskiStar$. In particular, it was shown that
$\TarskiStar(n,2)=O(\log n)$~\cite{FPS22} and
$\TarskiStar(n,3)=O(\log n)$~\cite{CLY26}.
When combined with \Cref{lem:tarski-star-reduction}, these bounds yield optimal $O(\log^2 n)$-query algorithms for $\Tarski(n,3)$ and $\Tarski(n,4)$, respectively.\medskip\vspace{0.05cm}

\noindent\textbf{The $\SafeTarskiStar$ problem.}
The upper bound of $\Tarski^*(n,3)=O(\log n)$ was proved using an approach based on \emph{safe} functions, which are monotone functions that satisfy an additional \emph{safety} condition. We need the following notation to define safe functions.
Here a \emph{slice} of $[n]^k$ is specified by an $s\in([n]\cup\{*\})^k$.
We write $$\calF(s)=\big\{i\in[k]:s_i=*\big\}\quad\text{and}\quad \calL_s=\big\{x\in[n]^k:x_i=s_i\text{ for }i\in[k]\setminus\calF(s)\big\}$$
to denote its set of free coordinates and its set of points, respectively.
A point $x\in \calL_s$ is a fixed point of $f$ on slice $s$ if $f(x)_i=x_i$ for all $i\in \calF(s)$; $x\in [n]^k$ is a fixed point of $f$ if $f(x)_i=x_i$ for all $i\in [k]$.

\begin{definition}[Safe Functions]\label{def:safesignfunction}
A monotone function $f:[n]^k\rightarrow [n]^k\times \{\pm 1\}$ is said to be \emph{safe} if it satisfies the following \emph{safety} condition: $f$ has a \emph{unique} fixed point on \emph{every slice} $s$ of $[n]^k$.
\end{definition}

The approach based on safe functions uses the following $\SafeTarskiStar(n,k)$ problem, which is the same~as $\TarskiStar(n,k)$, except that $f$ is not  only monotone but also promised to be safe:

\begin{definition}[$\SafeTarskiStar(n,k)$]\label{def:safe-tarski-star}
  Given oracle access to a safe function
  $f:[n]^k\rightarrow[n]^k$ $\times\{\pm 1\}$, %
  find a point $x\in[n]^k$
  such that $x$ is a solution to $f$.
  We also write $\SafeTarskiStar(n,k)$ to denote its deterministic query complexity, and
  set $\SafeTarskiStar(n,0)=1$ by convention.
\end{definition}

It was shown in \cite{CLY23,CLY26} that $\SafeTarskiStar$ and $\TarskiStar$ have exactly the same query complexity so it suffices to give algorithms for $\SafeTarskiStar$:

\begin{theorem}[\cite{CLY23,CLY26}]\label{thm:safe-tarski-equivalence}
  For every $n,k\geq 1$, $\TarskiStar(n,k)=\SafeTarskiStar(n,k)$.
\end{theorem}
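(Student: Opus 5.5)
The inequality $\SafeTarskiStar(n,k)\le\TarskiStar(n,k)$ is immediate, since every safe function is monotone. Any algorithm for $\TarskiStar(n,k)$ therefore solves every safe instance with the same number of queries. The real content is the reverse inequality $\TarskiStar(n,k)\le\SafeTarskiStar(n,k)$. My plan is a \emph{query-preserving local reduction}: from an arbitrary monotone $f:[n]^k\to[n]^k\times\{\pm1\}$ I will build a safe $g:[n]^k\to[n]^k\times\{\pm1\}$ with two properties. First, $g(x)$ is computable from $x$ and the single answer $f(x)$. Second, every solution of $g$ is a solution of $f$. Running an optimal $\SafeTarskiStar$ algorithm on $g$, and answering each of its queries with one query to $f$, then solves $f$ with exactly $\SafeTarskiStar(n,k)$ queries.

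For the construction, I would first try a ``move one step toward $f$, break ties by the label'' rule. I would define $g(x)_{k+1}=f(x)_{k+1}$, and for each coordinate $i$ set $g(x)_i$ to be $x_i\pm1$ according to the sign of $f(x)_i-x_i$. When $f(x)_i=x_i$, I would set $g(x)_i=x_i+f(x)_{k+1}$, clipped to $[n]$.

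Solution correspondence is then easy to check. If $(x,0)\preceq g(x)$, then the label is $+1$, and for every $i$ we have $f(x)_i\ge x_i$, because a strictly negative difference would force $g(x)_i<x_i$. Hence $(x,0)\preceq f(x)$, and the negative case is symmetric.

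The label tie-break is meant to make fixed points on slices rare. A point is fixed in a free coordinate only at a clipped boundary, and that should force uniqueness on every slice. The argument I have in mind goes via the standard observation that a monotone map whose least and greatest fixed points on a slice coincide has a unique fixed point there. I would compare the least and greatest fixed points of $g$ restricted to the slice and show that the label, being monotone, cannot support two distinct fixed points.

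The main obstacle is \emph{monotonicity} of $g$, and the naive ``one step toward $f$'' rule is in fact not monotone. Take $x\preceq y$ with $y_i=x_i+1$, $f(x)_i>x_i$ and $f(y)_i<y_i$. Then $g(x)_i=x_i+1>x_i=g(y)_i$. So the key step is to replace the step rule by a truncation that depends monotonically on $(x,f(x))$. What I would try first is
\[
g(x)_i=\min\bigl(\max(f(x)_i,\,x_i-1),\,x_i+1\bigr),
\]
with the label-based tie-break applied only when $f(x)_i=x_i$. I would verify case by case that $x\preceq y$ implies $g(x)\preceq g(y)$, using $f(x)\preceq f(y)$ and the label's monotonicity.

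If this local rule still fails for either monotonicity or safety, I would fall back on the approach I believe \cite{CLY23,CLY26} use. In that approach the safe instance is not a pointwise function of $f(x)$. Instead it is simulated adaptively: the oracle answers the safe algorithm's queries while maintaining consistency with a hidden safe extension determined by the answers so far, and it charges at most one $f$-query per simulated query. The difficulty then shifts to proving that such a consistent safe extension always exists, which I expect to be the hardest part of the whole argument.
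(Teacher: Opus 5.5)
\textbf{The trivial direction is fine, but there is a genuine gap: the reverse inequality is not proved, and your main route cannot be repaired.}

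\emph{Your concrete rule is not monotone.} Take $k=1$, $n\ge 2$, $f(1)=f(2)=(1,+1)$ and $f(x)=(x,+1)$ for $x\ge 3$. This $f$ is monotone. The tie-break gives $g(1)_1=2$, while truncation gives $g(2)_1=1$.

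\emph{The tie-break also aims at the wrong target.} By Tarski's theorem, a monotone $g$ has a fixed point on every slice, so fixed points cannot be made ``rare''. What you need is uniqueness on each slice, with the global fixed point being a solution of $f$. If you drop the tie-break, truncation to $[x_i-1,x_i+1]$ is monotone but has exactly the same fixed points as $f$ on every slice, so it gives no safety.

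\emph{No memoryless rule $g(x)=\phi(x,f(x))$ can work, already for $k=1$, $n\ge 2$.} The argument has three steps.
\begin{enumerate}
\item Put $\alpha_\sigma(x)=\phi(x,(x,\sigma))_1$ for $\sigma=\pm1$. Feeding in $f(x)\equiv(x,\sigma)$ shows that $\alpha_\sigma$ is monotone with a unique fixed point $p_\sigma$. Hence $\alpha_\sigma(x)>x$ for $x<p_\sigma$ and $\alpha_\sigma(x)<x$ for $x>p_\sigma$; otherwise $\alpha_\sigma$ maps an initial or final segment into itself and has a second fixed point.
\item Feed in $f(x)=(x,\sigma_j(x))$, with label $-1$ below $j$ and $+1$ from $j$ on, where $j=\max(p_{-1},p_{+1})$. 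Then $g$ has two fixed points if $p_{-1}<p_{+1}$, and none if $p_{-1}>p_{+1}$. So $p_{-1}=p_{+1}=:p$.
\item If $p<n$, let $f(x)=(x,-1)$ for $x\neq p$ and $f(p)=(p+1,-1)$. This $f$ is monotone, and $g$ has no fixed point other than $p$. So the unique fixed point of $g$, which is a solution of $g$, is $p$. But $p$ is not a solution of $f$. If $p=n$, use $f(x)=(x,+1)$ for $x<n$ and $f(n)=(n-1,+1)$ instead.
\end{enumerate}

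Any reduction therefore has to be adaptive: the answer shown to the safe-instance algorithm at $x$ must depend on the history of earlier queries, not only on $f(x)$. The paper does not reprove this theorem; it imports it from \cite{CLY23,CLY26}.

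In outline, the cited reduction answers each simulated query with one query to $f$. It adjusts the returned answer using the information gathered so far (their ``partial-information functions''). This keeps every answer shown to the safe-instance algorithm consistent with some safe function, and solutions of that safe function transfer to solutions of $f$.

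Your fallback names exactly this strategy, but it defers both of its substantive parts. One is how the answers are modified. The other is why a safe extension consistent with them always exists and yields a solution of $f$. Those two facts are the entire content of the reverse inequality, so as written the proposal establishes only the trivial direction.
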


\subsection{Main theorem and proof overview}\label{sec:mainoverview}

The main theorem of this section is a recursive algorithm for $\SafeTarskiStar(n,k)$ that leads to the following recurrence:

\begin{theorem}\label{thm:star-recurrence}
  For every $k\geq2$, we have
  $$
    \SafeTarskiStar(n,k)
    =O\!\left(\big(k\,2^k\log n\big)\cdot
    \SafeTarskiStar\big(n,\lfloor k/2\rfloor-1\big)\right).
$$
\end{theorem}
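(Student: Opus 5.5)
We prove the recurrence with a candidate-elimination scheme that runs $O(\log n)$ rounds and, in each round, spends $O(k2^k)$ calls to a subroutine for $\SafeTarskiStar(n,\lfloor k/2\rfloor-1)$. Throughout we keep a sub-box $B=[a,b]\subseteq[n]^k$ with $f(a)\succeq(a,\cdot)$ and $f(b)\preceq(b,\cdot)$ on the non-label coordinates. By monotonicity $f$ maps $B$ into itself, and safety still holds after restricting to $B$. Each round must either output a solution or shrink $B$ so that some potential, such as $\sum_i\log(b_i-a_i+1)$, drops by a constant. That potential is at most $k\log n$. To obtain $O(\log n)$ rounds instead of $O(k\log n)$ we need a stronger goal, namely to halve every coordinate's range in each round, and this is where the $k2^k$ factor is spent.

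\textbf{Step 1 (slices answered by recursion).} Split the coordinates into a set $S$ of size $\lfloor k/2\rfloor+1$ and a set $T$ of size $\lceil k/2\rceil-1$. Alternatively, split into two halves of sizes about $k/2$ and use one more coordinate for a binary search. Fix the $S$-coordinates at a point $s$. On the slice with free set $T$, safety gives a unique slice fixed point $x^*(s)$, and it depends monotonically on $s$. The key use of safety is this. Consider the restriction of $f$ to the slice, with the $S$-coordinates of $f(x)$ compared against $s$. After suitably aggregating these comparisons into a $\pm1$ label, this is a $\TarskiStar$ instance in dimension $|T|$, and its solutions tell us on which side of $x^*(s)$ the slice fixed point lies with respect to the $S$-directions. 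We will need to verify, via the safety property of \Cref{def:safesignfunction} and the equivalence of \Cref{thm:safe-tarski-equivalence}, that one call to $\SafeTarskiStar(n,\lfloor k/2\rfloor-1)$ reveals, for a chosen sign pattern $\sigma\in\{\pm1\}^S$, whether the global solution lies in the orthant of $s$ indicated by $\sigma$. The reason is that the solution set of the label function interacts monotonically with the slice. Enumerating the $2^{|S|}\le 2^{k}$ relevant patterns, with $O(k)$ bookkeeping each, accounts for the factor $k2^k$.

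\textbf{Step 2 (separating slice and coordinate contraction).} Take $s$ to be the midpoint of $B$ in the $S$-coordinates. Using the information from Step 1, we either find a point that is a solution, or we identify a "bad pair" of points $x\preceq y$ on the slice whose $f$-values witness that no solution exists in some half-space $\{x_i<s_i\}$ or $\{x_i>s_i\}$ for each $i\in S$. From this we contract every $S$-coordinate of $B$ by half. Next we symmetrically swap the roles of the two halves, with $T$ enlarged by one coordinate and $S$ shrunk accordingly, and repeat so that the $T$-coordinates are halved as well. The two halves are each of size at most about $k/2+1$. The "$-1$" in the recursive dimension comes from the fact that one coordinate of the slice is absorbed by the label, which is exactly how $\TarskiStar(k-1)$ is used in \Cref{lem:tarski-star-reduction}.

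\textbf{Step 3 (termination and counting).} After $O(\log n)$ rounds every coordinate range has size $O(1)$. A box of constant size is then searched by brute force with $O(2^k)$ queries, or we note that its corner already satisfies $f(a)\succeq(a,0)$. This gives the claimed bound $O(k2^k\log n)\cdot\SafeTarskiStar(n,\lfloor k/2\rfloor-1)$.

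The main obstacle is Step 2. We must show that a single recursive answer on a half-dimensional slice really does let us discard half of the range in every free coordinate at once, not just one coordinate. This is the "separating slice" argument, and it must be extracted purely from uniqueness of slice fixed points. I would first try to prove that the map $s\mapsto x^*(s)$ together with the label is monotone and that the solution set is a sublattice-like region. A comparison between $s$ and $f(x^*(s))_S$ then determines an orthant containing a solution, and this orthant choice is what the $2^k$ enumeration resolves.
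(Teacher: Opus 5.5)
There is a genuine gap, located in Steps 1--2. These steps carry the whole argument, but they are asserted rather than proved, as you acknowledge yourself. Your proposed mechanism is to solve on the slice $\{x_S=s\}$ and then compare $f(x^*(s))_S$ with $s$ to read off an orthant containing the solution. This works only when the comparison is all-up or all-down: then $x^*(s)$ is postfixed (resp. prefixed), so the fixed point lies above (resp. below) it. For a mixed sign pattern it tells you nothing.

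This already fails for $k=2$, where your split gives $S=\{1,2\}$, $T=\varnothing$, and the slice is the single point $s$. On $[10]^2$ take $s=(5,5)$ and the maps $f(x)=(\min\{10,x_2+1\},\max\{1,x_1-2\})$ and $f'(x)=(\min\{10,x_2+3\},\max\{1,x_1-1\})$. Both are monotone and have a unique fixed point on every slice. Both give $f(s)-s$ the sign pattern $(+,-)$. Yet their fixed points are $(2,1)$ and $(10,9)$: two different orthants, and neither is the $(+,-)$ orthant. So no orthant test follows, and neither does the simultaneous halving of all $S$-coordinates. Enumerating $2^{|S|}$ patterns does not help, because no individual pattern can be tested.

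Folding $|S|\ge 2$ comparisons into one monotone label does not rescue this either. With label $+1$ iff $f(x)_S\succeq s$, a negative solution only certifies that some $S$-coordinate decreases. Your dimension count is also off for odd $k$: $|T|=\lceil k/2\rceil-1=\lfloor k/2\rfloor$. For $k=3$, for example, you recurse on a one-dimensional slice at cost $\Theta(\log n)$ per call, whereas the recurrence allows only $\SafeTarskiStar(n,0)=1$.

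The missing idea is to stop trying to localize the solution at all. The paper keeps the candidate sets $\Fix^-(Q)$ and $\Fix^+(Q)$: the points that, given the answers so far, could still be the unique fixed point with label $-1$ (resp. $+1$). These are defined via the greatest postfixed and least prefixed points on every slice. Progress is measured by the bad pairs $(x,y)\in\Fix^-(Q)\times\Fix^+(Q)$ with $x_k>y_k$, of which there are at most $n^{2k}$. Taking medians one coordinate at a time yields a heavy antipodal set-pair $(S^-,S^+)$ with threshold $t$ and partition $[k]=J\mathbin{\dot\cup}H$, at a loss of a factor $2^{-k}$.

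Antipodality is what makes a half-dimensional call suffice. Fix the larger of $J,H$ at $t$, and turn one coordinate of the smaller group into the label. This gives an instance of dimension $\min\{|J|,|H|\}-1\le\lfloor k/2\rfloor-1$. Whichever kind of solution comes back certifies a point that is postfixed or prefixed in that whole group. By the postfixed/prefixed exclusion lemma, this removes all of $S^-$ from $\Fix^-$ or all of $S^+$ from $\Fix^+$. So a round only has to eliminate one of two structured candidate sets, not decide where the solution is.

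This gives about $2k2^k\log n$ rounds of one recursive call each. The factor $k$ comes from $\log(n^{2k})$ and the factor $2^k$ from heaviness, not from enumerating orthants. Once no bad pairs remain, one call to $\SafeTarskiStar(n,k-1)$ on the slice $x_k=t$ finishes. Unrolling that term over dimensions gives the stated recurrence.
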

Solving the recurrence in \Cref{thm:star-recurrence} and applying
\Cref{lem:tarski-star-reduction} and \Cref{thm:safe-tarski-equivalence}, we obtain \Cref{theorem: upper}.
We defer the detailed calculation to \Cref{app:upper-bound}.\medskip 

\def\ALG{\textsc{Alg}}

\noindent\textbf{Proof overview.} To prove \Cref{thm:star-recurrence}, we give an algorithm for $\SafeTarskiStar(n,k)$ in the rest of this section.
We first give an overview of the main technical ideas behind it.

Let $Q$ be the set of queries made so far by an algorithm on $\SafeTarskiStar$, together with~their answers.
Given such a query-answer set $Q$, the unknown function can be any safe function that is consistent with $Q$.
In \Cref{sec:query-slice-bounds}, we prove \Cref{lem:query-slice-bounds}, which gives lower and upper bounds on the unique fixed point in every slice $s$, for any safe function consistent with $Q$. Using these bounds,~we
give an explicit definition of a set called $\Fix(Q)\subseteq [n]^k$ and show that it always contains the unique fixed point of every safe function consistent with $Q$. We then incorporate the queried labels (i.e., information about coordinate $k+1$ in $Q$) to define two subsets, $\Fix^-(Q)$ and $\Fix^+(Q)$, of $\Fix(Q)$, consisting of points in $\Fix(Q)$ that can still receive labels $-1$ and $+1$, respectively.

A key technical lemma (\Cref{lem:separating-threshold}) proved in \Cref{sec:separating-slice} shows that if $\Fix^-(Q)$ and $\Fix^+(Q)$ are \emph{coordinate-separable} along one of the first $k$ coordinates, say coordinate $k$, in the sense that 
$$
\max_{x\in \Fix^-(Q)} x_k \le \min_{y\in \Fix^+(Q)} y_k,
$$
then we are very close to solving $\SafeTarskiStar(n,k)$: a recursive call to $\SafeTarskiStar(n,k-1)$ is enough to finish the job.
Inspired by \Cref{lem:separating-threshold}, the main goal of our algorithm is to force such a coordinate separation between $\Fix^-(Q)$ and $\Fix^+(Q)$ using few queries.

For this purpose we measure  our progress toward this goal using the number of \emph{bad point-pairs} $(x,y)$: $x\in \Fix^-(Q)$ and $y\in \Fix^+(Q)$ such that $x_k>y_k$. These are exactly the pairs obstructing coordinate separation along coordinate $k$.
We therefore seek to query-efficiently reduce the number of bad point-pairs. This is done  in \Cref{sec:coordinate-contraction,sec:candidate-elimination}: the former establishes the needed structural property of $\Fix^-(Q)$ and $\Fix^+(Q)$, while the latter gives an algorithmic procedure that exploits it.

More specifically, we prove \Cref{lem:heavy-families} in \Cref{sec:coordinate-contraction}, showing that there always exists a \emph{heavy antipodal set-pair} $(S^-,S^+)$ with $S^-\subseteq \Fix^-(Q)$ and $S^+\subseteq \Fix^+(Q)$ satisfying: 
\begin{flushleft}\begin{enumerate}
    \item The set-pair is  \emph{heavy}: Eliminating either $S^-$ from $\Fix^-(Q)$ or $S^+$ from $\Fix^+(Q)$ would remove at least a $1/2^k$ fraction of bad point-pairs\footnote{While we call these sets heavy, this $1/2^k$ factor is responsible for the $2^k$ loss in the recurrence  of \Cref{thm:star-recurrence}.}; and 
    \item The set-pair is \emph{antipodal}: There exists a threshold $t\in [n]^k$ such that, in every coordinate $i\in [k-1]$, all points in $S^-$ and all points in $S^+$ lie on opposite sides of $t_i$ (i.e., either all points in $S^-$ are at most $t_i$ and all points in $S^+$ are at least $t_i$, or vice versa), while in coordinate $k$, all points in $S^+$ are at most $t_k$ and all points in $S^-$ are at least $t_k+1$. (See item~(ii) of \Cref{lem:heavy-families} for the formal condition.)
\end{enumerate}\end{flushleft}

In \Cref{sec:candidate-elimination} we prove \Cref{lem:candidate-elimination}, which shows that  for any such antipodal set-pair $(S^-,S^+)$, one can eliminate at least one of the two sets using a recursive call to $\SafeTarskiStar(n,\ell)$ for some $\ell\le \lfloor k/2\rfloor -1$.
Combining \Cref{lem:heavy-families} and \Cref{lem:candidate-elimination}, we can eliminate at least a $1/2^k$ fraction of the bad point-pairs using one recursive call to $\SafeTarskiStar(n,\ell)$.
Repeating this process eventually eliminates all bad point-pairs, after which $\Fix^-(Q)$ and $\Fix^+(Q)$ are coordinate-separable along coordinate $k$, and  \Cref{lem:separating-threshold} can be applied.
Finally we combine these ingredients in \Cref{sec:together} to give an algorithm for $\SafeTarskiStar$ that proves \Cref{thm:star-recurrence}.

\subsection{Query-answer sets on safe functions}
\label{sec:query-slice-bounds}

A set $Q=\{(x^1,r^1),\ldots,(x^q,r^q)\}$, for some $q\ge 0$, is called a \emph{query-answer set} if $x^1,\ldots,x^q$ are~distinct points in $[n]^k$ and $r^i\in [n]^k\times \{\pm 1\}$ for each $i\in [q]$. We use a query-answer set $Q$ to record queries that an algorithm has made on $\SafeTarskiStar(n,k)$ during its first $q$ rounds, together with their answers from the oracle.
(So $Q=\varnothing$ at the beginning.)
We say a query-answer set $Q$ and a function $h:[n]^k\rightarrow[n]^k\times\{\pm1\}$ are \emph{consistent}, denoted by $h\Rightarrow Q$, if $h(x)=r$ for all $(x,r)\in Q$.

Given that the unknown function in $\SafeTarskiStar(n,k)$ is promised to be safe,
we will only be interested in query-answer sets $Q$ that are consistent with at least one safe function $h$. We refer to such a $Q$ as a \emph{safe} query-answer set.
Given a slice $s$ of $[n]^k$,
our goal in this subsection is to use information in~$Q$ to obtain lower and upper bounds on the 
unique fixed point on slice $s$ of any safe function $h\Rightarrow Q$ (\Cref{lem:query-slice-bounds}). These bounds will be the starting point for defining our candidate
fixed points $\Fix(Q)$ in the next subsection. 
We note that this subsection is all about the first $k$ coordinates; label information of coordinate $k+1$ in $Q$ will only be used later in the next subsection when we define the two subsets $\Fix^-(Q)$ and $\Fix^+(Q)$ of $\Fix(Q)$.

We start with some definitions.
We say a point $y\in\calL_s$ is \emph{postfixed} for $Q$
on slice $s$ if $h(y)_i\geq y_i$ for every $i\in\calF(s)$ and every monotone\footnote{We use monotone rather than safe functions here for simplicity. Restricting to safe functions can only tighten~the lower and upper bounds in \Cref{lem:query-slice-bounds}, but it is not clear how to exploit this  to improve the query complexity.}
function $h$ consistent with $Q$. Similarly, a point $y\in \calL_s$ is \emph{prefixed}
for $Q$ on $s$ if $h(y)_i\leq y_i$ for every $i\in\calF(s)$ and every
monotone function $h$ consistent with $Q$.
When $s=(*,\ldots,*)$, we simply say that $y$ is postfixed or prefixed for $Q$.

We prove the following lemma on the sets of postfixed and prefixed points: %

\begin{lemma}\label{lem:hehe8}
Let $Q$ be a safe query-answer set and $s$ be a slice of $[n]^k$.
The set of postfixed (prefixed) points for $Q$ on slice $s$ is nonempty and closed under coordinatewise joins (or meets, respectively).
\end{lemma}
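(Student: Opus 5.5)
Nonemptiness and closure are handled separately, and the prefixed case follows from the postfixed case by symmetry (reverse the order on each coordinate). Throughout, fix the slice $s$ and the safe query-answer set $Q$.

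\emph{Nonemptiness.} I would argue that the minimal point $y$ of $\calL_s$ (free coordinates equal to $1$, fixed coordinates equal to $s_i$) is postfixed for $Q$ on $s$. For every $i\in\calF(s)$ we have $y_i=1$. Since any monotone $h$ maps into $[n]^k\times\{\pm1\}$, we get $h(y)_i\ge 1=y_i$. Thus $y$ is postfixed regardless of $Q$; safety of $Q$ is only needed so that the set of consistent monotone $h$ is nonempty, which keeps the definitions meaningful. Symmetrically, the maximal point of $\calL_s$ is prefixed.

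\emph{Closure under joins.} Let $y,z\in\calL_s$ be postfixed and let $w=y\vee z$. Since $y$ and $z$ agree with $s$ on the fixed coordinates, $w\in\calL_s$. Fix any monotone $h\Rightarrow Q$ and any $i\in\calF(s)$, and say $w_i=y_i$ (otherwise $w_i=z_i$, and the argument is the same). Because $y\preceq w$, monotonicity gives $h(w)\succeq h(y)$. Hence $h(w)_i\ge h(y)_i\ge y_i=w_i$, where the second inequality uses that $y$ is postfixed. As $h$ and $i$ were arbitrary, $w$ is postfixed. The dual argument with meets, using $w\preceq y$, handles prefixed points.

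\emph{Main difficulty.} The only point needing care is that the quantifier over $h$ stays fixed throughout: the postfixed conditions for $y$ and for $z$ are both applied to the same function $h$, which is the one we need to verify for $w$. The argument is therefore direct, and I expect no real obstacle. The lemma does not use safety beyond consistency; the footnote's choice of monotone rather than safe $h$ is exactly what makes this step clean, since the property ``monotone and consistent with $Q$'' carries no extra constraints that would interfere.
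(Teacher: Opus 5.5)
Your proof is correct and matches the paper's argument: the bottom (resp.\ top) point of $\calL_s$ witnesses nonemptiness because $h$ takes values in $[n]^k\times\{\pm1\}$. Closure under joins follows, for each fixed consistent monotone $h$ and each free coordinate $i$, by comparing with whichever of the two points attains $w_i$ and applying monotonicity.
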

\begin{proof}
Let $I=\calF(s)$, and let $\ell,u\in\calL_s$ be the bottom and top
points of the slice, respectively. That is, for every $i\in[k]$, $\ell_i=1$ if $i\in I$ and $\ell_i=s_i$ if $i\notin I$, 
and $u_i=n$ if $i\in I$ and $u_i=s_i$ if $i\notin I$.

For every monotone function $h\Rightarrow Q$, the
boundary conditions imply $h(\ell)_i\geq\ell_i$ and $h(u)_i\leq u_i$ for
every $i\in I$. So $\ell$ is
postfixed and $u$ is prefixed for $Q$ on $s$, proving that both sets
are nonempty.

Next, let $a,b\in\calL_s$ be postfixed for $Q$ on $s$, and set
$c=a\vee b$, where $c_i=\max\{a_i,b_i\}$ for every $i\in[k]$.
Since $a$ and $b$ agree in all fixed coordinates, $c\in\calL_s$.
Fix any monotone function $h$ consistent with $Q$ and any
$i\in I$. By interchanging $a$ and $b$ if necessary, we assume
without loss of generality that $c_i=a_i$. As $a\preceq c$, monotonicity of
$h$ gives
$$
  h(c)_i\geq h(a)_i\geq a_i=c_i,
$$
where the second inequality follows because $a$ is postfixed for
$Q$ on $s$. 
This shows that $c$ is postfixed for $Q$ on $s$ and thus, the set of postfixed points for $Q$ on $s$ is closed under coordinatewise joins.
It can be proved similarly that the set of prefixed points is closed under coordinate meets. 
\end{proof}

Given \Cref{lem:hehe8}, we know that there is a \emph{greatest} postfixed point for $Q$ on slice $s$, denoted by $J_s(Q)$, and there
is a \emph{least} prefixed point, denoted by $M_s(Q)$.
For the all-$*$ slice, we denote  these two points as $J(Q)$ and $M(Q)$ for convenience. 

The next lemma shows that the unique fixed point $x^*$ on slice $s$ of any safe function consistent with $Q$ must lie between $J_s(Q)$ and $M_s(Q)$:

\begin{lemma}\label{lem:query-slice-bounds}
Let $Q$ be a safe query-answer set, and $h$ be any safe function consistent with $Q$.
Then for any slice $s$ of $[n]^k$, the unique fixed point $x^*$ of $h$ on slice $s$ satisfies  $J_s(Q)\preceq x^* \preceq M_s(Q)$.
This in particular implies that $J_s(Q)\preceq M_s(Q)$.
\end{lemma}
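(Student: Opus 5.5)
The plan is to prove $J_s(Q)\preceq x^*$; the upper bound $x^*\preceq M_s(Q)$ follows by the symmetric argument with prefixed points and meets. Write $J=J_s(Q)$ and $I=\calF(s)$. The idea is to restrict $h$ to the slice and use the uniqueness of its fixed point there, together with Tarski's theorem (\Cref{thm:tarski-fixed-point}) applied to a suitable sub-lattice.

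\textbf{Step 1: the restricted map.} Define the restriction $g:\calL_s\to\calL_s$ by $g(x)_i=h(x)_i$ for $i\in I$ and $g(x)_i=s_i$ for $i\notin I$. Then $g$ is monotone on the lattice $\calL_s$, which is isomorphic to $[n]^{|I|}$. Its fixed points are exactly the fixed points of $h$ on slice $s$, so by safety $x^*$ is the unique fixed point of $g$.

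\textbf{Step 2: $J$ is postfixed for $h$ itself.} Since $h$ is safe, it is in particular monotone, and it is consistent with $Q$. By the definition of postfixed for $Q$ on $s$, we get $h(J)_i\ge J_i$ for all $i\in I$, i.e.\ $g(J)\succeq J$.

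\textbf{Step 3: a fixed point above $J$.} Consider the sub-lattice $\{x\in\calL_s: x\succeq J\}$, which is a complete lattice. If $x\succeq J$, then $g(x)\succeq g(J)\succeq J$, so $g$ maps this sub-lattice into itself. By \Cref{thm:tarski-fixed-point}, $g$ has a fixed point $z\succeq J$. Alternatively, iterate $J\preceq g(J)\preceq g^2(J)\preceq\cdots$, which stabilizes in a finite lattice at such a $z$.

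\textbf{Step 4: conclude.} By uniqueness, $z=x^*$, hence $J\preceq x^*$. Dually, $g(M_s(Q))\preceq M_s(Q)$, and descending iteration yields a fixed point below $M_s(Q)$, which again must be $x^*$. Finally, since $Q$ is safe, some safe $h\Rightarrow Q$ exists, and chaining $J_s(Q)\preceq x^*\preceq M_s(Q)$ for that $h$ gives $J_s(Q)\preceq M_s(Q)$.

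There is no real obstacle here. The only subtle point is that the definition of postfixed quantifies over \emph{monotone} $h$ consistent with $Q$, and this covers our safe $h$; uniqueness on the slice is what forces the two Tarski fixed points to coincide with $x^*$.
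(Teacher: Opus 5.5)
Your proposal is correct and matches the paper's proof: the paper also restricts $h$ to the slice as a monotone self-map of $\calL_s$, uses that $J_s(Q)$ is postfixed for this particular $h$, and iterates upward from $J_s(Q)$ to a fixed point, which must equal $x^*$ by uniqueness (and symmetrically for $M_s(Q)$). Your alternative of applying Tarski's theorem to the upper set $\{x\in\calL_s: x\succeq J_s(Q)\}$ is an equivalent packaging of the same step.
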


\begin{proof}
Fix a slice $s$, and let $I=\calF(s)$. Define a map
$f_s:\calL_s\rightarrow\calL_s$ by setting
$$
  f_s(x)_i=\begin{cases}
    h(x)_i&i\in I\\
    s_i&i\notin I
  \end{cases}\ , \quad\text{for every
$x\in\calL_s$ and $i\in[k]$.}
$$
It is easy to verify that $f_s$ maps $\calL_s$ to $\calL_s$ and is monotone, i.e., $f_s(x)\preceq f_s(y)$ for any $x\preceq y\in \calL_s$.
Moreover, a point $x\in\calL_s$ is a fixed point of $f_s$ if and
only if $h(x)_i=x_i$ for every $i\in I$. Since $h$ is safe, $x^*$ is the unique fixed point of $f_s$.

To prove the lower bound $J_s(Q)\preceq x^*$, let $x^0=J_s(Q)$ and define
$x^{t+1}=f_s(x^t)$ for $t=0,1,\ldots$. Since $x^0$ is
postfixed for $Q$ on $s$ and $h$ is consistent with $Q$, we have
$h(x^0)_i\geq x^0_i$ for all $i\in I$ and thus, $x^0\preceq f_s(x^0)=x^1$. By an induction, it is easy to show that  
  the sequence of points $x^0,x^1,\ldots$ is nondecreasing. Since all its points
belong to the finite set $\calL_s$, there is an integer $T\geq0$
such that $x^T=x^{T+1}=f_s(x^T)$. Hence $x^T$ is a fixed point of
$f_s$, so $x^T=x^*$ by uniqueness. We conclude that
$J_s(Q)=x^0\preceq x^T=x^*$.
The upper bound $x^*\preceq M_s(Q)$ can be proved similarly.
\end{proof}

Before moving on, we record the following simple fact that, as an algorithm makes more queries and extends the query-answer set $Q$, $J_s(Q)$ can only increase and $M_s(Q)$ can only decrease:

\begin{lemma}\label{lem:haha3}
Let $Q\subseteq Q'$ be two safe query-answer sets. Then $J_s(Q)\preceq J_s(Q')$ and $M_s(Q')\preceq M_s(Q)$ for every slice $s$ of $[n]^k$.    
\end{lemma}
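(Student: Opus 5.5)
The plan is to argue directly from the definitions: extending $Q$ to $Q'$ can only shrink the set of monotone functions we quantify over, so the postfixed set can only grow.

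\textbf{Step 1 (consistency is inherited).} Since $Q\subseteq Q'$, every function $h$ with $h\Rightarrow Q'$ satisfies $h(x)=r$ for all $(x,r)\in Q'$, hence for all $(x,r)\in Q$. So $h\Rightarrow Q$, and the set of monotone functions consistent with $Q'$ is contained in the set of monotone functions consistent with $Q$.

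\textbf{Step 2 (postfixed and prefixed sets grow).} Let $y\in\calL_s$ be postfixed for $Q$ on $s$. By definition, $h(y)_i\ge y_i$ for every $i\in\calF(s)$ and every monotone $h\Rightarrow Q$. By Step~1, this holds in particular for every monotone $h\Rightarrow Q'$, so $y$ is also postfixed for $Q'$ on $s$. The same argument shows that every prefixed point for $Q$ on $s$ is prefixed for $Q'$ on $s$.

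\textbf{Step 3 (compare the extremal points).} Both $Q$ and $Q'$ are safe, so \Cref{lem:hehe8} applies to each of them and the points $J_s(Q)$, $J_s(Q')$, $M_s(Q)$, $M_s(Q')$ are well defined. Recall that $J_s(Q')$ is the greatest postfixed point for $Q'$ on $s$. Because the postfixed set is closed under joins, "greatest" means that it dominates every postfixed point, not merely that it is maximal. By Step~2, $J_s(Q)$ is postfixed for $Q'$ on $s$, and therefore $J_s(Q)\preceq J_s(Q')$. Dually, $M_s(Q)$ is prefixed for $Q'$ on $s$, and $M_s(Q')$ is the least such point, so $M_s(Q')\preceq M_s(Q)$.

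There is no real obstacle here. The only point needing care is that "greatest" and "least" are used in the strong sense of a unique extremal element that is comparable to every element of the set, which \Cref{lem:hehe8} guarantees.
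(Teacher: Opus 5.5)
Your proof is correct and follows essentially the same route as the paper: every monotone function consistent with $Q'$ is consistent with $Q$, so the postfixed (prefixed) set for $Q$ on $s$ is contained in that for $Q'$, and comparing the extremal points gives both inequalities. Your remark that \Cref{lem:hehe8} makes $J_s$ and $M_s$ greatest and least in the strong sense is exactly what the paper uses implicitly.
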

\begin{proof}
Fix a slice $s$.
Given that any monotone function consistent with $Q'$ must also be consistent with $Q$, the set of postfixed points for $Q$ on $s$ is a subset of that for $Q'$ on $s$. It then follows that $J_s(Q)\preceq J_s(Q')$. The proof for prefixed points is similar.
\end{proof}

\subsection{Candidate fixed points}\label{sec:badpair}

Let $Q$ be a safe query-answer set.
Given \Cref{lem:query-slice-bounds}, it is natural to define the following set
$$
  \Fix(Q)=\left\{x\in[n]^k:
    J_s(Q)\preceq x\preceq M_s(Q)\text{ for every slice }s
    \text{ with }x\in \calL_s
  \right\}.
$$
For any safe function $g\Rightarrow Q$, let $x^*$ be its unique fixed point in $[n]^k$. Then $x^*$ must be the unique fixed point on every slice $s$ with $x^*\in \calL_s$. It follows from \Cref{lem:query-slice-bounds} that $x^*\in \Fix(Q)$.
This is why we will refer to $\Fix(Q)$ as our set of
\emph{candidate fixed points} given $Q$, since it contains the unique fixed point of every safe function $g\Rightarrow Q$.

The next two lemmas give more properties about $\Fix(Q)$.
First we show that $\Fix(Q)$ always contains $J(Q)$ and $M(Q)$.

\begin{lemma}\label{lem:candidate-containment}
For any safe query-answer set $Q$, we have $J(Q)\in \Fix(Q)$ and $M(Q)\in \Fix(Q)$.
\end{lemma}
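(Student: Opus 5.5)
The plan is to prove the statement for $J(Q)$; the argument for $M(Q)$ is symmetric, with prefixed points, meets and reversed inequalities. Write $J=J(Q)$ and fix any slice $s$ with $J\in\calL_s$, so that $s_i=J_i$ for every $i\notin\calF(s)$. By the definition of $\Fix(Q)$ we must show $J_s(Q)\preceq J\preceq M_s(Q)$. I will in fact show that $J_s(Q)=J$, after which the upper bound comes for free.

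\textbf{Step 1: $J\preceq J_s(Q)$.} Since $J$ is postfixed for $Q$ on the all-$*$ slice, every monotone $h\Rightarrow Q$ satisfies $h(J)_i\ge J_i$ for all $i\in[k]$. In particular this holds for all $i\in\calF(s)$. Together with $J\in\calL_s$, this shows that $J$ is postfixed for $Q$ on $s$. Because $J_s(Q)$ is the greatest such point (\Cref{lem:hehe8}), we get $J\preceq J_s(Q)$.

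\textbf{Step 2: $J_s(Q)\preceq J$.} Let $y=J_s(Q)\in\calL_s$. I claim $y$ is postfixed globally. Fix any monotone $h\Rightarrow Q$ and any $i\in[k]$.
\begin{itemize}
\item If $i\in\calF(s)$, then $h(y)_i\ge y_i$ because $y$ is postfixed on $s$.
\item If $i\notin\calF(s)$, then $y_i=s_i=J_i$. By Step 1, $J\preceq y$, so monotonicity gives $h(y)_i\ge h(J)_i\ge J_i=y_i$.
\end{itemize}
Hence $y$ is postfixed for $Q$. Since $J$ is the greatest postfixed point, $y\preceq J$.

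\textbf{Step 3: conclusion.} Steps 1 and 2 give $J_s(Q)=J$. Then \Cref{lem:query-slice-bounds}, which gives $J_s(Q)\preceq M_s(Q)$ for the safe set $Q$, yields $J\preceq M_s(Q)$. Since $s$ was an arbitrary slice containing $J$, we conclude $J\in\Fix(Q)$.

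The only delicate point is Step 2, namely handling the non-free coordinates of $s$, on which slice-postfixedness says nothing. This is resolved by first establishing Step 1 and then using monotonicity from $J$.

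For $M(Q)$ the same scheme applies. First, $M(Q)$ is prefixed on $s$, so $M_s(Q)\preceq M(Q)$. Next, $M_s(Q)$ is globally prefixed, using $h(M_s)_i\le h(M)_i\le M_i$ on the fixed coordinates. This gives $M_s(Q)=M(Q)$, and \Cref{lem:query-slice-bounds} then supplies $J_s(Q)\preceq M(Q)$.
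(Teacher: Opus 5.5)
Your proposal is correct and follows essentially the same route as the paper: show $J_s(Q)=J(Q)$ for every slice $s$ through $J(Q)$. You get one inequality because global postfixedness restricts to $s$. For the other, you use monotonicity from $J(Q)\preceq J_s(Q)$ on the fixed coordinates to make $J_s(Q)$ globally postfixed. Then \Cref{lem:query-slice-bounds} gives the upper bound $J_s(Q)\preceq M_s(Q)$, with the symmetric argument for $M(Q)$.
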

\begin{proof}
Let $a=J(Q)$, and fix any slice $s$ with $a\in \calL_s$. We will show that $a=b:=J_s(Q)$.

On the one hand, since $a$ is
postfixed for $Q$ on the whole grid, it is also postfixed on $s$.
Thus~we have $a\preceq b$. On the other hand, to see $b\preceq a$, fix any monotone function $h\Rightarrow Q$. For each 
$i\in\calF(s)$, the definition of $b$ gives $h(b)_i\geq b_i$.
For each $i\notin\calF(s)$, we have $b_i=a_i$, and
monotonicity gives
$$
  h(b)_i\geq h(a)_i\geq a_i=b_i.
$$
So $b$ is postfixed for $Q$ on the whole grid. By the definition
of $J(Q)$, we have $b\preceq a$, so $J_s(Q)=a$.

By \Cref{lem:query-slice-bounds}, we now have
$J_s(Q)=a\preceq M_s(Q)$. Since this holds for every slice $s$ with $a\in \calL_s$, it follows from definition that $a\in\Fix(Q)$. The argument for $M(Q)$ is
symmetric.
\end{proof}

The next lemma shows that any point $x$, at which $Q$ forces a strict inequality of $h(x)_i>x_i$ or $h(x)_i<x_i$ for any $i\in[k]$, must be excluded from $\Fix(Q)$. The proof uses a one-dimensional slice and the safety condition to obtain a bound that excludes the point:

\begin{lemma}\label{lem:haha1}
Let $Q$ be a safe query-answer set. If $x\in [n]^k$ and $i\in [k]$ are such that $h(x)_i>x_i$
for every monotone $h\Rightarrow Q$, or such that $h(x)_i<x_i$ for every monotone $h\Rightarrow Q$, 
then $x\notin \Fix(Q)$.
\end{lemma}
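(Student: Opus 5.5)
Take the case $h(x)_i>x_i$ for every monotone $h\Rightarrow Q$; the case $h(x)_i<x_i$ is symmetric, with $M_s$ in place of $J_s$. The plan is to use the one-dimensional slice $s$ through $x$ in direction $i$: $s_i=*$ and $s_j=x_j$ for all $j\neq i$. Then $x\in\calL_s$, and we will show $J_s(Q)_i>x_i$. This gives $J_s(Q)\not\preceq x$, which violates the defining condition of $\Fix(Q)$ for this slice, so $x\notin\Fix(Q)$.

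\textbf{Step 1: $x$ is postfixed on $s$.} The only free coordinate of $s$ is $i$, and $h(x)_i>x_i\ge x_i$ for every monotone $h\Rightarrow Q$. So $x$ is postfixed for $Q$ on $s$, and hence $x\preceq J_s(Q)$.

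\textbf{Step 2: $x+e_i$ is also postfixed on $s$.} Here $e_i$ is the $i$-th unit vector. Since $h(x)_i>x_i$ and $h(x)_i\le n$, we have $x_i<n$, so $y:=x+e_i\in\calL_s$. Fix any monotone $h\Rightarrow Q$. Monotonicity and integrality give $h(y)_i\ge h(x)_i\ge x_i+1=y_i$. So $y$ is postfixed on $s$, and therefore $J_s(Q)_i\ge x_i+1>x_i$.

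\textbf{Step 3: conclude.} Step 2 shows $J_s(Q)\not\preceq x$, which contradicts membership of $x$ in $\Fix(Q)$. Step 2 avoids the iteration argument of \Cref{lem:query-slice-bounds} altogether.

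The safety of $Q$ is needed only so that $J_s(Q)$ and $M_s(Q)$ are well defined via \Cref{lem:hehe8}. The only point needing care is the strictness in Step 2: we must check that $J_s(Q)$ moves strictly past $x_i$, and integrality of $h(x)_i$ is exactly what gives this.
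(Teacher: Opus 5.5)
Your proof is correct and is the paper's argument: take the one-dimensional slice through $x$ in direction $i$, use monotonicity and integrality to show $x+\mathbf{e}_i$ is postfixed on it, and conclude $J_s(Q)\not\preceq x$. Your Step 1 is never used and can be dropped, and safety is also what supplies a consistent monotone $h$, which you need to get $x_i<n$.
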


\begin{proof}
Let $x$ and $i\in [k]$ be such that $h(x)_i>x_i$ for every monotone function $h\Rightarrow Q$.
Since these functions take values in $[n]^k\times\{\pm1\}$, we have
$x_i<n$. Let $s$ be the one-dimensional slice through $x$ with free
coordinate $i$, and set $v=x+\mathbf e_i$. By monotonicity, $h$ must satisfy 
$$
  h(v)_i\geq h(x)_i\geq x_i+1=v_i.
$$
So $v$ is postfixed for $Q$ on slice $s$ and thus, $v\preceq J_s(Q)$.
Since $v_i>x_i$, the condition $J_s(Q)\preceq x$ fails, and hence
$x\notin\Fix(Q)$.
The case when $h(x)_i<x_i$ for every monotone $h\Rightarrow Q$ is symmetric.
\end{proof}

Next we incorporate  the label information of $Q$ in coordinate $k+1$ to define subsets $\Fix^-(Q)$ and $\Fix^+(Q)$. We define them as follows: %
\begin{align*}
  \Fix^-(Q)&=\big\{x\in\Fix(Q):
    \nexists\hspace{0.04cm}(y,r)\in Q\text{ with }y\preceq x,\ r_{k+1}=+1\big\}\quad\text{and}\\[0.4ex]
  \Fix^+(Q)&=\big\{x\in\Fix(Q):
    \nexists\hspace{0.04cm}(y,r)\in Q\text{ with }x\preceq y,\ r_{k+1}=-1\big\}.
\end{align*}
So in $\Fix^-(Q)$ (or $\Fix^+(Q)$) we exclude $x\in \Fix(Q)$ that must be labelled $+1$ (or $-1$, respectively) in any safe $g\Rightarrow Q$. 
Note that their union is $\Fix(Q)$ but they do not form a partition of $\Fix(Q)$ in general. 
It is intuitive to see that, as more queries are made, all three sets can only shrink:
\begin{lemma}\label{lem:haha4}
If $Q\subseteq Q'$ are safe query-answer sets, then we have
$$
  \Fix(Q')\subseteq\Fix(Q),\quad
  \Fix^-(Q')\subseteq\Fix^-(Q)\quad\text{and}\quad
  \Fix^+(Q')\subseteq\Fix^+(Q).
$$
\end{lemma}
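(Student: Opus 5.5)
The plan is to derive all three inclusions from \Cref{lem:haha3}, together with the fact that the label conditions defining $\Fix^-$ and $\Fix^+$ only get stronger as $Q$ grows.

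For the first inclusion, take any $x\in\Fix(Q')$ and any slice $s$ with $x\in\calL_s$. By the definition of $\Fix(Q')$ we have $J_s(Q')\preceq x\preceq M_s(Q')$. \Cref{lem:haha3} gives $J_s(Q)\preceq J_s(Q')$ and $M_s(Q')\preceq M_s(Q)$. Transitivity of $\preceq$ then yields $J_s(Q)\preceq x\preceq M_s(Q)$. Since $s$ was an arbitrary slice containing $x$, this shows $x\in\Fix(Q)$.

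For the second inclusion, take $x\in\Fix^-(Q')$. The first inclusion already gives $x\in\Fix(Q)$. Now suppose, for contradiction, that some $(y,r)\in Q$ satisfies $y\preceq x$ and $r_{k+1}=+1$. Because $Q\subseteq Q'$, the pair $(y,r)$ also lies in $Q'$, and this contradicts $x\in\Fix^-(Q')$. Hence no such pair exists in $Q$, and $x\in\Fix^-(Q)$. The third inclusion, for $\Fix^+$, follows by the symmetric argument with the order reversed and label $-1$.

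There is no genuine obstacle here. The only substantive input is \Cref{lem:haha3}, which is already proved, and the rest is unwinding the definitions. One point to check is that $\Fix(Q)$ and $\Fix^{\pm}(Q)$ are well defined; this holds because both $Q$ and $Q'$ are assumed safe, which is exactly what \Cref{lem:haha3} requires.
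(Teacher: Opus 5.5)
Your proposal is correct and matches the paper's proof: the first inclusion comes from \Cref{lem:haha3} and transitivity, and the other two follow because any labelled pair in $Q$ that excludes a point also lies in $Q'$. You also make explicit the step $x\in\Fix(Q)$, via the first inclusion, which the paper leaves implicit.
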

\begin{proof}
Let $x\in\Fix(Q')$. For every slice $s$ containing $x$,
\Cref{lem:haha3} gives
$$
  J_s(Q)\preceq J_s(Q')\preceq x
  \preceq M_s(Q')\preceq M_s(Q).
$$
Thus $x\in\Fix(Q)$, and we have proved that $\Fix(Q')\subseteq \Fix(Q)$. 

Moreover, since
$Q\subseteq Q'$, any queried label in $Q$ that excludes a point from
$\Fix^-(Q)$ or $\Fix^+(Q)$ also excludes it from the corresponding
set for $Q'$. This proves the other two inclusions.
\end{proof}

So as an algorithm makes more queries, all three sets can only shrink. While $\Fix(Q)$ can never be empty, 
the last lemma of this subsection shows that, if one of $\Fix^-(Q)$  and $\Fix^+(Q)$ becomes empty, a solution to the unknown function has already been revealed by $Q$:

\begin{lemma}\label{lem:haha5}
If $\Fix^-(Q)=\varnothing$, then $J(Q)$ must be a positive solution to every safe function $g\Rightarrow Q$. 

If $\Fix^+(Q)=\varnothing$, then $M(Q)$ must be a negative solution to every safe function    $g\Rightarrow Q$.
\end{lemma}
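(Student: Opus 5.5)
We prove the first statement; the second is symmetric, with $M(Q)$ in place of $J(Q)$, joins replaced by meets, and labels flipped.

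Let $g\Rightarrow Q$ be safe and write $a=J(Q)$. Since $g$ is monotone and consistent with $Q$, and $a$ is postfixed for $Q$, we get $g(a)_i\ge a_i$ for every $i\in[k]$. So it suffices to show that the label satisfies $g(a)_{k+1}=+1$, i.e., $g(a)_{k+1}\ge 0$. By \Cref{lem:candidate-containment}, $a\in\Fix(Q)$. Since $\Fix^-(Q)=\varnothing$, the point $a$ is not in $\Fix^-(Q)$. By the definition of $\Fix^-(Q)$, there must therefore be some $(y,r)\in Q$ with $y\preceq a$ and $r_{k+1}=+1$. Monotonicity of $g$ on coordinate $k+1$ then gives
$$
g(a)_{k+1}\ge g(y)_{k+1}=r_{k+1}=+1.
$$
Together with $g(a)_i\ge a_i$ for $i\in[k]$, this yields $(a,0)\preceq g(a)$, so $a$ is a positive solution to $g$.

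For the second statement, let $b=M(Q)$. Then $b\in\Fix(Q)$ by \Cref{lem:candidate-containment} and $g(b)_i\le b_i$ for all $i\in[k]$, since $b$ is prefixed. Because $\Fix^+(Q)=\varnothing$, there is some $(y,r)\in Q$ with $b\preceq y$ and $r_{k+1}=-1$. Monotonicity gives $g(b)_{k+1}\le g(y)_{k+1}=-1$, so $g(b)\preceq(b,0)$ and $b$ is a negative solution to $g$.

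There is no real obstacle here. The key point is that \Cref{lem:candidate-containment} places $J(Q)$ and $M(Q)$ inside $\Fix(Q)$, so emptiness of $\Fix^-(Q)$ (respectively $\Fix^+(Q)$) forces a queried label that, by monotonicity of the label coordinate, determines the label of $J(Q)$ (respectively $M(Q)$).
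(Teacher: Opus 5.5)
Your proof is correct and follows the paper's argument exactly: use \Cref{lem:candidate-containment} to place $J(Q)$ in $\Fix(Q)$, use emptiness of $\Fix^-(Q)$ to get a queried $+1$ label below it, and combine label monotonicity with postfixedness of $J(Q)$. The symmetric case with $M(Q)$ is handled the same way. (Implicitly, the existence of a safe $g\Rightarrow Q$ makes $Q$ safe, which is what licenses invoking \Cref{lem:candidate-containment}.)
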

\begin{proof}
Suppose that $\Fix^-(Q)=\varnothing$, and let $a=J(Q)$.
By \Cref{lem:candidate-containment}, $a\in\Fix(Q)$. Given that~$a\notin\Fix^-(Q)$ (as $\Fix^-(Q)=\varnothing$), there exists $(y,r)\in Q$ with
$y\preceq a$ and $r_{k+1}=+1$.

Fix any safe function $g\Rightarrow Q$. Monotonicity gives
$g(a)_{k+1}=+1$. Since $a$ is postfixed for $Q$ on the whole grid,
we have that $a$ is a positive solution to $g$.
The other case is symmetric.
\end{proof}

\subsection{Dimension reduction via %
coordinate separation}\label{sec:separating-slice}

Given \Cref{lem:haha4} and \Cref{lem:haha5}, the high-level strategy of our algorithm for $\SafeTarskiStar$ is to make  queries to shrink $\Fix^-(Q)$ and $\Fix^+(Q)$, aiming to eliminate one of them with as few queries as possible. A key idea behind our new algorithm in this paper is to not just care about individual sizes of the two sets, but to aim at making the two sets \emph{coordinate-separable}: We say a pair of sets $(A,B)$, where $A,B\subseteq [n]^k$, is \emph{coordinate-separable}\footnote{Note that the definition uses specifically coordinate $k$. This is without loss of generality; the same proof works if we replace $k$ by any coordinate $i\in [k]$.}
 if %
$$\max_{x\in A} x_k\le \min_{y\in B} y_k.$$
Equivalently this means that there exists a $t\in [n]$ such that $x_k\le t\le y_k$ for all $x\in A$ and $y\in B$.

The following lemma shows that if $(\Fix^-(Q),\Fix^+(Q))$ is coordinate-separable, then indeed we are very close to eliminating one of them completely, which by \Cref{lem:haha5} will solve $\SafeTarskiStar$:

\begin{lemma}\label{lem:separating-threshold}
Let $Q$ be a safe query-answer set such that  $(\Fix^-(Q),\Fix^+(Q))$ is coordinate-separable.
There is an algorithm that, given $Q$ and oracle access to any safe function $g\Rightarrow Q$, makes at most $\SafeTarskiStar(n,k-1)+1$ additional queries to find a solution to $g$.
\end{lemma}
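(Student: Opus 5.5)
The plan is to choose a separating threshold $t\in[n]$ with $x_k\le t\le y_k$ for all $x\in\Fix^-(Q)$ and $y\in\Fix^+(Q)$. We then run a $(k-1)$-dimensional $\SafeTarskiStar$ instance on the slice $s=(*,\ldots,*,t)$. Its label encodes how coordinate $k$ of $g$ compares with $t$, with ties broken by $g$'s own label. Concretely, for $x'\in[n]^{k-1}$ let $x=(x',t)$ and define $f:[n]^{k-1}\to[n]^{k-1}\times\{\pm1\}$ by
$$f(x')_i=g(x)_i\ \ (i\le k-1),\qquad f(x')_k=\begin{cases}+1 & \text{if } g(x)_k>t,\text{ or } g(x)_k=t \text{ and } g(x)_{k+1}=+1,\\ -1&\text{otherwise.}\end{cases}$$

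The first step is to check that this is a legitimate instance. Since $g$ is monotone in coordinates $k$ and $k+1$, the pair $(g(x)_k,g(x)_{k+1})$ is monotone in $x$ under the product order, so the label $f(x')_k$ is monotone in $x'$; the other coordinates are clearly monotone. For safety, every slice of $[n]^{k-1}$ corresponds to a slice of $[n]^k$ with coordinate $k$ fixed to $t$, and fixed points on these slices coincide. Since $g$ is safe, $f$ is therefore safe. Each query to $f$ costs one query to $g$, so $\SafeTarskiStar(n,k-1)$ queries give a solution $y'$ of $f$. Let $y=(y',t)$, and add all queries made to $Q$ to get $Q'\supseteq Q$.

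Next we split into cases; by symmetry it suffices to treat the case where $y'$ is a positive solution of $f$, so that $g(y)_i\ge y_i$ for $i<k$ and the label of $y'$ is $+1$. If $g(y)_k=t$ and $g(y)_{k+1}=+1$, then $y$ is already a positive solution of $g$, and we are done. Otherwise $g(y)_k\ge t+1$. In that case let $z=y+\mathbf e_k$. Every monotone $h\Rightarrow Q'$ satisfies $h(z)\succeq h(y)=g(y)$ in the first $k$ coordinates. This gives $h(z)_k\ge t+1=z_k$ and $h(z)_i\ge y_i=z_i$ for $i<k$, so $z$ is postfixed for $Q'$ and hence $J(Q')\succeq z$. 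Because $\Fix(Q')$ only contains points above $J(Q')$ (take the all-$*$ slice in its definition), every $x\in\Fix(Q')$ has $x_k\ge t+1$. But $\Fix^-(Q')\subseteq\Fix^-(Q)$ by \Cref{lem:haha4}, and points of $\Fix^-(Q)$ have $x_k\le t$, so $\Fix^-(Q')=\varnothing$. By \Cref{lem:haha5}, $J(Q')$ is then a positive solution. It can be computed from $Q'$ without any queries; we spend the one extra query on $J(Q')$ only to be safe, which accounts for the $+1$. The negative case is symmetric: there $\Fix^+(Q')$ becomes empty and we output $M(Q')$.

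The main obstacle is designing the label of the $(k-1)$-dimensional instance so that it is simultaneously monotone, safe, and informative enough to trigger separability. The lexicographic tie-breaking by $g$'s label handles the boundary case $g(y)_k=t$. The strict case must instead be converted into emptiness of $\Fix^-(Q')$ via the shifted point $z=y+\mathbf e_k$ and the separation property at $t$.
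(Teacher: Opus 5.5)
Your plan is the paper's own argument: the slice $s=(*,\ldots,*,t)$ with the lexicographic label, the monotonicity and safety check, the shifted point $z=y+\mathbf e_k$ forcing $\Fix^-(Q')=\varnothing$, and the appeal to \Cref{lem:haha5}. There is one step that fails as written, and it is exactly where the $+1$ query belongs.

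\textbf{The gap.} Your case analysis requires $(y,g(y))\in Q'$, for two reasons:
\begin{enumerate}
\item You use the value $g(y)$ to decide which subcase holds, and hence whether to output $y$, $J(Q')$ or $M(Q')$.
\item You use $h(y)=g(y)$ for every monotone $h\Rightarrow Q'$ to conclude that $z$ is postfixed, and hence that $J(Q')\succeq z$.
\end{enumerate}
But an algorithm for $\SafeTarskiStar(n,k-1)$ is only guaranteed to output a solution, not to have queried it. It may output a point that its answers already certify, in the spirit of \Cref{lem:haha5}. In that case $g(y)$ is not recorded in $Q'$, so you cannot tell which case you are in. Knowing only that $y'$ is a solution of the true induced function also does not give $h(z)\succeq z$ in the first $k$ coordinates for all monotone $h\Rightarrow Q'$, which is what postfixedness requires.

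\textbf{The fix.} Spend the extra query on $y$ (if it was not already queried), as the paper does. The query on $J(Q')$ is unnecessary: \Cref{lem:haha5} already certifies $J(Q')$ (respectively $M(Q')$) as a solution for every safe $g\Rightarrow Q'$ without querying it. With this reallocation, your proof coincides with the paper's.
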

\begin{proof}
If either $\Fix^-(Q)$ or $\Fix^+(Q)$ is empty, the claim is trivial by \Cref{lem:haha5}. Below we assume that they are nonempty and, for some $t\in [n]$, satisfy
$$
  \max_{x\in\Fix^{-}(Q)}x_k\leq t\leq\min_{y\in\Fix^{+}(Q)}y_k.
$$
Given a safe function $f\Rightarrow Q$, we focus on its $(k-1)$-dimensional slice $s=(*,\ldots,*,t)$ and induce the following function 
$f':\calL_s\rightarrow \calL_s\times \{\pm 1\}$: For $i\in [k-1]$, we have $f'(z)_i=f(z)_i$; $f'(z)_k=t$; and its label in coordinate $k+1$ is defined by
\begin{equation}\label{eq:lex-label}
 f'(z)_{k+1}=+1
  \quad\Longleftrightarrow\quad
  f(z)_{k}>t
  \quad\text{or}\quad
  \bigl[\hspace{0.05cm}f(z)_k=t\ \text{and}\ f(z)_{k+1}=+1\hspace{0.05cm}\bigr].
\end{equation}
From now on, we view $f'$ as a function from $[n]^{k-1}$ to $[n]^{k-1}\times \{\pm 1\}$, by skipping coordinate $k$.
It can be easily verified that $f'$ must be safe given that $f$ is safe. Hence $f'$ is a valid input function~for $\SafeTarskiStar(n,k-1)$. Furthermore, given the definition of $f'$ from $f$, oracle access to $f'$ can be simulated by oracle access to $f$, query by query.

Given $Q$ and oracle access to any safe function $f\Rightarrow Q$, the algorithm proceeds as follows: It runs an algorithm $\ALG$ for $\SafeTarskiStar(n,k-1)$ on the unknown function $f'$ induced by $f$. Each time when $\ALG$ makes a query $z\in [n]^{k-1}$ about $f'$, our algorithm makes a query $y$ about~$f$, where $y$ is obtained from $z$ by inserting $t$ as its coordinate $k$.
After receiving $r=f(y)$ from the oracle, our algorithm updates $Q$ by adding a new pair $(y,r)$, replies to $\ALG$'s query (based on the construction of $f'$ from $f$), and continues.
This process stops within $\SafeTarskiStar(n,k-1)$ rounds; at the end, $\ALG$ returns a point $z\in [n]^{k-1}$ as a solution to $f'$. Let $y$ be its corresponding point in $[n]^k$. Our algorithm makes one additional query on $y$ (if it has never been queried before). This makes sure that at the end, letting $Q'\supseteq Q$ be the updated query-answer set, $Q'$ must contain a pair $(y,r)$ such that $y\in \calL_s$ and one of the following two conditions (with four subcases) holds:
\begin{enumerate}
\item $r_i\ge y_i$ for every $i\in [k-1]$ and either (1.1) $r_k>t$ or (1.2) $r_k=t$ and $r_{k+1}=+1$; or
\item $r_i\le y_i$ for every $i\in [k-1]$ and either (2.1) $r_k<t$ or (2.2) $r_k=t$ and $r_{k+1}=-1$.
\end{enumerate}
We show below that, in each of these four subcases, a solution has already been revealed in $Q'$.

In Case 1.2, it is clear that $y$ must be a positive solution in any safe function $f\Rightarrow Q'$;
in Case 2.2, $y$ must be a negative solution in any safe function $f\Rightarrow Q'$.

Consider Case 1.1. 
Let $y'\in [n]^k$ be obtained from $y$ by increasing
coordinate $k$ by one. Since $h(y)=r$ for every monotone
function $h\Rightarrow Q'$, monotonicity implies that $y'$ is
postfixed for $Q'$ over the whole grid.
Every $x\in\Fix(Q')$ satisfies $y'\preceq J(Q')\preceq x$, so $x_k\geq t+1$.
As every $x\in\Fix^{-}(Q)$ satisfies $x_k\leq t$ and
$\Fix^{-}(Q')\subseteq\Fix^{-}(Q)$, we have
$\Fix^{-}(Q')=\varnothing$ and by \Cref{lem:haha5}, $J(Q')$ must be a positive solution to every safe function consistent with $Q'$.
The proof of Case 2.1 is symmetric. 
\end{proof}

Inspired by 
\Cref{lem:separating-threshold}, our goal in the rest of this section is to design a query-efficient strategy to force 
  $(\Fix^-(Q),\Fix^+(Q))$ to become coordinate-separable.
A natural way to measure the amount of work to be done is to count the number of \emph{bad point-pairs}. Here we say $(x,y)$ is a bad point-pair if $x\in \Fix^-(Q),y\in \Fix^+(Q)$ and $x_k>y_k$. We use $\calR(Q)$ to denote the set of bad point-pairs. 
Note that $(\Fix^-(Q),\Fix^+(Q))$ is coordinate-separable if and only if $\mathcal{R}(Q)=\varnothing$, so bad point-pairs are obstructions to the coordinate separation that we aim for. 
Given that by \Cref{lem:haha4}, both $\Fix^-(Q)$ and $\Fix^+(Q)$ only shrink after each query, so does $\mathcal{R}(Q)$.
Naturally our next goal is to come up with a strategy to shrink $\calR(Q)$ query-efficiently.

\subsection{Existence of a heavy antipodal set-pair}\label{sec:coordinate-contraction}

We start by proving \Cref{lem:heavy-families}, which shows the existence of a \emph{heavy antipodal} set-pair $(S^-,S^+)$, where $S^-\subseteq \Fix^-(Q)$ and $S^+\subseteq \Fix^+(Q)$.
Being \emph{heavy} means that eliminating either one of them would reduce the size of $\calR(Q)$ by at least a $1/2^k$ fraction.
The structure of the set-pair~being~antipodal, to be defined below, will be exploited in the next subsection, where we give an algorithm that can eliminate one of the two sets with few queries.

\begin{lemma}\label{lem:heavy-families}
Let $Q$ be a safe query-answer set with $\mathcal{R}(Q)\ne\varnothing$.
There exists a set-pair $(S^-,S^+)$ with  $S^{-}\subseteq\Fix^{-}(Q)$ and
$S^{+}\subseteq\Fix^{+}(Q)$ such that the following two conditions hold:
\begin{flushleft}\begin{enumerate}[label=(\roman*)]
  \item The set-pair $(S^-,S^+)$ is \emph{heavy}: We have 
$$
    \begin{aligned}
      \bigl|\{(x,y)\in\mathcal{R}(Q):x\in S^{-}\}\bigr|
      ,\ \bigl|\{(x,y)\in\mathcal{R}(Q):y\in S^{+}\}\bigr|
      &\geq \frac{|\mathcal{R}(Q)|}{2^k}.
    \end{aligned}
  $$  
\item The set-pair $(S^-,S^+)$ is \emph{antipodal}: There is a partition
$[k]=J\mathbin{\dot\cup}H$ with $k\in H$, and a vector $t\in[n]^k$
satisfying that, for every $x\in S^{-}$ and $y\in S^{+}$, we have
\begin{equation}\label{eq:antipodal-pattern}
    x_J\preceq t_J\preceq y_J
    \quad\text{and}\quad
    x_H\succeq t_H\succeq y_H.
  \end{equation}
  Moreover, for coordinate $k$, we have $x_k\geq t_k+1$ for every $x\in S^{-}$.  
\end{enumerate}\end{flushleft}
\end{lemma}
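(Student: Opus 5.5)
The plan is to weight every point by its number of bad pairs and then pick the threshold $t$ one coordinate at a time, each time keeping at least half of the weight on each side. For $x\in\Fix^-(Q)$ let $d^-(x)=|\{y:(x,y)\in\calR(Q)\}|$, and for $y\in\Fix^+(Q)$ let $d^+(y)=|\{x:(x,y)\in\calR(Q)\}|$. For a set $S$ write $w^\pm(S)=\sum_{z\in S}d^\pm(z)$. Then $w^-(\Fix^-(Q))=w^+(\Fix^+(Q))=|\calR(Q)|$. Condition (i) says exactly that $w^-(S^-)\ge|\calR(Q)|/2^k$ and $w^+(S^+)\ge|\calR(Q)|/2^k$. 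Only these weights matter: $S^-$ and $S^+$ need not be joined by bad pairs among themselves, since (i) counts all bad pairs touching $S^-$ or $S^+$. We maintain sets $A\subseteq\Fix^-(Q)$ and $B\subseteq\Fix^+(Q)$, starting from $A=\Fix^-(Q)$ and $B=\Fix^+(Q)$. Each step fixes one coordinate $t_i$ and its side ($J$ or $H$), restricts $A$ and $B$ accordingly, and loses at most a factor $2$ in each of $w^-(A)$ and $w^+(B)$. After all $k$ coordinates we set $S^-=A$ and $S^+=B$, and both weights are at least $|\calR(Q)|/2^k$. Restricting further in later coordinates never breaks the inequalities already imposed, so \eqref{eq:antipodal-pattern} holds at the end.

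\textbf{Coordinate $k$ (done first, forced into $H$).} Let $t_k$ be the least value with $w^+(\{y\in B:y_k\le t_k\})\ge|\calR(Q)|/2$; it exists since the total weight is $|\calR(Q)|$. By minimality, $w^+(\{y:y_k\ge t_k\})>|\calR(Q)|/2$. Every bad pair has $x_k>y_k$, so each pair counted in $w^+(\{y:y_k\ge t_k\})$ has $x_k\ge t_k+1$. Hence $w^-(\{x:x_k\ge t_k+1\})\ge w^+(\{y:y_k\ge t_k\})>|\calR(Q)|/2$. Restricting $A$ to $x_k\ge t_k+1$ and $B$ to $y_k\le t_k$ gives the strict condition on coordinate $k$ and keeps half the weight on each side. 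This is the only step that uses the bad-pair relation itself, and it is what places $k$ in $H$.

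\textbf{Coordinates $i\in[k-1]$ (generic halving step).} With current sets $A,B$ of weights $W^-,W^+$, set $F(t)=w^-(\{x\in A:x_i\le t\})$, which is nondecreasing, and $G(t)=w^+(\{y\in B:y_i\ge t\})$, which is nonincreasing. Let $t$ be the least value with $F(t)\ge W^-/2$. If $G(t)\ge W^+/2$, put $i\in J$ and $t_i=t$, and restrict to $x_i\le t_i\le y_i$. Otherwise $w^+(\{y:y_i\le t-1\})=W^+-G(t)>W^+/2$, so also $w^+(\{y:y_i\le t\})>W^+/2$. Minimality of $t$ gives $F(t-1)<W^-/2$, so $w^-(\{x:x_i\ge t\})>W^-/2$. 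In this case put $i\in H$ and $t_i=t$, and restrict to $x_i\ge t_i\ge y_i$. Either way both weights drop by at most half. Since the weights $d^\pm$ are fixed from the full $\calR(Q)$ and never recomputed, the restrictions compose cleanly.

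The main obstacle is the temptation to insist that $S^-\times S^+$ itself contain many bad pairs. That version fails, because the intervals $[x_i,y_i]$ of different pairs may be disjoint, so no single $t_i$ can sit inside half of them. Switching to the degree-weight formulation, which is exactly what (i) requires, avoids the problem. The remaining care is the boundary cases of the median choice: the value $t-1$ may fall below $1$, and when $t=1$ the minimality argument must be handled directly. These are routine.
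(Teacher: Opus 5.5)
Your proof is correct and follows essentially the paper's route. Both handle coordinate $k$ first with a median split that uses $x_k>y_k$, and then split each remaining coordinate at a median while keeping at least half of the bad-pair mass on each side separately, never requiring $S^-\times S^+$ to contain bad pairs. Your degree weights $d^\pm$ are a repackaging of the paper's shrinking pair-sets $R^-$ (filtered by first endpoint) and $R^+$ (filtered by second endpoint). Your threshold rule (take the weighted median of $x_i$, then test whether $G(t)\ge W^+/2$) is an equivalent variant of the paper's rule, which compares the two medians and branches on $a\le b$ versus $a>b$.
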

\begin{proof}
We first handle coordinate $k$. Order the pairs
$(x,y)\in\mathcal{R}(Q)$ by $y_k$, and let $q$ be the $y_k$-value
of a middle pair. Define
$$
  \begin{aligned}
    R^{-}=\big\{(x,y)\in\mathcal{R}(Q):y_k\geq q\big\}\quad\text{and}\quad
    R^{+}=\big\{(x,y)\in\mathcal{R}(Q):y_k\leq q\big\}.
  \end{aligned}
$$
Both sets contain at least half of the bad pairs because of the median choice of $q$.
Every $(x,y)\in R^{-}$ satisfies $x_k>y_k\geq q$, while every
$(x,y)\in R^{+}$ satisfies $y_k\leq q$.
Set $t_k=q$, $J=\varnothing$, and $H=\{k\}$.

We now restrict $R^{-}$ and $R^{+}$ one coordinate at a time.
For each $i\in[k-1]$, order the pairs in the current $R^{-}$ by $x_i$
and those in the current $R^{+}$ by $y_i$, and let $a,b$ be the
respective coordinate values of middle pairs.
If $a\leq b$, restrict $R^{-}$ to pairs with $x_i\leq a$ and
$R^{+}$ to pairs with $y_i\geq b$, set $t_i=a$, and put $i$ in $J$.
If $a>b$, restrict $R^{-}$ to pairs with $x_i\geq a$ and
$R^{+}$ to pairs with $y_i\leq b$, set $t_i=b$, and put $i$ in $H$.
In either case, each set keeps at least half of its pairs, and the
required inequalities hold in coordinate $i$. Further restrictions
preserve all inequalities established in earlier coordinates.

After processing all $k-1$ coordinates, we have a partition
$[k]=J\mathbin{\dot\cup}H$ with $k\in H$, and the final pair of sets satisfy
$|R^{-}|\geq2^{-k}|\mathcal{R}(Q)|$ and $|R^{+}|\geq2^{-k}|\mathcal{R}(Q)|$.
Define
$$
  S^{-}=\big\{x:(x,y)\in R^{-}\text{ for some }y\big\}
  \quad\text{and}\quad
  S^{+}=\big\{y:(x,y)\in R^{+}\text{ for some }x\big\}.
$$
By construction, these families satisfy \eqref{eq:antipodal-pattern},
and every $x\in S^{-}$ has $x_k\geq t_k+1$, proving item (ii).
Every pair in $R^{-}$ has its first endpoint in $S^{-}$, and every
pair in $R^{+}$ has its second endpoint in $S^{+}$.
Since both pair sets are subsets of $\mathcal{R}(Q)$, their size
bounds imply the two bounds in item (i).
\end{proof}

\subsection{Eliminating one set from an antipodal set-pair}\label{sec:candidate-elimination}

Now let $Q$ be a query-answer set and let $(S^-,S^+)$, where $S^-\subseteq \Fix^-(Q)$ and $S^+\subseteq \Fix^+(Q)$, be a set-pair that is antipodal with respect to $t\in [n]^k$ (in the sense of item (ii) in \Cref{lem:heavy-families}).
In this subsection we give an algorithm  that eliminates one of the sets, i.e., either $S^-$ from $\Fix^-(Q)$ or $S^+$ from $\Fix^+(Q)$, using few queries.

We first give a simple lemma about $\Fix(Q)$:

\begin{lemma}\label{lem:candidate-comparison}
Let $Q$ be a safe query-answer set and $z\in [n]^k$. For any $I\subseteq [k]$, the following hold:
\begin{enumerate}[label=(\roman*)]
  \item If $h(z)_i\ge z_i$ for every $i\in I$ in any monotone $h\Rightarrow Q$,  then
  $x\notin\Fix(Q)$ for any $x$ such that
  $$
    x_j\geq z_j\quad\text{for every }j\in[k]\setminus I
    \quad\text{and}\quad
    x_i<z_i\quad\text{for some }i\in I.
  $$

  \item If $h(z)_i\le z_i$ for every $i\in I$ in any monotone $h\Rightarrow Q$, %
  then
  $x\notin\Fix(Q)$ for any $x$ such that
  $$
    x_j\leq z_j\quad\text{for every }j\in[k]\setminus I
    \quad\text{and}\quad
    x_i>z_i\quad\text{for some }i\in I.
  $$
\end{enumerate}
\end{lemma}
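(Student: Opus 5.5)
We prove (i); part (ii) follows by the symmetric argument with all inequalities reversed and meets in place of joins. The idea is to build, from $z$ and $x$, a point on a suitable slice through $x$ that is postfixed for $Q$ and lies strictly above $x$ in a free coordinate. This contradicts $J_s(Q)\preceq x$, in the same way as the proof of \Cref{lem:haha1}.

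Fix $x$ with $x_j\ge z_j$ for all $j\in[k]\setminus I$ and $x_i<z_i$ for some $i\in I$. Let $I'=\{i\in I: x_i<z_i\}$, which is nonempty. Take the slice $s$ through $x$ whose free coordinates are exactly $I'$, so that $s_j=x_j$ for $j\notin I'$. Define $v\in\calL_s$ by $v_i=z_i$ for $i\in I'$ and $v_j=x_j$ for $j\notin I'$.

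We claim $z\preceq v$. For $i\in I'$ we have $v_i=z_i$. For $j\in[k]\setminus I$ the hypothesis gives $v_j=x_j\ge z_j$. For $j\in I\setminus I'$ we have $v_j=x_j\ge z_j$ by the definition of $I'$.

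Now fix any monotone $h\Rightarrow Q$. For each free coordinate $i\in I'\subseteq I$, monotonicity and the hypothesis give
$$
h(v)_i\ge h(z)_i\ge z_i=v_i.
$$
Hence $v$ is postfixed for $Q$ on $s$, so $v\preceq J_s(Q)$. Pick any $i\in I'$. Then $J_s(Q)_i\ge v_i=z_i>x_i$, so $J_s(Q)\preceq x$ fails. Since $x\in\calL_s$, the definition of $\Fix(Q)$ gives $x\notin\Fix(Q)$.

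The only delicate point is the choice of slice. If the free set were all of $I$, coordinates $j\in I\setminus I'$ might have $x_j>z_j$, and we would need $h(v)_j\ge v_j=x_j$, which the hypothesis does not provide. Freeing only $I'$ and pinning every other coordinate at $x_j\ge z_j$ avoids this, while keeping $z\preceq v$ so that monotonicity applies.
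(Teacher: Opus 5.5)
Your proof is correct and is essentially the paper's argument. Since $x_j\ge z_j$ for every $j\notin I$, your set $I'$ is exactly $\{i: x_i<z_i\}$ and your point $v$ is exactly $x\vee z$. These are the slice and the postfixed point the paper uses to conclude $x\prec v\preceq J_s(Q)$, which rules out $x\in\Fix(Q)$.
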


\begin{proof}
For (i), let $x$ be a point that satisfies the specified conditions, and set $w=x\vee z$.
We define a slice $s$ by setting
$s_i= *$ if $ x_i<z_i$ and 
$s_i=x_i$ if $x_i\ge z_i$. So $\calF(s)\ne \varnothing$ and $\calF(s)\subseteq I$.
We also have $x,w\in\calL_s$ and $x\prec w$, and  $w_i=z_i$ for every $i\in\calF(s)$.
By assumption, we have $h(z)_i\ge z_i$ for all $i\in I$ and every
monotone function $h$ consistent with $Q$.
As $z\preceq w$ and $w_i=z_i$ for $i\in\calF(s)$, monotonicity of
$h$ gives $h(w)_i\geq w_i$ for every $i\in \calF(s)$.
Thus $w$ is postfixed on $s$, so $x\prec w\preceq J_s(Q)$, which by definition excludes $x$ from $\Fix(Q)$.
The proof of (ii) is symmetric. 
\end{proof}

Now we give the algorithm that eliminates one set from $(S^-,S^+)$:

\begin{lemma}\label{lem:candidate-elimination}
Let $k\ge 2$. Let $Q$ be a safe
query-answer set, and 
$S^{-}\subseteq\Fix^{-}(Q)$ and $S^{+}\subseteq\Fix^{+}(Q)$
  be~two nonempty sets such that $(S^-,S^+)$ is an antipodal set-pair in the sense of (ii) in \Cref{lem:heavy-families}. 
There is an algorithm that, given $Q$, $(S^-,S^+)$ and oracle access to any safe function $g\Rightarrow Q$, makes at most  $\SafeTarskiStar(n,\lfloor k/2\rfloor -1)+1$ additional queries to $g$ to extend $Q$ to  $Q'$ such that 
either 
$$
  S^{-}\cap\Fix^{-}(Q')=\varnothing
  \quad\text{or}\quad
  S^{+}\cap\Fix^{+}(Q')=\varnothing.
$$
\end{lemma}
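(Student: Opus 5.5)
The plan is to reduce to a lower-dimensional $\SafeTarskiStar$ instance on a slice through $t$, following the template of \Cref{lem:separating-threshold}, and then use \Cref{lem:candidate-comparison} together with the definitions of $\Fix^\pm$ to show that the solution returned kills all of $S^-$ or all of $S^+$. The antipodal pattern \eqref{eq:antipodal-pattern} says that, relative to $t$, the points of $S^-$ lie ``below'' in $J$ and ``above'' in $H$, and $S^+$ the reverse. Since $|J|+|H|=k$, the smaller of the two groups, call it $G\in\{J,H\}$, has $|G|\le\lfloor k/2\rfloor$. The idea is to fix all coordinates outside $G$ at $t$, and to further eliminate one coordinate $c\in G$ by folding it into the label lexicographically, exactly as in \eqref{eq:lex-label}. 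Concretely, $c=k$ if $G=H$, and $c$ is an arbitrary element of $J$ otherwise. This leaves a safe instance of dimension $\ell=|G|-1\le\lfloor k/2\rfloor-1$.

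\textbf{Step 1 (induced instance).} Let $s$ be the slice with $s_i=*$ for $i\in G\setminus\{c\}$ and $s_i=t_i$ otherwise. On $\calL_s$ define $f'$ by keeping $g$'s coordinates in $G\setminus\{c\}$. Its label should encode the sign information in the frozen coordinates. The natural choice is $+1$ iff $g(z)_c>t_c$, or $g(z)_c=t_c$ and $g(z)_{k+1}=+1$, with the orientation flipped when $c\in J$, since $J$ and $H$ play mirror roles in \eqref{eq:antipodal-pattern}. Safety of $f'$ follows as in \Cref{lem:separating-threshold}, because fixed points of $f'$ on subslices are fixed points of $g$ on subslices of $s$. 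Run $\ALG$ for $\SafeTarskiStar(n,\ell)$ on $f'$, simulating each query by one query to $g$. Then query the returned point $y\in\calL_s$ once more, for a total of $\SafeTarskiStar(n,\ell)+1$ queries.

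\textbf{Step 2 (elimination).} At the end, $Q'$ contains $(y,r)$ with $y_i=t_i$ off $G\setminus\{c\}$. On $G\setminus\{c\}$, either $r\succeq y$ or $r\preceq y$, and the folded label says $g(y)_c$ is above or below $t_c$, with ties broken by $r_{k+1}$. Consider the case $r_{G\setminus\{c\}}\succeq y_{G\setminus\{c\}}$ with positive folded label, and suppose first $G=H$. Apply \Cref{lem:candidate-comparison}(i) at $z=y$ (or at $y+\mathbf e_k$ when $r_k>t_k$, as in Case 1.1 there). This excludes from $\Fix(Q')$ every point that is $\ge y$ outside $I$ and $<y$ somewhere in $I$.

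\textbf{Step 3 (matching to $S^+$ or $S^-$).} The points of $S^+$ satisfy $x_H\preceq t_H$ and $x_J\succeq t_J$. I expect that these points are either removed from $\Fix(Q')$ by Step 2, or satisfy $x\succeq y$ so that the label $-1$ / $+1$ revealed at $y$ removes them from $\Fix^+(Q')$, or, in the tie case, from $\Fix^-(Q')$ for $S^-$. The strict condition $x_k\ge t_k+1$ for $x\in S^-$ is exactly what handles the tie $r_k=t_k$. The symmetric subcases would be handled by part (ii) and the other label.

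\textbf{Main obstacle.} The hard step is getting the label of $f'$ right. It must be monotone (hence a single lexicographic comparison rather than several frozen coordinates at once), and it must be such that every possible answer of $\ALG$ yields, via \Cref{lem:candidate-comparison} and the label exclusions, the removal of a whole side. In particular, the frozen coordinates of the larger group are not controlled by $r$ at all. I would first check that their placement at exactly $t$, combined with the direction of \eqref{eq:antipodal-pattern}, makes the ``$\ge z$ outside $I$'' hypothesis hold automatically for the targeted side. If this fails, I would try freezing instead the coordinates of the smaller group and folding one coordinate of the larger group, and choose whichever orientation the antipodal inequalities support.
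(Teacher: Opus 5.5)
Your architecture matches the paper's. You freeze the larger of $J,H$ at $t$, fold one coordinate of the smaller group into the label, recurse in dimension $\min\{|J|,|H|\}-1$, and finish with \Cref{lem:candidate-comparison}, \Cref{lem:haha1} and the label exclusions. The gap is in the step you left as an expectation, namely the label, and it fails in the case $G=H$, $c=k$.

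\textbf{The failing case.} Take your lexicographic label: $+1$ iff $g(z)_k>t_k$, or $g(z)_k=t_k$ and $g(z)_{k+1}=+1$. Then $\ALG$ may return $y$ with $r_{H\setminus\{k\}}\succeq y_{H\setminus\{k\}}$, $r_k=t_k$ and $r_{k+1}=+1$, and neither side is eliminated.
\begin{itemize}
\item Here $y$ is postfixed in $H$, so \Cref{lem:candidate-comparison}(i) only forces a surviving $x\in S^+$ to satisfy $x\succeq y$ and $x_k=t_k$.
\item There is no strict inequality to feed into \Cref{lem:haha1}.
\item The label $+1$ that such an $x$ inherits removes it from $\Fix^-$, not from $\Fix^+$.
\item Points of $S^-$ have $x_J\preceq t_J=y_J$, so they are generally not above $y$, and the label says nothing about them.
\item $y$ need not be prefixed on $H\setminus\{k\}$, so part (ii) does not apply to $S^-$ either.
\end{itemize}
A concrete witness: $k=5$, $J=\{1,2,3\}$, $H=\{4,5\}$, $n=10$, $t=(5,5,5,5,5)$, $Q=\varnothing$, $S^-=\{(1,5,5,6,6)\}$, $S^+=\{(5,10,10,4,5)\}$, and $g\equiv(1,10,10,4,5,+1)$ (constant, hence safe). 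Your $f'$ is constantly $(4,+1)$, so $y=(5,5,5,3,5)$ is a legitimate output of $\ALG$. If it is $\ALG$'s only query, a direct check from the definitions shows $(1,5,5,6,6)\in\Fix^-(Q')$ and $(5,10,10,4,5)\in\Fix^+(Q')$.

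\textbf{The missing idea.} In the $H$-case the label must ignore $g_{k+1}$: set $f'(w)_{k+1}=+1$ iff $g(w)_k>t_k$, as the paper does.
\begin{itemize}
\item A tie $r_k=t_k$ then lands on the negative side. There $r_H\preceq y_H$ makes $y$ prefixed in $H$.
\item \Cref{lem:candidate-comparison}(ii) then removes every $x\in S^-$ from $\Fix(Q')$, since $x_J\preceq y_J$ and $x_k\ge t_k+1>y_k$. This is where the strict inequality is used, so your remark about it is right only once the tie is placed on this side.
\item A positive answer now has $r_k>t_k$. Your $z=y+\mathbf e_k$ argument removes all of $S^+$; the paper instead uses $y\preceq x$ plus \Cref{lem:haha1}.
\end{itemize}

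\textbf{The $J$-case is the opposite.} There the tie-break by $g_{k+1}$ is essential, because coordinate $j$ has no strict inequality. No flip should be made: reversing the comparison would make $f'$ non-monotone. With the label \eqref{eq:lex-label} (with $j$ in place of $k$), a positive solution eliminates $S^-$ and a negative one eliminates $S^+$.

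\textbf{Two smaller points.}
\begin{itemize}
\item $J=\varnothing$ must be handled separately, since there is no coordinate to fold. One simply queries $t$, which lies below all of $S^-$ and above all of $S^+$.
\item Your fallback of freezing the smaller group would leave $\max\{|J|,|H|\}-1$ free coordinates. This exceeds $\lfloor k/2\rfloor-1$ for odd $k$.
\end{itemize}
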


\begin{proof}
First, consider the special case when $J=\varnothing$. Then we know that
$y\preceq t\preceq x$ for every $x\in S^{-}$ and $y\in S^{+}$.
The algorithm simply queries $t$ and let
$Q'=Q\cup\{(t,r)\}$ be the resulting query-answer set.
If $r_{k+1}=+1$, monotonicity forces positive labels at every
$x\in S^{-}$ and thus, $S^{-}\cap\Fix^{-}(Q')=\varnothing$.
If $r_{k+1}=-1$, monotonicity forces negative labels at every
$y\in S^{+}$ and thus,~$S^{+}\cap\Fix^{+}(Q')=\varnothing$.
In either case, one of the sets is eliminated.

We now consider the general case of $J\neq\varnothing$. The proof follows a flow similar to that of \Cref{lem:separating-threshold}.
We build an induced $\SafeTarskiStar$ instance
on a slice $s$ with $\min\{|J|,|H|\}-1$ free coordinates.
We show that after this induced $\SafeTarskiStar$ instance is solved, the extended query-answer set $Q'$ must have either eliminated $S^-$ from $\Fix^-(Q)$ or $S^+$ from $\Fix^+(Q)$. 
One subtlety to point out is that $\min\{|J|,|H|\}-1$ could be $0$. In this case it can be checked that the only thing the algorithm needs to do is to query $t$; after that the updated $Q'$ can achieve the goal.

We start with the case when $|J|\le |H|$.\medskip

\noindent\textbf{First case:} $0<|J|\leq|H|$.
We choose any $j\in J$ and define the slice
$s\in([n]\cup\{*\})^k$ by
$$
  s_\ell=
  \begin{cases}
    *&\ell\in J\setminus\{j\}\\
    t_\ell&\ell\in H\cup\{j\}
  \end{cases}.
$$
Given any safe function $f\Rightarrow Q$, we use $f$ to induce a function $f':\calL_s\rightarrow \calL_s\times \{\pm 1\}$ as follows:~For any $w\in \calL_s$ and any $i\in J\setminus \{j\}$, set $f'(w)_i=f(w)_i$; set $f'(w)_i=w_i$ for every $i\in H\cup \{j\}$; and for coordinate $k+1$, set the label $f'(w)_{k+1}$  by
$$
   f'(w)_{k+1}=+1
  \quad\Longleftrightarrow\quad
  f(w)_j>w_j=t_j
  \quad\text{or}\quad
  \big[\hspace{0.05cm}f(w)_j=t_j\ \text{and}\ f(w)_{k+1}=+1\hspace{0.05cm}\big].
$$
We view $f'$ as a function from $[n]^{|J|-1}$ to $[n]^{|J|-1}\times \{\pm 1\}$ (by renumbering $J\setminus \{j\}$ to $[|J|-1]$ and 
$k+1$ to $|J|$).
It is easy to verify that $f'$ is safe given $f$ is safe. So $f'$ is a valid  $\SafeTarskiStar(n,|J|-1)$ instance.
Moreover, oracle access to $f'$ can be simulated by oracle access to $f$ query by query.

Given $Q$ and oracle access to any safe function $f\Rightarrow Q$, our algorithm proceeds as follows: We run an algorithm $\ALG$ for $\SafeTarskiStar(n,|J|-1)$ on the unknown safe function $f'$ induced by $f$. Each time when $\ALG$ makes a query $z\in [n]^{|J|-1}$ about $f'$, our algorithm makes a query  about~$f$ on a point $y$ that corresponds to $z$ as in the construction of $f'$ from $f$.
After receiving $r=f(y)$ from the oracle, our algorithm updates $Q$ by adding a new pair $(y,r)$, replies to $\ALG$'s query (based on the construction of $f'$ from $f$), and continues.
This process stops within $\SafeTarskiStar(n,|J|-1)$ rounds; at the end, $\ALG$ returns a point $z\in [n]^{|J|-1}$ as a solution to $f'$. Let $y$ be its corresponding point in $[n]^k$. Our algorithm makes one additional query on $y$ (if it has not been queried before). This makes sure that at the end, letting $Q'\supseteq Q$ be the updated query-answer set, $Q'$ must contain a pair $(y,r)$ such that $y\in \calL_s$ and one of the following two conditions (with four subcases) holds:
\begin{enumerate}
\item $r_i\ge y_i$ for every $i\in J\setminus \{j\}$ and either (1.1) $r_j>t_j$ or (1.2) $r_j=t_j$ and $r_{k+1}=+1$; or
\item $r_i\le y_i$ for every $i\in J\setminus \{j\}$ and either (2.1) $r_j<t_j$ or (2.2) $r_j=t_j$ and $r_{k+1}=-1$.
\end{enumerate}
(We note that for the special case when $J=\{j\}$, the above can be achieved simply by querying~$t$. Setting $y=t$ satisfies one of the two conditions above.)
We show below that Cases (1.1) and (1.2) eliminate $S^-$ from $\Fix^-(Q')$, while Cases (2.1) and (2.2) eliminate $S^+$ from $\Fix^+(Q')$.

Consider Cases (1.1) and (1.2).
We first show that $y$ is postfixed in $J$ for $Q'$. 
To see this,~take any monotone function $h\Rightarrow Q'$. 
Since $(y,r)\in Q'$, we have 
  $h(y)_i \geq y_i$ for every
$i\in J\setminus\{j\}$, and also
$h(y)_j\geq y_j$ (whether it is (1.1) or (1.2)). So $y$ is postfixed in $J$ for $Q'$.
Assume for a contradiction that $x\in S^{-}\cap\Fix^{-}(Q')$. Because
$x_H\succeq t_H=y_H$, \Cref{lem:candidate-comparison} with $I=J$
implies $y\preceq x$.
But given that $x_j\leq t_j=y_j$, we have $x_j=y_j$. We consider the two subcases separately below:
\begin{flushleft}\begin{enumerate}
\item \textbf{Case 1.1:}  $r_j>t_j$. Monotonicity forces $h(x)_j>x_j$ for every monotone
function $h\Rightarrow Q'$, contradicting with
$x\in\Fix(Q')$ by \Cref{lem:haha1}.
\item \textbf{Case 1.2:} $r_j=t_j$ and $r_{k+1}=+1$.
Monotonicity forces a positive label at $x$, contradicting
with $x\in\Fix^{-}(Q')$ by definition.
\end{enumerate}\end{flushleft}
So we conclude that $S^{-}\cap\Fix^{-}(Q')=\varnothing$.
Cases (2.1) and (2.2) can be handled similarly. 

\medskip

\noindent\textbf{Second case:}
$|H|<|J|$.
We define the slice $s\in([n]\cup\{*\})^k$ by
$$
  s_\ell=
  \begin{cases}
    *&\ell\in H\setminus\{k\} \\
    t_\ell &\ell\in J\cup\{k\} 
  \end{cases}.
$$
Given any safe function $f\Rightarrow Q$, we use $f$ to induce a function $f':\calL_s\rightarrow \calL_s\times \{\pm 1\}$ as follows:~For any $w\in \calL_s$ and $i\in H\setminus \{k\}$, set $f'(w)_i=f(w)_i$; set $f'(w)_i=w_i=t_i$ for every $i\in J\cup \{k\}$; and for coordinate $k+1$, set the label $f'(w)_{k+1}$  by
$$
  f'(w)_{k+1}=+1
  \quad\Longleftrightarrow\quad f(w)_k>w_k=t_k.
$$
We view $f'$ as a function from $[n]^{|H|-1}$ to $[n]^{|H|-1}\times \{\pm 1\}$ (by renumbering $H\setminus \{k\}$ to $[|H|-1]$ and 
$k+1$ to $|H|$).
It is easy to verify that $f'$ is safe given $f$ is safe, so $f'$ is a valid  $\SafeTarskiStar(n,|H|-1)$ instance.
Similarly, oracle access to $f'$ can be simulated by oracle access to $f$, query by query.

Our algorithm runs in a similar fashion by simulating an algorithm on $\SafeTarskiStar(n, |H|-1)$.
Within $\SafeTarskiStar(n, |H|-1)+1$ many queries, the updated query-answer set $Q'$ must contain a pair $(y,r)$ such that $y\in \calL_s$ and one of the following two conditions holds:
\begin{enumerate}
    \item $r_i\ge y_i$ for every $i\in H\setminus \{k\}$ and $r_k>t_k$; or
    \item $r_i\le y_i$ for every $i\in H$.
\end{enumerate}
Again note that for the special case when $H=\{k\}$, the above can be achieved easily by querying $t$ and setting $y=t$.
We show below that Case 1 eliminates $S^+$, while Case 2 eliminates $S^-$.

In Case 1, $y$ is postfixed in $H$ for $Q'$. Assume for a contradiction that $x\in S^+\cap \Fix^+(Q')$.
Since $x_J\succeq t_J=y_J$,
\Cref{lem:candidate-comparison} with $I=H$
implies $y\preceq x$.
But we also have $x_k\le t_k=y_k$, so $x_k=y_k$. 
Monotonicity then forces $h(x)_k>y_k=t_k$ for every monotone
function $h\Rightarrow Q'$, contradicting with $x\in\Fix(Q')$
by \Cref{lem:haha1}.
As a result, we have $S^{+}\cap\Fix^{+}(Q')=\varnothing$.

In Case 2, $y$ is  prefixed in $H$ for $Q'$.
But every $x\in S^{-}$ has
$x_J\preceq t_J=y_J$ and  $x_k\geq t_k+1>y_k$.
Thus \Cref{lem:candidate-comparison} with $I=H$ excludes every
$x\in S^{-}$ from $\Fix(Q')$, giving 
$S^{-}\cap\Fix^{-}(Q')=\varnothing$.
\end{proof}

\subsection{Putting everything together}\label{sec:together}
Now we are ready to prove \Cref{thm:star-recurrence}.
\begin{proof}[Proof of \Cref{thm:star-recurrence}]
We let $Q$ denote the safe query-answer set maintained by our algorithm.~We initialize $Q=\varnothing$ and proceed in rounds while  $\calR(Q)\ne\varnothing$; the algorithm moves to the next phase once $\calR(Q)$ becomes empty.  
To keep reducing the size of $\calR(Q)$, we apply 
  \Cref{lem:heavy-families} to obtain a heavy antipodal set-pair $(S^-,S^+)$.
Applying \Cref{lem:candidate-elimination} (and using that $k\ge 2$), 
  we can spend at most $\SafeTarskiStar(n,\lfloor k/2\rfloor-1)+1$ queries to reduce the size of $\calR(Q)$ by at least a $1/2^k$ fraction.
Given that we trivially have $|\calR(Q)|\le n^{2k}$ initially and queries can only shrink $\calR(Q)$, repeating this for at most $\lfloor 2k2^k \log n\rfloor$ rounds suffices to eliminate all bad point-pairs. 

Once $\calR(Q)$ becomes empty,
if either $\Fix^-(Q)$ or $\Fix^+(Q)$ is empty, we are already done using \Cref{lem:haha5}.  Otherwise, the sets are nonempty and coordinate-separable, so we can apply 
\Cref{lem:separating-threshold} to solve the rest of the problem with no more than $\SafeTarskiStar(n,k-1)+1$ queries. 

Hence, for $k\ge 2$ we get the following recurrence:
$$
  \SafeTarskiStar(n,k)
  \leq \SafeTarskiStar(n,k-1)+1
  +2k2^k\log n\Big(\SafeTarskiStar(n,\lfloor k/2\rfloor-1)+1\Big).
$$
Expanding the first term through dimensions $k-1,\ldots,2$
and
using the fact that $\SafeTarskiStar(n,d)$ is nondecreasing in $d$,
 gives a total query cost of at most
$$
\SafeTarskiStar(n,1)+( k-1)+2\log n\sum_{d=2}^k d2^d
\Big(\SafeTarskiStar(n,\lfloor k/2\rfloor-1)+1\Big).
$$
Since $\SafeTarskiStar(n,1)=O(\log n)$ and
$\sum_{d=2}^k d2^d=(k-1)2^{k+1}$, we have 
\begin{equation}\label{eq:main-recurrence}
  \SafeTarskiStar(n,k)
  \leq O\left(k2^k\log n\right)\cdot
   \SafeTarskiStar(n,\lfloor k/2\rfloor-1).
\end{equation}
This finishes the proof of the theorem.
\end{proof}

\section{The Lower Bound}
In this section, we will prove the query lower bound. 
\MainLower*

We consider a distribution $\cF$ of functions mapping from
a hyperrectangle (informally referred to as {\em cube} going forward)  %
to itself. 
We will use $Q(\cF)$ to denote the minimum average-case depth of a deterministic decision tree that always finds a fixed point for a function drawn from the distribution $\cF$. 
By Yao's minimax principle, this gives the same lower bound $Q(\cF)$ against any randomized algorithm in its worst case. 

Our construction is inductive. 
We use $\ell$ to denote the level of the induction, which gives us an instance with a query lower bound $\Omega_{\ell}((\log n)^{\ell+1})$. 
Throughout the induction, we consider a ``base side-length'' parameter $m$, which we will describe later. 
In the inductive step that lifts from level $\ell\!-\!1$ to level $\ell$, we compose an outer instance, which is a $1$-dimensional Tarski instance with side-length $m$, and an inner instance, which is a level-$(\ell\!-\!1)$ instance.
The outer instance is determined by a hidden number $t^*$ in $[m]$, and is defined as $f(t)=t+\mathsf{sign}(t^*-t)$.
It is equivalent to a binary search instance for finding a hidden number in $[m]$, where queries provide comparisons between any number and the hidden number, and thus has a query lower bound of $\Omega(\log m)$. 
This observation allows us to prove a stronger ``high-probability'' statement for our inductive hypothesis: any deterministic decision tree with depth $q_{\ell,m}=\Theta_{\ell}((\log m)^{\ell+1})$ can only find a fixed point of a function mapping from $[m^\ell]^{\Theta(4^{\ell})}$ to itself with a small probability. 

\begin{restatable}[]{theorem}{InductiveTailBound}
\label{theorem:inductive-tail-bound}
There exists an absolute constant $c_0\in (0,1/3)$ such that the following holds. 
Let $m \ge 200$ {\em(base side-length)} and $\ell \ge 1$ {\em (level)} be integers, and let $n_{\ell,m}=m^{\ell}$ {\em( side length)} and $k_{\ell}=2(4^\ell-1)/{3}$ {\em (dimension)}.
Then there exists a distribution $\cF_{\ell,m}$ of monotone functions $F:[n_{\ell,m}]^{k_{\ell}} \to [n_{\ell,m}]^{k_{\ell}}$ such that:

\begin{itemize}
    \item Each function $F\in \mathsf{supp}(\cF_{\ell,m})$ has a unique fixed point.
    \item Let $F\sim \cF_{\ell,m}$. For any deterministic decision tree with oracle access to $F$ and depth at most $q_{\ell,m}=c_0\big(\frac{\log m}{512\ell^3}\big)^{\ell+1}$, the probability that it finds the fixed point of $F$ is at most $\eps_{\ell,m}=3m^{-c_0-\frac{1}{8\ell}}$.
\end{itemize}
\end{restatable}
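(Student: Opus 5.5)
We prove the statement by induction on the level $\ell$, following the outer/inner composition described just above. Note that $k_\ell = 4k_{\ell-1}+2$ and $k_1=2$. This suggests that the level-$\ell$ instance on $[m^\ell]^{k_\ell}$ uses two \emph{outer} coordinates, which carry a hidden 1-dimensional binary search instance $t^*\in[m]$, together with four copies of the level-$(\ell-1)$ instance, each on $k_{\ell-1}$ coordinates. The side length $m^\ell=m\cdot m^{\ell-1}$ is read as a coarse block index in $[m]$ times a fine position in $[m^{\ell-1}]$.

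\textbf{Base case $\ell=1$.} This is a 2-dimensional instance on $[m]^2$ built from the binary search function $f(t)=t+\mathsf{sign}(t^*-t)$. The two coordinates are coupled in the style of the EPRY herringbone construction, so that a fixed point lies on a monotone path whose position is determined by two nested hidden numbers. Against a uniformly random hidden number, a depth-$q$ comparison tree locates it with probability at most about $2^{q}/m$. Nesting two such searches gives success probability $m^{-\Omega(1)}$ for depth $c_0(\log m)^2$, which matches $\eps_{1,m}$.

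\textbf{Inductive step.} Define $F$ so that, at any point, the outer coordinates move by the sign of $t^*$ compared with the current block. The inner coordinates are driven by independent copies $G_1,\dots,G_4\sim\cF_{\ell-1,m}$, placed so that monotonicity holds: the copies used when the outer coordinate is below or above $t^*$ are arranged as "up/down" gadgets at the grid boundary. This explains the four copies; two are needed per outer coordinate so that the gadget is antipodal and monotone across block boundaries. The unique fixed point sits at block $t^*$ and at the fixed point of the relevant inner copies.

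The key property is that a single query reveals outer information (one comparison with $t^*$) only when it lands in a region whose inner answer is "resolved". Concretely, we show that to learn the comparison bit at block $b$, an algorithm must essentially find a fixed point of the inner instance $G$ restricted to block $b$; otherwise the answer is simulable without knowing $t^*$.

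\textbf{Counting argument.} We run a hybrid/simulation argument. A depth-$q_{\ell,m}$ tree is converted into a procedure that, for each outer comparison it gains, must solve an independent fresh inner instance. Here each block carries its own independent inner sample, obtained by sampling fresh copies per block so that information about one block does not help another. By the inductive hypothesis, a sub-budget of $q_{\ell-1,m}$ queries solves each inner instance with probability at most $\eps_{\ell-1,m}$. Since $q_{\ell,m}/q_{\ell-1,m}\approx (\log m)/(\mathrm{poly}(\ell))$, the number of inner instances solved is, by a Chernoff/martingale tail bound, at most about $\frac{\log m}{8\ell}$ except with small probability. Fewer than $\log m$ comparisons then leave $t^*$ with about $m^{1-1/(8\ell)}$ remaining candidates, which gives the bound $m^{-c_0-1/(8\ell)}$ plus the error terms, summing to at most $3m^{-c_0-1/(8\ell)}$. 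The $\ell^3$ loss in $q_{\ell,m}$ absorbs the union bounds over levels.

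\textbf{Main obstacle.} The hardest step is the simulation lemma. We must design the inner embedding so that queries in unresolved regions leak nothing about $t^*$, while keeping $F$ globally monotone with a unique fixed point, and so that queries can be charged to individual inner instances despite adaptivity across blocks. The first thing to try is a charging scheme that assigns each query to its block and treats each inner instance as an adaptive adversary with independent randomness. The tail bound would then follow from a sum of dependent Bernoulli successes, each with conditional probability at most $\eps_{\ell-1,m}$ given budget $q_{\ell-1,m}$.
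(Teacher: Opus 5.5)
Your skeleton (induction on $\ell$, a hidden $t^*\in[m]$ whose comparison bits each sit behind the fixed point of an independent per-block inner instance, and a $2^R/m$ count) matches the paper. But the inductive step is missing: you never construct $F$, and that construction is the substance of the theorem. What you do describe would fail. If the outer coordinates ``move by the sign of $t^*$ at any point,'' every query leaks a comparison with $t^*$. Suppose instead that a single sign coordinate uses $t^*$-independent defaults away from the inner fixed point $FP_a$ of block $a$. Then monotonicity breaks. The displacement at $(a,FP_a)$ must be able to take each of $-1,0,+1$. Since $\mathsf{MIN}_a\preceq FP_a\preceq\mathsf{MAX}_a$, the default at $(a,\mathsf{MIN}_a)$ must be at most $a-1$ and the default at $(a,\mathsf{MAX}_a)$ at least $a+1$. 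This contradicts $(t,\mathsf{MAX}_t)\preceq(t+1,\mathsf{MIN}_{t+1})$. The paper uses two sign coordinates $a,b$, paired with value groups $X,Y$, with one-sided updates: $a$ moves only by $\{0,+1\}$ via $c_{a+1}=\mathbf 1(a<t^*)$, and $b$ only by $\{-1,0\}$ via $c_b$. The results are \emph{swapped} ($F_b$ receives the updated $a$, $F_a$ the updated $b$), so every fixed point has $a=b$, hence $a=b=t^*$.

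There is a second obstacle you do not address. The hidden bit is a monotone $\{0,1\}$-valued function equal to $c$ at the inner fixed point $P$ and independent of $c$ elsewhere. This forces it to be $0$ strictly below $P$ and $1$ strictly above $P$. Those defaults leak comparisons with $P$, so the inner instance is no longer protected by the inductive hypothesis. The paper replaces $H$ by $G_H(x,u)=(H(x),\overline{H(\overline u)})$. Its unique fixed point $(P_H,\overline{P_H})$ lies on the fixed antichain $C=\{(z,\overline z)\}$, so the defaults ($0$ on the down-closure of $C$, $1$ elsewhere) depend on neither $c$ nor $H$. These two devices are where $k_\ell=4k_{\ell-1}+2$ comes from. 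Note that both halves of $G_H$ must share the \emph{same} $H_a$; with ``four independent copies'' the fixed point would be $(P_H,\overline{P_{H'}})$, off any fixed antichain.

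The counting part is closer but needs correcting. Blocks that receive at least $q_{\ell-1,m}$ inner queries must be set aside separately. Each $F$-query costs at most four inner queries, so for $q'=q_{\ell-1,m}\log m/64$ there are at most $4q'/q_{\ell-1,m}=\frac1{16}\log m$ such blocks. For the remaining blocks a first-moment bound suffices. Conditioned on a block being touched, its fixed point is hit within budget with probability at most $\varepsilon_{\ell-1,m}$, so Markov's inequality gives a tail of order $q_{\ell-1,m}\varepsilon_{\ell-1,m}$. A Chernoff/martingale bound over adaptively interleaved inner searches is unnecessary and not straightforward to justify. Also, the target $m^{-c_0-1/(8\ell)}$ does not come from the candidates remaining for $t^*$. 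That term is only about $m^{-1/2}$, which is harmless because $c_0<1/3$. The target comes from the tail term above, whose polylogarithmic factor is absorbed by the slack $\frac1{8(\ell-1)}-\frac1{8\ell}$ in the exponent. The $\ell^3$ in $q_{\ell,m}$ is calibrated for exactly this (and for $q_{\ell,m}\le q_{\ell-1,m}\log m/64$), not for a union bound over levels.
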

Note that \Cref{theorem:inductive-tail-bound} implies \Cref{theorem: lower}:
We let $\ell=\lfloor\log_4(6k+4)\rfloor-1$.
Then, $6k_\ell+4=4^{\ell+1}\leq 6k+4$, and thus $k_\ell\leq k$.
For all sufficiently large $n$, let $m=\lfloor n^{1/\ell}\rfloor$.
Padding the side length from $m^\ell$ to $n$ and adding $k-k_\ell$ coordinates that are constantly equal to $1$ together preserve the monotonicity and the uniqueness of the fixed point. Each query to the padded function can be simulated by one query to the original function. Since $\log m\geq(\log n)/(2\ell)$ and $\eps_{\ell,m}=o(1)$, \Cref{theorem:inductive-tail-bound} gives the following average-case lower bound for any deterministic algorithm. 
\[
q_{\ell,m}(1-\eps_{\ell,m})
\geq c_0\left(\frac{\log n}{1024\ell^4}\right)^{\ell+1} \cdot 0.5 = \Omega\left(\left(\frac{\log n}{\polylog(k)}\right)^{\lfloor\log_4(6k+4)\rfloor}\right).
\]
Finally, applying Yao's minimax principle as discussed above, we have proved \Cref{theorem: lower}.

When $\ell=1$ (the base case), we use the existing construction of \cite{EPRY20} that works for $2$-dimensional Tarski. %
For $\ell\geq 2$, we use an amplification step that lifts the level-$(\ell\!-\!1)$ distribution to the level-$\ell$ distribution, and increases the lower bound by an $\Omega_{\ell}(\log m)$ factor. We present the step in \Cref{subsec:lower-amp-step} and analyze it in \Cref{subsec:lower-amp-basic,sec:randomized-lower-bound}.
In the rest of this section, we use $\Delta(A)$ as the set of all distributions supported on $A$, and identify $\mathbf1$ as the all-ones vector.

\subsection{The base case: a distribution of two-dimensional monotone functions}
\label{subsec:lower-basecase}
Our construction starts from the two-dimensional herringbone construction of Etessami, Papadimitriou, Rubinstein, and Yannakakis~\cite{EPRY20}. 
The construction gives our desired bound for $\ell=1$ with $k_1=2$ dimensions.\footnote{Although \cite{EPRY20} only explicitly states an \(\Omega((\log m)^2)\) randomized query lower bound, the stronger probability bound in~\Cref{lem:lower-basecase} also follows from their construction and analysis.}

\begin{lemma}[\cite{EPRY20}]
\label{lem:lower-basecase}
There is a universal constant $c_0>0$ such that, for every integer $m\geq 200$,
there is a distribution $\cF$ of monotone functions $[m]^2\to[m]^2$, such that 
\begin{itemize}
    \item Each function $F\in \mathsf{supp}(\cF)$ has a unique fixed point.
    \item Let $F\sim \cF$. For any deterministic decision tree with oracle access to $F$ and depth at most $q=c_0(\log m)^2$, the probability that it finds the fixed point of $F$ is at most $m^{-c_0-\frac{1}{8}}$.
\end{itemize}
\end{lemma}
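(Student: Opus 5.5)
I will reconstruct the herringbone distribution of \cite{EPRY20} and redo its analysis, keeping track of the success probability rather than only the expected number of queries. The distribution is built from a uniformly random monotone lattice path $P$ from $(1,1)$ to $(m,m)$ and a uniformly random point $p^*$ on $P$. Points $x\in P$ are mapped one step along $P$ toward $p^*$. Points off $P$ are mapped diagonally toward the path: points above-left of $P$ go to $x+(1,-1)$ and points below-right go to $x+(-1,1)$, clipped at the boundary. The first thing I will check is monotonicity, which is the standard verification in \cite{EPRY20}. The second is uniqueness of the fixed point: off-path points always move, and on-path points move unless they equal $p^*$.

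For the tail bound I will use the two-level binary-search structure. To make the analysis clean, I will discretize the path. The path is determined by a random monotone sequence of $\sqrt m$ ``turning'' segments, and $p^*$ is a random index along it. A query off the path reveals only on which side of $P$ the point lies. This amounts to a comparison on the anti-diagonal through that point: it locates where $P$ crosses that anti-diagonal. A query on the path reveals its direction toward $p^*$, which is a comparison with the hidden index.

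The plan is to give the algorithm extra information for free. Whenever it queries a point on some anti-diagonal $d$, I reveal the crossing of $P$ with $d$ and also the comparison of $d$ with the anti-diagonal of $p^*$. The key claim is this: to learn $p^*$, the algorithm must pin down the diagonal index of $p^*$, which is a binary search over $m$ values. Each comparison step, in turn, can only be performed by hitting the path on the queried diagonal, and that is itself a binary search of length $\Omega(\log m)$ over the crossing position, which is roughly uniform over an interval of length $\Theta(\sqrt m)$ on most diagonals.

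I will formalize this via a potential or adversary argument. Call a diagonal ``resolved'' once enough queries have been made on it that the path crossing is known. With $q=c_0\log^2 m$ queries, at most $O(c_0\log m)$ diagonals can be resolved, using $\Omega(\log m)$ queries each; this follows from a standard tail bound for binary search, namely that with $j$ queries one succeeds with probability at most $2^j/\sqrt m$. The outer binary search over the hidden diagonal then has only $O(c_0\log m)$ effective comparisons available. So the posterior over the $p^*$-diagonal keeps $m^{1-O(c_0)}$ equally likely values, except with small probability. Combining the two failure modes by a union bound gives success probability at most $m^{-c_0-1/8}$ for a small enough $c_0$.

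The main obstacle is the inner step. An unresolved diagonal can still be hit ``by luck'', and partial information about the path on nearby diagonals leaks across because the path is monotone. I will handle this by conditioning on the revealed information and using that, given the crossings on the resolved diagonals, the path on an unresolved diagonal still has $\Omega(\sqrt m/\mathrm{poly}\log m)$ possible crossings. A lucky hit therefore has probability at most $m^{-1/2+o(1)}$ per query, and a union bound over $q$ queries absorbs this into the stated $m^{-c_0-1/8}$ bound.
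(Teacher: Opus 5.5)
Your construction matches the paper's approach. The paper proves nothing itself here: it cites the herringbone distribution of \cite{EPRY20} and asserts that their analysis yields the tail bound. Your definition of the functions, the monotonicity check and the uniqueness argument are fine. The gaps are in the analysis.

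\textbf{The free-information step trivializes the problem.} As stated, every query on antidiagonal $d$ reveals the crossing of $P$ with $d$ and the comparison of $d$ with the antidiagonal $d^*$ of $p^*$. An algorithm can then binary-search over antidiagonals with one query per step to find $d^*$. The crossing revealed on $d^*$ is $p^*$ itself. That is $O(\log m)$ queries, so no $\Omega(\log^2 m)$ bound can hold in your augmented model. Revealing information only for ``resolved'' diagonals is a different device and would have to be set up that way.

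\textbf{The main gap is your counting, which fails for every distribution over monotone lattice paths.} You argue that each comparison with $d^*$ needs a fresh $\Omega(\log m)$-query search, so only $O(c_0\log m)$ diagonals get resolved. You also claim that, given the resolved crossings, every unresolved diagonal keeps $\Omega(\sqrt m/\mathrm{poly}\log m)$ possible crossings. Both are false, for these reasons:
\begin{itemize}
\item The crossings of $P$ with antidiagonals $d$ and $d+\delta$ differ by at most $\delta$ unit steps. Once one diagonal is resolved, the algorithm can follow $P$ with at most two queries per diagonal. It then hits the path, and gets a comparison with $d^*$, on $\Theta(q)$ diagonals.
\item Once $d^*$ is confined to a window of $L$ diagonals bounded by resolved ones, locating a crossing inside the window costs only about $\frac{1}{2}\log L$ queries. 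For a uniformly random path the crossing there has spread about $\sqrt L$.
\item If your unspecified ``discretized'' path has segment endpoints at fixed grid points, one query at such an endpoint is a free comparison.
\end{itemize}
So neither your bound on resolved diagonals nor the $m^{-1/2+o(1)}$ per-query ``lucky hit'' bound holds, and counting path hits does not bound the information gained about $d^*$.

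\textbf{What is missing.} You need an accounting that charges queries to \emph{information} about $d^*$, not to path hits. A cheap comparison near an already-resolved diagonal only tests whether $d^*$ lies in a short window, which happens with small probability. A balanced split of a window of length $L$ costs order $\log L$ queries. Turning this into the stated $m^{-c_0-1/8}$ tail needs either a multi-scale potential or martingale argument over window lengths, or a path distribution with enough hierarchical independence that every informative comparison forces a new $\Omega(\log m)$ search. This is the substance of the analysis of \cite{EPRY20} that the paper relies on, and your proposal does not supply it.
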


\subsection{The amplification step}
\label{subsec:lower-amp-step}
In this subsection, we present the amplification step of our construction. 
We fix $m$ as ``base side-length'' parameter. 
We will start with a distribution $\cH$ whose support consists of monotone functions $H:[n]^k\to [n]^k$ with a unique fixed point. 
The amplification step will produce a distribution 
\begin{align}
    \label{eqn:Amp-statement}
  \Amp_{m}(\cH)
  \in
  \Delta\left(\bigl\{f:[m]^2 \times [mn]^{4k}\longrightarrow [m]^2 \times [mn]^{4k}\bigr\}\right)
\end{align}
whose support consists of monotone functions with unique fixed points. 

In this subsection, we will use the notation of complement $\overline{z}$ for any $z\in[n]^k$: 
$$
  \overline z=(n+1)\one-z.
$$
It is clear that $\overline{(\overline{z})} = z $ for any $z\in [n]^k$. 

\subsubsection{Step 1: putting all candidate fixed points in an anti-chain}

Let $H: [n]^k \to [n]^k$ be a function in the support of $\cH$. We first define the following function $G_H:[n]^{2k} \to [n]^{2k}$:
\begin{equation}\label{eq:dual-gate}
  G_H(x,u)
  =\bigl(H(x),\overline{H(\overline u)}\bigr),
  \qquad
  \forall\ (x,u)\in[n]^{2k}.
\end{equation}
Suppose that $P_H$ is the unique fixed point of $H$. 
A key feature of the function $G_H$ is that it has a unique fixed point, as shown by the following lemma. 
\begin{lemma} 
\label{lem:unique-fixed-point-G_H}
For any $x, u\in [n]^{k}$, we have $G_H(x,u) = (x,u)$ if and only if $x=P_H$ and $u=\overline{P_H}$.
\end{lemma}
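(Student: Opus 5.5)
The plan is to unfold the definition of $G_H$ coordinatewise and reduce the fixed-point equation to two independent fixed-point equations for $H$. By \eqref{eq:dual-gate}, $G_H(x,u)=(x,u)$ holds if and only if both
$$
H(x)=x\quad\text{and}\quad \overline{H(\overline u)}=u
$$
hold. The two blocks of coordinates do not interact in the definition of $G_H$, so each condition can be analyzed on its own.

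For the first block, $H(x)=x$ says exactly that $x$ is a fixed point of $H$. Since $H\in\mathsf{supp}(\cH)$ has the unique fixed point $P_H$, this is equivalent to $x=P_H$.

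For the second block, I will use that complementation $z\mapsto\overline z=(n+1)\one-z$ is an involution on $[n]^k$. It is in particular a bijection, so $\overline a=b$ if and only if $a=\overline b$. Applying this with $a=H(\overline u)$ and $b=u$ shows that $\overline{H(\overline u)}=u$ is equivalent to $H(\overline u)=\overline u$. This says $\overline u$ is a fixed point of $H$, so $\overline u=P_H$ by uniqueness, which is in turn equivalent to $u=\overline{P_H}$.

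Combining the two blocks gives both directions of the claimed equivalence at once. For the converse, plugging in $(x,u)=(P_H,\overline{P_H})$ gives $G_H(x,u)=(H(P_H),\overline{H(P_H)})=(P_H,\overline{P_H})$. There is no real obstacle here. The only point needing care is to check that complementation maps $[n]^k$ into itself, since $1\le z_i\le n$ implies $1\le n+1-z_i\le n$, so that $H(\overline u)$ is well defined.
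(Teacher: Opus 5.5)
Your proof is correct and takes essentially the same approach as the paper. Both split the fixed-point equation of $G_H$ into its two coordinate blocks, use that complementation is an involution to turn $\overline{H(\overline u)}=u$ into $H(\overline u)=\overline u$, and invoke uniqueness of $P_H$, checking the converse by direct substitution.
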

\begin{proof}
    If $G_H(x,u) = (x,u)$, we have $H(x) = x$ and $\overline{H(\overline{u})} = u$. 
    The second equality gives us $H(\overline{u}) = \overline{u}$. 
    Since $P_H$ is the unique fixed point of $H$, we have $x=P_H$ and $\overline{u} = P_H$, and thus $u=\overline{P_H}$.

    Conversely, if we have $x=P_H$ and $u=\overline{P_H}$, we have $$G_H(x,u) = (H(P_H) , \overline{H(P_H)}) = (P_H, \overline{P_H}) = (x,u)~.$$
    Hence, we have completed the proof. 
\end{proof}
\Cref{lem:unique-fixed-point-G_H} implies that the unique fixed point must lie in $$C=\{(z,\overline{z}): z\in [n]^k\} \subseteq \left\{x\in [n]^{2k}: \sum_{i\in [2k]} x_i=k(n+1)\right\}~.$$ 
Because $\{x\in [n]^{2k}: \sum_{i\in [2k]} x_i=k(n+1)\}$ is an anti-chain, after this step, we can ensure that all potential fixed points of $G_H$ are in the fixed anti-chain $C$ (that is independent of the function $H$).

\subsubsection{Step 2: Hiding one new bit in a sign function}
We partition the space $[n]^{2k}$ into two subsets: $D_C$ and $U_C$, where the set $D_C$ consists of all points that are below some point in the anti-chain $C$ and $U_C$ consists of all other points. %
$$
  D_C=\{w\in [n]^{2k}: \exists w'\in C, w\preceq w'\}, \text{\quad and \quad}
  U_C=[n]^{2k} \setminus D_C.
$$
We remark here that $C\subseteq D_C$ and that both $D_C$ and $U_C$ are independent of $H$.

Next, fixing a function $H\in \mathsf{supp}(\cH)$, we consider hiding one bit $c\in \{0,1\}$ at the unique fixed point of $G_H$. More specifically, we consider the following sign function ${\mathsf{S}_{H,c}}: [n]^{2k} \to \{0,1\}$:
\begin{equation}\label{eq:local-label}
    \forall X\in [n]^{2k}, \quad 
  {\mathsf{S}_{H,c}}(X)=
  \begin{cases}
    c,&X=(P_H, \overline{P_H}),\\
    0,&X\in D_C \setminus\{(P_H, \overline{P_H})\},\\
    1,&X\in U_C.
  \end{cases}
\end{equation}

Intuitively, this sign function is hiding the bit $c$ unless you have reached the fixed point of $H$: for all other points, the function value is automatically 0 if it is in $D_C$, while the value is 1 if it is in $U_C$. 
Therefore, to learn $c$ from querying ${\mathsf{S}_{H,c}}$, the only way is to find the fixed point $P_H$ of $H$ and query $X=(P_H, \overline{P_H})$.

Another key property of this sign function is that it is trivially monotone because $D_C$ includes all points that are below some point of $D_C$. We will formalize and use this property later in proving that our constructed distribution of functions is monotone (\Cref{lem:amplifier-monotone}).

\subsubsection{A failed attempt to put the first two steps together}
\newcommand{\Proj}{\mathcal{P}}
With the first two steps, one may be curious if we can put them together in a $2k+1$-dimensional instance, with the outer 1-dimensional Tarski instance encoded in the first coordinate and the inner instances $G_H$ embedded in the last $2k$ coordinates. 

More specifically, we consider constructing a function $F:[m]\times [mn]^{2k} \to [m] \times [mn]^{2k}$. 
We call the first coordinate $a\in [m]$ the ``sign coordinate'' and the last $2k$ coordinates $X\in [mn]^{2k}$ the ``value coordinates''. 
Accordingly, the function value is partitioned into $F(\cdot)=(F_a(\cdot), F_X(\cdot))$ with $F_a(\cdot)\in [m]$ and $F_X(\cdot) \in [mn]^{2k}$.

We consider $m$ boxes for the ``value coordinates''. The $t$-th box is defined as all points whose coordinates are between $(t-1)\cdot n+1$ and $t\cdot n$, i.e., 
\begin{align}
    \label{eqn:definition of B_t}
    B_t = \{X\in [mn]^{2k}: \forall i\in [2k], (t-1)\cdot n < X_i \leq t\cdot n\},
\end{align}
which forms a $2k$-dimensional hypercube of side-length $n$ and it is straightforward to consider a translated copy of $G_H$ (for $H\in \mathsf{supp}(\cH)$) as a function mapping from any $B_t$ to itself.
For the value coordinates, we sample $H_1,\dots,H_m\sim \cH$ from the original distribution $\cH$, and define the value $F_X(a,X)$ as the value of $\Proj_a(X)$ in $G_{H_a}$'s translated copy, where $\Proj_a(X)$ is the projection of $X$ onto box $B_a$. This is a natural way to guarantee the monotonicity for the value coordinates (because any point $W\in B_a$ is below any point $W'\in B_{a+1}$).

At the same time, we consider hiding a random number $t^*\sim \mathsf{uniform}([m])$ in the outer 1-dimensional instance.
We will use this outer instance to define the displacement $F_a(a,X)-a$ for each point. 
Ideally, we want to place the sign $\mathsf{sign}(t^*-a)$ on the  ``fixed point'' of each box (points $(a,X)$ with $X=FP_{a}$, where $FP_a$ is the fixed point for the translated copy of $G_{H_a}$ in $B_a$), and place default values independent of $t^*$ at all other points, similarly to \Cref{eq:local-label}. 
However, this requires three possible displacement values $-1, 0, 1$ instead of the two values $0,1$ used in \Cref{eq:local-label}.
Next, we argue that this could be problematic to guarantee the monotonicity of the sign coordinate. 

Let $\mathsf{MIN}_a$ and $\mathsf{MAX}_a$ be the minimum and maximum elements of $B_a$, respectively. 
Because $\mathsf{MIN}_a \preceq FP_a \preceq \mathsf{MAX}_a$, the monotonicity of $F$ requires our definition of $F_a$ to satisfy $F_a(a, \mathsf{MIN}_a)\leq F_a(a, FP_a) \leq F_a(a, \mathsf{MAX}_a)$. 
For $1<a<m$, it is possible that the displacement $F_a(a, FP_a)-a$ is $-1, 0$ or $1$, depending on our choice of $t^*$.
Therefore, the default value of $F_a(a, \mathsf{MIN}_a)$ must be at most $a-1$, while the default value of $F_a(a, \mathsf{MAX}_a)$ must be at least $a+1$.
However, this could lead to a violation of monotonicity in other pairs of points. 
Consider two points for some $t\in [2,m-2]$:
$$W=(t,\mathsf{MAX}_t)\quad \text{and}\quad W'=(t+1, \mathsf{MIN}_{t+1})~.$$ We have $W\preceq W'$. 
According to the above discussion, because these default values must reveal no information about $t^*$, we have $F_a(W) \geq t+1$ and $F_a(W') \leq (t+1)-1=t$, which violate $F(W)\preceq F(W')$.

\subsubsection{Step 3: Defining the new distribution of monotone functions}

In this step, we define the new distribution of monotone functions. 
Beyond our failed attempt discussed earlier, we consider $4k+2$-dimensional instances, which can be regarded as doubling the failed construction. 
The reason we can circumvent the monotonicity issue in the failed attempt is that we can use displacement values in $\{0,1\}$ on one dimension and use $\{-1,0\}$ on another. 
Because we don't have displacement $-1, 1$ on the same dimension, this helps to rule out the violation we discussed earlier. 
To prevent introducing multiple fixed points, we then swap the updated values of the two sign coordinates.

In this construction, we will call the first two coordinates ``sign coordinates'' and the last $4k$ coordinates ``value coordinates''. 
We divide the ``value coordinate'' into two groups: the first consists of the first $2k$ value coordinates, and the second consists of the last $2k$ value coordinates. 
In addition, we consider $m$ boxes that are restricted to each group. 
We use the same definition of $B_t$ as in \Cref{eqn:definition of B_t}.
Formally, we define $\Proj_t(X)$ as the projection of $X\in [mn]^{2k}$ onto the box $B_t$ as follows:
\begin{align*}
    [\Proj_t(X)]_i = \min\{t\cdot n, \max\{(t-1)\cdot n + 1, X_i\}\}~.
\end{align*}

\begin{figure}[t]
    \centering
    \tikzset{%
    every neuron/.style={
        circle,
        draw,
        minimum size=10pt,
        inner sep=0pt,
        very thin
    },
    neuron/.style={
        circle,
        fill=Purple,
        inner sep=0pt,
        minimum size=3.5pt
    },
}
\begin{tikzpicture}[x=0.9cm, y=0.9cm, >=latex, every text node part/.style={align=center}]
    \foreach \x in {1,...,27} {
      \fill[yellow!40] (5*\x-4.9 pt, 5*\x-4.9pt) rectangle (5*\x-0.1 pt, 5*\x-0.1pt);
    }
    \foreach \x in {1,...,26} {
      \path[very thin, black!10] (5*\x pt, 0pt) edge (5*\x pt, 135pt);
    }
    \foreach \x in {1,...,26} {
      \path[very thin, black!10] (0pt, 5*\x pt) edge (135pt, 5*\x pt);
    }
    \draw[thick] (0pt, 0pt) rectangle (135pt, 135pt);
    \node[] at (-10pt, 75pt) {$\mathbf{b}$};
    \node[] at (75pt, -10pt) {$\mathbf{a}$};
    \node[] at (-5pt, 132.5pt) {\scriptsize $m$};
    \node[] at (-5pt, 2.5pt) {\scriptsize $1$};
    \node[] at (132.5pt, -5pt) {\scriptsize $m$};
    \node[] at (2.5pt, -5pt) {\scriptsize $1$};
    \node[] at (48pt, -5pt) {\scriptsize $t^*$};
    \node[] at (-5pt, 48pt) {\scriptsize $t^*$};
    \path[dashed] (45pt, 0pt) edge (45pt, 135pt);
    \path[dashed] (0pt, 50pt) edge (135pt, 50pt);
    \path[->,thick,Red] (10pt, 22.5pt) edge (35pt, 22.5pt);
    \path[->,thick,Blue] (92.5pt, 105pt) edge (92.5pt, 80pt);
    \path[->,thick,Purple] (10pt, 105pt) edge (35pt, 80pt);
    \node[] at (92.5pt, 22.5pt) {\Large \color{Green} $\mathbf{0}$};
\end{tikzpicture}
    \caption{\small Illustration of updates on the sign coordinates $a,b$ before swapping them, when the translated projections of $X$ and $Y$ are fixed points of $G_{H_a}$ and $G_{H_b}$. We increase $a$ by one when $a<t^*$ and decrease $b$ by one when $b>t^*$. Hence, the bottom-left region moves right, the top-right region moves down, the top-left region moves both right and down, and the bottom-right region remains unchanged. The yellow cells mark the diagonal $a=b$, which is fixed under our swapping operation. This illustrates how the update and swap rules single out $(t^*,t^*)$ as the sign coordinates of the unique fixed point of our constructed functions.}
    \label{fig:sign-coordinates}
\end{figure}

Next, we construct a random function $F:[m]^2 \times [mn]^{4k} \to [m]^2 \times [mn]^{4k}$. The distribution of the function is then our desired $\Amp_m(\cH)$.
\begin{enumerate}
    \item We first sample $m$ i.i.d.~functions $H_1, H_2, \dots, H_{m}$ from the original distribution $\cH$. 
    For simplicity, we write $P_i$ for the unique fixed point of $H_i$. 
    \item We sample a hidden number $t^*$ from $[m]$ uniformly at random. We define the sign $c_t = \mathbf{1}(t\leq t^*)$ for each $t\in [m]$. We also define $c_{m+1}=0$ for simplicity. 
    \item For each $W\in [m]^2 \times [mn]^{4k}$, we rewrite $W=(a,b,X,Y)$ where $a,b\in [m]$ and $X,Y\in [mn]^{2k}$. 
    \item For each $W\in [m]^2 \times [mn]^{4k}$, we rewrite $F(W)=(F_{a}(W), F_b(W), F_X(W), F_Y(W))$ where $F_a(W),F_b(W)\in [m]$ and $F_X(W),F_Y(W)\in [mn]^{2k}$. 
    \item \textbf{Defining the function value for the ``value coordinates'':} we pair $a$ with $X$ and $b$ with $Y$. The value of $F_X$ is determined by the pair $(a,X)$, while the value of $F_Y$ is determined by the pair $(b,Y)$. The formulations for $F_X$ and $F_Y$ are identical based on the pair. For $F_X$, we first compute the projection of $X$ onto $B_a$, translate it to the space $[n]^{2k}$, and then consider its value of $G_{H_a}$ (translated back to the space $B_a$). 
    More specifically, we define
    \begin{align*}
        F_X(W) &= (a-1)\cdot n \cdot \mathbf{1} + G_{H_a}(\Proj_a(X)-(a-1)\cdot n \cdot \mathbf{1})~,
        \\
        F_Y(W) &= (b-1) \cdot n \cdot \mathbf{1} + G_{H_b}(\Proj_b(Y)- (b-1)\cdot n\cdot \mathbf{1})~.
    \end{align*}
    \item \textbf{Defining the function value for the ``sign coordinates'':} We will hide the sign $c_t$ (comparing $t$ with $t^*$) in these coordinates. As in the definition for the ``value coordinates'', we still pair $(a,X)$ and $(b,Y)$. We consider slightly updating the ``sign coordinates'' (also demonstrated in \Cref{fig:sign-coordinates}) and then swapping them to define their function value, following the rules below:
    \begin{itemize}
        \item We move $a$ based on $X$'s sign in the sign function ${\mathsf{S}_{H_a,c_{a+1}}}$ and place its value on $F_b$:
        \begin{align*}
            F_b(W)&=\min\{m,a+{\mathsf{S}_{H_a,c_{a+1}}}(\Proj_a(X)-(a-1)\cdot n\cdot \mathbf{1})\}.
        \end{align*}
        Let $X'=\Proj_a(X)-(a-1)n\cdot \mathbf{1}$ be translated projection of $X$. 
        This gives $F_b(W)=a$ if $X'\in D_C\setminus \{(P_{a}, \overline{P_{a}})\}$ or $F_b(W)=\min\{m,a+1\}$ if $X'\in U_C$. 
        More importantly, when $X'$ equals the fixed point $(P_a, \overline{P_a})$, this gives $F_b(W)=a$ when $a\geq t^*$ but gives $F_b(W)=a+1$ when $a<t^*$.
        \item We move $b$ based on $Y$'s sign in the sign function ${\mathsf{S}_{H_b,c_b}}$ and place its value on $F_a$:
        \begin{align*}
            F_a(W)&=\max\{1,b-1+{\mathsf{S}_{H_b,c_b}}(\Proj_b(Y)-(b-1)\cdot n\cdot \mathbf{1})\}.
        \end{align*}
        Let $Y'=\Proj_b(Y)-(b-1)n\cdot \mathbf{1}$ be the translated projection of $Y$. 
        This gives $F_a(W)=\max\{1,b-1\}$ if $Y'\in D_C\setminus \{(P_{b}, \overline{P_{b}})\}$ or $F_a(W)=b$ if $Y'\in U_C$. 
        More importantly, when $Y'$ equals the fixed point $(P_b, \overline{P_b})$, this gives $F_a(W)=b$ when $b\leq t^*$ but gives $F_a(W)=b-1$ when $b>t^*$.
    \end{itemize}
    Overall, we conclude the following intuition for the definition on the ``sign coordinates'':
    \begin{itemize}
        \item We want $F_a$ (or, $F_b$) to reveal no information about $t^*$ if the translated projection of $Y$ (or, $X$) is not a fixed point. 
        \item Assume that the translated projections of $X,Y$ are both fixed points. The swapping idea ensures that $a,b$ should be equal to each other in any fixed point $W$ of $F$. This ensures that the fixed point is unique (and thus finding it requires finding the hidden $t^*$). Then, we only move exactly one of $a$ and $b$ in one (fixed) direction when $a=b\neq t^*$, in order to guarantee monotonicity. 
    \end{itemize}
\end{enumerate}

\subsection{Basic properties of the constructed distribution of functions}
\label{subsec:lower-amp-basic}
In this subsection, we show that any function in the support of the constructed distribution $\Amp_{m}(\cH)$ is monotone and has a unique fixed point. 
\subsubsection{Uniqueness of the fixed point}

\begin{lemma}\label{lem:amplifier-unique}
Consider any $F\in \mathsf{supp}(\Amp_{m}(\cH))$.
Suppose that the sampled $m$ functions from $\cH$ are $H_1, H_2, \dots, H_{m}$, and their unique fixed points are respectively $P_1, P_2, \cdots, P_{m} \in [n]^{k}$. 
Suppose that $t^*$ is the sampled hidden number in $[m]$.
Let $$P^* = (P_{t^*}+(t^*-1)n\cdot \mathbf{1}, \overline{P_{t^*}}+(t^*-1)n\cdot \mathbf{1}).$$
Then, $F$ has the unique fixed point
  $(t^*,t^*,P^*, P^*)$.
\end{lemma}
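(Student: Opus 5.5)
The proof has two directions. First, I will verify that $W^*=(t^*,t^*,P^*,P^*)$ is a fixed point. Second, I will show that any fixed point $W=(a,b,X,Y)$ of $F$ must equal $W^*$. Throughout I will use \Cref{lem:unique-fixed-point-G_H} on the value coordinates and the definition \eqref{eq:local-label} of the sign function on the sign coordinates.

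\textbf{Verification.} Since every coordinate of $P^*$ lies in $((t^*-1)n,t^*n]$, we have $P^*\in B_{t^*}$, so $\Proj_{t^*}(P^*)=P^*$. The translated point is therefore $(P_{t^*},\overline{P_{t^*}})$, which is a fixed point of $G_{H_{t^*}}$ by \Cref{lem:unique-fixed-point-G_H}. Hence $F_X(W^*)=P^*$, and by the same argument $F_Y(W^*)=P^*$.

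For the sign coordinates, the translated projection is exactly $(P_{t^*},\overline{P_{t^*}})$. This gives
\[
F_b(W^*)=\min\{m,t^*+c_{t^*+1}\}=t^*,
\]
since $c_{t^*+1}=0$ (using the convention $c_{m+1}=0$ when $t^*=m$). It also gives
\[
F_a(W^*)=\max\{1,t^*-1+c_{t^*}\}=t^*,
\]
since $c_{t^*}=1$.

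\textbf{Uniqueness.} Suppose $F(W)=W$.

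\emph{Step 1: value coordinates.} From $F_X(W)=X$ we get $X\in(a-1)n\mathbf 1+[n]^{2k}$, because $F_X$ always maps into $B_a$. So $X\in B_a$ and $\Proj_a(X)=X$. The translated point $X'=X-(a-1)n\mathbf 1$ is then a fixed point of $G_{H_a}$, so by \Cref{lem:unique-fixed-point-G_H} we have $X'=(P_a,\overline{P_a})$. Symmetrically, $Y\in B_b$ and $Y'=(P_b,\overline{P_b})$.

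\emph{Step 2: sign coordinates.} Now $b=F_b(W)=\min\{m,a+c_{a+1}\}$ and $a=F_a(W)=\max\{1,b-1+c_b\}$. The first equation gives $b\in\{a,a+1\}$, with $b=a+1$ only if $c_{a+1}=1$, i.e.\ $a+1\le t^*$. The second gives $a\in\{b-1,b\}$ (with the max clipping at 1 handled separately), where $a=b-1$ requires $c_b=0$, i.e.\ $b>t^*$. If $b=a+1$, these two conditions read $a+1\le t^*$ and $a+1>t^*$, a contradiction. So $a=b$.

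With $a=b$, the first equation reads $a=\min\{m,a+c_{a+1}\}$. This forces $c_{a+1}=0$ or $a=m$; in either case $a\ge t^*$. The second equation reads $a=\max\{1,a-1+c_a\}$. This forces $c_a=1$ or $a=1$. If $c_a=1$ then $a\le t^*$. If instead $a=1$ then $a\le t^*$ trivially, since $t^*\ge1$. Combining, $a=b=t^*$, and then $X=Y=P^*$.

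The step needing the most care is Step 2, specifically the boundary cases coming from the $\min\{m,\cdot\}$ and $\max\{1,\cdot\}$ clippings. I will check these cases separately.

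\emph{Case $a=m$.} Then $b\in\{m\}$ from the min, since $b\le m$.

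\emph{Case $b=1$ with $a=\max\{1,c_1\}$.} Here $c_1=1$ always, so $a=1$.

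In each clipped case the equations still force $a=b$. The only other subtle point is in Step 1: a fixed point must lie in the box, because the image of $F_X$ lies in $B_a$. This holds by construction of the translation, so that check is immediate.
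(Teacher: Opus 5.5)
Your proof is correct and follows the paper's argument: first check $(t^*,t^*,P^*,P^*)$ directly, then use that $F_X$ and $F_Y$ map into $B_a$ and $B_b$ together with \Cref{lem:unique-fixed-point-G_H} to pin down $X$ and $Y$, and finally solve the two sign-coordinate equations. The only difference is in the last step. You first rule out $b=a+1$ and then bound $a$ from both sides by $t^*$, whereas the paper splits directly into the cases $a<t^*$ and $a>t^*$; both versions handle the $\min\{m,\cdot\}$ and $\max\{1,\cdot\}$ clippings correctly.
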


\begin{proof}
First, we show that $W^*=(t^{*},t^{*},P^{*},P^{*})$ is a fixed point by simple calculations.
Since $P_{t^*}\in [n]^k$, we have $P^*\in B_{t^*}$ by the definition of $B_{t^*}$ and thus $\Proj_{t^*}(P^*)=P^*$.
Hence, we have $\Proj_{t^*}(P^*) - (t^*-1)n\cdot \mathbf{1}=(P_{t^*},\overline{P_{t^*}})$. 
Because we have $G_{H_{t^*}}(P_{t^*}, \overline{P_{t^*}})=(P_{t^*}, \overline{P_{t^*}})$, we have $F_X(W^*)=P^*$ and $F_Y(W^*)=P^*$ by their definitions.
By definitions of $F_a$ and $F_b$, we have 
\begin{align*}
    F_b(W^*)& = \min\{m, t^* + c_{t^*+1}\} = t^*~, \text{ }
    and  \tag{$c_{t^*+1}=0$ as $t^*+1>t^*$}\\
    F_a(W^*) &= \max\{1, t^* - 1 + c_{t^*}\} = t^*~. \tag{$c_{t^*}=1$ as $t^*\leq t^*$}
\end{align*}
Therefore, we have $F(W^*)=W^*$.
Next, we show that this is the only fixed point of $F$.
Let $W=(a,b,X,Y)$ be any fixed point of $F$. 

First, we prove the characterization of $X,Y$: $X=(P_a, \overline{P_a}) + (a-1)n\cdot \mathbf1$ and $Y=(P_b, \overline{P_b}) + (b-1)n\cdot \mathbf1$.
Since $X=F_X(W)\in B_a$ and $Y=F_Y(W)\in B_b$, we have
$X\in B_a$ and $Y\in B_b$ and thus $\Proj_a(X)=X$ and $\Proj_b(Y)=Y$. 
According to the definitions of $F_X,F_Y$, we have 
\begin{align*}
X=F_X(W)&=(a-1)n\cdot \mathbf{1} + G_{H_a}\bigl(X-(a-1)n\cdot \mathbf1\bigr)\\
Y=F_Y(W)&=(b-1)n\cdot \mathbf{1} + G_{H_b}\bigl(Y-(b-1)n\cdot \mathbf1\bigr).
\end{align*}
This implies that $X-(a-1)n\cdot \mathbf1$ is a fixed point of $G_{H_a}$ and $Y-(b-1)n\cdot \mathbf1$ is a fixed point of $G_{H_b}$. By \Cref{lem:unique-fixed-point-G_H}, we have obtained the characterization of $X,Y$.

Second, we show that $a=b=t^*$. 
Given the characterization of $X$ and $Y$, we have
\begin{align*}
    \Proj_a(X) - (a-1)n\cdot \mathbf1 =  (P_a, \overline{P_a}), \quad \text{and}\quad \Proj_b(Y) - (b-1)n\cdot \mathbf1 = (P_b, \overline{P_b}).
\end{align*}
According to the definition of $F_a, F_b$, we have 
\begin{equation}
a=F_a(W) = \max\{1,b-1+c_b\},\quad\text{and}\quad
b=F_b(W) = \min\{m,a+c_{a+1}\}.
\label{eq:lemma8-sign-coordinates}
\end{equation}
Suppose that $a<t^*$. Then, we have $c_{a+1}=1$  and thus $b=a+1\leq t^*$. Therefore, $a = \max\{1,b\}=b$, contradicting $b=a+1$.
Suppose that $a>t^*$. Then, we have $c_{a+1}=0$ and thus $b=a>t^*$. Therefore, $a=b-1$, contradicting $b=a$. 
Therefore, we have $a=t^*$. Then, $c_{a+1}=0$ and we get $b=a$.

In conclusion, we have $a=b=t^*$ and $X=Y=(P_{t^*},\overline{P_{t^*}}) + (t^*-1)n\cdot \mathbf1 = P^*$. This gives the uniqueness of the fixed point of $F$.
\end{proof}

\subsubsection{Monotonicity}

\begin{lemma}\label{lem:amplifier-monotone}
Every member $F\in \mathsf{supp}(\Amp_{m}(\cH))$ is monotone.
\end{lemma}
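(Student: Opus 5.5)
The plan is to take two points $W=(a,b,X,Y)\preceq W'=(a',b',X',Y')$ in $[m]^2\times[mn]^{4k}$ and check $F(W)\preceq F(W')$ one block of output coordinates at a time. The construction is symmetric: $F_X$ and $F_b$ depend only on the pair $(a,X)$, and $F_Y$ and $F_a$ depend only on $(b,Y)$. So it suffices to show that $(a,X)\mapsto F_X$ and $(a,X)\mapsto F_b$ are monotone. The arguments for $F_Y$ and $F_a$ are the same, with the $\{0,1\}$ displacement replaced by the $\{-1,0\}$ displacement. For each block I split into the cases $a<a'$ and $a=a'$.

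\textbf{Value coordinates.} Suppose $a<a'$. By construction $F_X(W)\in B_a$ and $F_X(W')\in B_{a'}$. Every point of $B_a$ has all coordinates $\le an\le (a'-1)n<$ every coordinate of every point of $B_{a'}$, so $F_X(W)\preceq F_X(W')$. Now suppose $a=a'$. The map $F_X$ is a composition of three maps:
\begin{itemize}
\item the projection $\Proj_a$, which clamps each coordinate and is therefore monotone;
\item translations, which are monotone;
\item $G_{H_a}$.
\end{itemize}
$G_{H_a}$ is monotone because $H_a$ is monotone and $u\mapsto\overline{H_a(\overline u)}$ is monotone: complementation reverses the order, and it is applied twice. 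Hence $F_X(W)\preceq F_X(W')$ in both cases.

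\textbf{Sign coordinates.} Suppose $a<a'$. Then
$$F_b(W)\le\min\{m,a+1\}\le a'\le F_b(W'),$$
since the sign function is nonnegative. Similarly, if $b<b'$, then
$$F_a(W)\le b\le\max\{1,b'-1\}\le F_a(W').$$
Suppose instead $a=a'$. The translated projections satisfy $X''\preceq X'''$, where these denote the translated projections of $X$ and $X'$. The maps $v\mapsto\min\{m,a+v\}$ and $v\mapsto\max\{1,b-1+v\}$ are nondecreasing. So the case reduces to showing that $\mathsf S_{H,c}$ is monotone on $[n]^{2k}$ for any $H$ and any bit $c$.

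\textbf{Monotonicity of $\mathsf S_{H,c}$.} This is the only step requiring an actual argument, because of the hidden bit at $P=(P_H,\overline{P_H})$. Since $D_C$ is downward closed, $U_C$ is upward closed, which handles every case except one: $X''=P$ with $c=1$. In that case I need every $Y''\succeq P$ to have sign $1$. If $Y''=P$ this is immediate. If $Y''\succ P$ strictly, I claim $Y''\in U_C$. Indeed, every element of $C$ has coordinate sum exactly $k(n+1)$, and $P\in C$, so $\sum_i Y''_i>k(n+1)$. Then $Y''$ cannot lie below any $w'\in C$. Hence $Y''\in U_C$ and $\mathsf S_{H,c}(Y'')=1$.

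In other words, every point of the antichain $C$, in particular the hidden point, is a maximal element of $D_C$. This is exactly what lets the hidden bit be set arbitrarily without breaking monotonicity. Combining the cases gives $F(W)\preceq F(W')$.
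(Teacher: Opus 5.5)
Your proof is correct and follows essentially the same route as the paper. It uses the same ingredients: monotonicity of $G_H$ via double complementation, monotonicity of $\mathsf{S}_{H,c}$ via downward-closedness of $D_C$ plus maximality of the hidden point in $D_C$, and the box ordering with the $\{0,1\}$ and $\{-1,0\}$ displacement bounds when a sign coordinate strictly increases. The differences are cosmetic: your coordinate-sum argument is exactly why $C$ is an antichain, and you split on $a<a'$ versus $a=a'$ for general comparable pairs, whereas the paper reduces by transitivity to pairs differing in a single coordinate.
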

\begin{proof}
First, we prove that $G_H$ is monotone for any $H\in \mathsf{supp}(\cH)$. Suppose that $(x,u)\preceq(x',u')$. Because of the monotonicity of $H$, we have $H(x)\preceq H(x')$. 
Also, $u\preceq u'$ implies that $\bar u\succeq\bar u'$, and hence $H(\bar u)\succeq H(\bar u')$. 
Taking complements gives $\overline{H(\bar u)}\preceq\overline{H(\bar u')}$.
By definition of $G_H$, we have $G_H(x,u)\preceq G_H(x',u')$.

Next, we prove that ${\mathsf{S}_{H,c}}$ is monotone for any $H\in \mathsf{supp}(\cH)$ and $c\in \{0,1\}$. 
Let $Z\preceq Z' \in [n]^{2k}$.
If ${\mathsf{S}_{H,c}}(Z)=0$ or $Z=Z'$, it is clear that ${\mathsf{S}_{H,c}}(Z) \leq {\mathsf{S}_{H,c}}(Z')$.
Otherwise, we have either $Z\in U_C$, or $Z=(P_H,\overline{P_H})$ and $c=1$.
In the former case, if $Z'\in D_C$, we can find $Z''\in C$ such that $Z'\preceq Z''$ and thus $Z\preceq Z''$. This implies $Z\in D_C$ by the definition of $D_C$, which contradicts that $Z\in U_C$ and $U_C, D_C$ are disjoint. 
In the latter case, $Z'\in U_C$ as well; otherwise, $Z\prec Z'\preceq Z''$ for some $Z''\in C$, contradicting that $Z\in C$ and $C$ forms an anti-chain. Thus ${\mathsf{S}_{H,c}}(Z')=1={\mathsf{S}_{H,c}}(Z)$ in either case. 

We now use these two properties to prove that $F$ is monotone.
It suffices to compare two points $W\preceq W'$ that differ in exactly
one coordinate, because of the transitivity of $\preceq$. 
We write $W=(a,b,X,Y)$ and discuss the following cases.

\smallskip\noindent
\textbf{Only $a$ increases.}
Write $W'=(a',b,X,Y)$ with $a'>a$. The outputs $F_a$ and $F_Y$
are unchanged. Since every sign function takes values in $\{0,1\}$,
\[
F_b(W)\le \min\{m,a+1\}\le a'\le F_b(W').
\]
Moreover, because $F_X(W)\in B_a$ and $F_X(W')\in B_{a'}$, for every
$i\in[2k]$,
\[
[F_X(W)]_i\le an<(a'-1)n+1\le [F_X(W')]_i.
\]
Therefore, $F(W)\preceq F(W')$.

\smallskip\noindent
\textbf{Only $b$ increases.}
Write $W'=(a,b',X,Y)$ with $b'>b$. The outputs $F_b$ and $F_X$
are unchanged, while
\[
F_a(W)\le b\le b'-1\le F_a(W').
\]
Moreover, because $F_Y(W)\in B_b$ and $F_Y(W')\in B_{b'}$, for every $i\in [2k]$,
\[
[F_Y(W)]_i\le bn<(b'-1)n+1\le [F_Y(W')]_i.
\]
Therefore,  $F(W)\preceq F(W')$.

\smallskip\noindent
\textbf{Only one coordinate of $X$ increases.}
Write $W'=(a,b,X',Y)$, where $X\prec X'$. 
The outputs $F_a$ and $F_Y$ are unchanged. Since $\Proj_a$ and $G_{H_a}$ are monotone, we  have $F_X(W)\preceq F_X(W')$ by the definition of $F_X$.
Moreover, because of the monotonicity of ${\mathsf{S}_{H_a,c_{a+1}}}$, we have
\begin{align*}
F_b(W)&=\min\{m,a+{\mathsf{S}_{H_a,c_{a+1}}}(\Proj_a(X)- (a-1)n\cdot \mathbf{1})\}\\
&\le \min\{m,a+{\mathsf{S}_{H_a,c_{a+1}}}(\Proj_a(X')-(a-1)n\cdot \mathbf{1})\}=F_b(W').
\end{align*}

\smallskip\noindent
\textbf{Only one coordinate of $Y$ increases.}
Write $W'=(a,b,X,Y')$, where $Y\prec Y'$. The outputs $F_b$
and $F_X$ are unchanged. Since $P_b$ and $G_{H_b}$ are monotone, we have
$F_Y(W)\preceq F_Y(W')$ by the definition of $F_Y$.
Moreover, because of the monotonicity of ${\mathsf{S}_{H_b,c_b}}$, we have
\begin{align*}
F_a(W)&=\max\{1,b-1+{\mathsf{S}_{H_b,c_b}}(\Proj_b(Y)-(b-1)n\cdot \mathbf{1})\}\\
&\le \max\{1,b-1+{\mathsf{S}_{H_b,c_b}}(\Proj_b(Y')-(b-1)n\cdot \mathbf{1})\}=F_a(W').
\end{align*}
Therefore, we have $F(W)\preceq F(W')$ in every case, proving that $F$ is monotone.
\end{proof}

\subsection{Query lower bound for the constructed distribution of functions}
\label{sec:randomized-lower-bound}

In this subsection, we analyze the amplification step and prove \Cref{theorem:inductive-tail-bound}, restated below. 
\InductiveTailBound*
Suppose that $\cH$ is the level-$(\ell-1)$ distribution and satisfies the level-$(\ell-1)$ guarantee in \Cref{theorem:inductive-tail-bound}. We will show that a padded version of our amplification $\cF=\Amp_{m}(\cH)$ satisfies the level-$\ell$ guarantee.
To make the notation simpler (hiding the subscript), we say functions in $\mathsf{supp}(\cH)$ map from $[n]^k$ to itself, and the induction hypothesis is 
\vspace{-0.3em}
\begin{quote}
{\bf Induction hypothesis:}
{\it
    the probability that any deterministic decision tree with depth at most $q$ finds the fixed point of $H\sim \cH$ is at most $\eps$.
}
\end{quote}
\vspace{-0.3em}
Recall our notations in \Cref{subsec:lower-amp-step}. $H_1, \cdots, H_m$ are i.i.d.~samples from $\cH$ and $t^*\sim \mathsf{uniform}([m])$. $P_1,\cdots,P_m$ are the unique fixed points of $H_1,\cdots,H_m$ respectively.
We consider the signs $c_{t} = \mathbf{1}(t\leq t^*)$ and $c_{m+1}=0$.
Then, we summarize the definition $F(W)=(F_a(W), F_b(W), F_X(W), F_Y(W))$ for $W=(a,b,X,Y)$ in terms of $H_1, \dots, H_m$ and $c_1,\dots, c_{m+1}$ as follows.
\begin{itemize}
    \item $F_X(W)$ uses $a,X$. Let $\Proj_a(X)-(a-1)n\cdot \mathbf{1}=(x,u)$. We make two queries $H_a(x)$ and $H_a(\overline{u})$. The values are then determined by these queries. 
    \item $F_Y(W)$ uses $b,Y$. Let $\Proj_b(Y)-(b-1)n\cdot \mathbf{1}=(y,v)$. We make two queries $H_b(y)$ and $H_b(\overline{v})$. The values are then determined by these queries.
    \item $F_a(W)$ uses $b,Y$. The value is then determined by $c_b$ only  when $\Proj_b(Y) - (b-1)n\cdot \mathbf1$ is the unique fixed point of $G_{H_b}$ (whether this holds can be determined by the values of $F_Y(W)$). Otherwise, it has a default value (based on $Y$).
    \item $F_b(W)$ uses $a,X$. The value is then determined by $c_{a+1}$ only when $\Proj_a(X) - (a-1)n\cdot \mathbf1$ is the unique fixed point of $G_{H_a}$ (whether this holds can be determined by the values of $F_X(W)$). Otherwise, it has a default value (based on $X$).
\end{itemize}
Next, we present the key lemma in our analysis that establishes a similar probability bound for $\Amp_{m}(\calH)$.

\begin{lemma}
\label{lem:randomized-one-step}
Assume the induction hypothesis for $\cH$. For every deterministic decision tree of depth at most $q'=\frac{q\log m}{64}$, the probability that it finds the fixed point of $F\sim \Amp_{m}(\cH)$ is at most
$\eps'=\frac{1}{\sqrt m}+\frac{q\eps}{3}$.
\end{lemma}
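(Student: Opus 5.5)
The plan is to show that a decision tree learns about $t^*$ only through comparison bits $c_t$, that each such bit can be revealed only by locating the fixed point of an inner instance $H_t$, and that there is too little budget to locate many of them.

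\textbf{Step 1: what a query reveals.} Using the summary preceding the lemma, I would treat each query $W=(a,b,X,Y)$ to $F$ as at most four queries to the inner functions: $H_a(x)$, $H_a(\overline u)$, $H_b(y)$ and $H_b(\overline v)$. The value $F(W)$ is determined by these answers together with the two bits $c_{a+1}$ and $c_b$. The key observation is the following. By the definition of $\mathsf{S}_{H,c}$, the bit $c_{a+1}$ affects $F_b(W)$ only when the translated projection of $X$ equals $(P_a,\overline{P_a})$, and that configuration forces a query of $H_a$ at $P_a$. The same holds for $c_b$ and $P_b$.

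I would call box $t$ \emph{solved} once some inner query to $H_t$ lands on $P_t$. Then the whole transcript is a function of three things:
\begin{itemize}
\item the inner answers,
\item the defaults, which are independent of $t^*$,
\item the bits $c_t$ (or $c_{t+1}$) for solved boxes $t$.
\end{itemize}
To output $(t^*,t^*,P^*,P^*)$ correctly, the tree must in particular output $t^*$.

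\textbf{Step 2: heavy and light boxes.} I would fix a threshold $\tau\le q$, proposing $\tau = q/4$ or similar, and call a box \emph{heavy} if it receives more than $\tau$ inner queries. Since the total number of inner queries is at most $4q'=q\log m/16$, there are at most $4q'/\tau=O(\log m/16)$ heavy boxes; the constants are to be tuned so that this is at most $\tfrac12\log m - O(1)$.

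Now condition on the event that no light box is ever solved. Then the only information about $t^*$ consists of comparison bits $c_t=\mathbf 1(t\le t^*)$ at heavy boxes, each chosen adaptively. So the tree is effectively a comparison tree for a uniform $t^*\in[m]$ with at most $L\le \tfrac12\log m$ comparisons. Its leaves partition $[m]$ into at most $2^L$ intervals, and the tree must name one value per leaf. Hence its success probability is at most $2^L/m\le 1/\sqrt m$.

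\textbf{Step 3: bounding light solves (main obstacle).} It remains to show that the probability that some light box is solved is at most $q\eps/3$. The $H_t$ are i.i.d.\ and independent of $t^*$, so I would use a reduction to the induction hypothesis. Given a tree for $F$, build a depth-$\le q$ tree for a single $H\sim\cH$ as follows:
\begin{itemize}
\item pick an index $j$ (at random or by averaging);
\item plant $H$ as the $j$-th distinct box touched;
\item sample all other $H_t$ and $t^*$ internally and simulate the $F$-tree;
\item stop once box $j$ has received $\tau\le q$ queries.
\end{itemize}
Any solve of the planted box within its light budget is a success, so each choice of $j$ succeeds with probability at most $\eps$.

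The difficulty is that a naive union bound over all touched boxes gives a factor $4q'$, which is $\Theta(\log m)$ times too large. To get $q\eps/3$, I would first try to charge each light solve to the queries spent in that box: the number of boxes with at least one query is unbounded, but I would group the boxes into batches whose total query count is at most $q$. I would then simulate a whole batch jointly by a single depth-$q$ tree for $H$, planting $H$ in a random box of the batch. If that charging does not yield exactly the constant $1/3$, I would instead accept a bound of the form $O(q\eps)$ and check that the recursion in \Cref{theorem:inductive-tail-bound} still closes with $\eps_{\ell,m}=3m^{-c_0-1/(8\ell)}$.

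Summing the bounds from Steps 2 and 3 gives $\eps'$.
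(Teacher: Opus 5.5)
Your Steps 1 and 2 follow the paper's outline: signs are revealed only when an inner query hits $P_t$, boxes are split into heavy and light by inner-query count, and you count over $t^*$ with $H_1,\dots,H_m$ fixed to get $2^L/m$. Phrase that last bound as one on the joint event $\Pr[\text{success and few signs revealed}]$, because conditioning on an event that depends on $t^*$ through adaptivity can bias $t^*$.

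The gap is Step 3. The inequality you aim for, $\Pr[\text{some light box is solved}]\le q\eps/3$, does not follow from the induction hypothesis, and no batching or charging can produce it. The hypothesis allows a fixed point $x_0\in[n]^k$ to equal $P_H$ with probability $\eps/2$. For example, give weight $\eps/2$ to one function with fixed point $x_0$ and the remaining weight to a distribution on which depth-$q$ trees succeed with probability at most $\eps/2$. Now take a tree that spends its $q'$ queries on points whose translated projections equal $(x_0,\overline{x_0})$, in $2q'$ distinct boxes. Each of these light boxes is solved independently with probability $\eps/2$. So some light box is solved with probability about $q'\eps=q\eps\log m/64$, which is $\Theta(\log m)$ times larger than $q\eps/3$. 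Planting $H$ in a random box of a batch only reproduces the union bound. Your fallback $O(q\eps)$ is equally out of reach: the event ``no light box is solved'' is simply too expensive.

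What you are missing is that a few light solves are harmless; you do not need zero. Use $q$ itself as the heavy threshold, so there are at most $4q'/q=\frac{1}{16}\log m$ heavy boxes. Each solved box reveals at most two signs, so the budget of $\frac12\log m$ revealed signs tolerates $\frac14\log m$ solved boxes, i.e., up to $\frac{3}{16}\log m$ light solves. To bound these, condition on $t^*$ and on all $H_{t'}$ with $t'\neq t$. This makes the interaction with $H_t$, truncated after $q-1$ inner queries, a deterministic tree of depth less than $q$, and it also determines whether $N_t\ge 1$. Hence box $t$ is solved within its first $q-1$ queries with probability at most $\eps\Pr[N_t\ge 1]$. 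The expected number of light solves is therefore at most $\eps\sum_t\mathbb{E}[N_t]\le 4q'\eps=q\eps\log m/16$. Markov's inequality at threshold $\frac{3}{16}\log m$ then gives exactly $q\eps/3$. The $\log m$ in the threshold cancels the $\log m$ in the expectation, and that is precisely the slack your all-or-nothing event discards.
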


\begin{proof}
Consider any deterministic decision tree of depth at most $q'$. 
For a query $W=(a,b,X,Y)$, using our previous discussions, we can make at most $4$ queries in total to $H_1,\ldots,H_m$ to determine $F_X(W)$ and $F_Y(W)$. 
Further, the value of $F_b(W)$ (or $F_a(W)$) is independent of any sign $c_t$ unless $X$'s projection $\Proj_a(X)-(a-1)n\cdot \mathbf1$ (resp., for $\Proj_b(Y)-(b-1)n\cdot \mathbf1$) equals $(P_a,\overline{P_a})$ (resp., for $(P_b,\overline{P_b})$), where $P_a$ (resp., $P_b$) is the unique fixed point of $H_a$ (resp., $H_b$). 
This means that a query to $F(W)$ only reveals at most four queries in total, to $H_1,\cdots,H_m$; and the number of different signs $c_t$ revealed by $F(W)$ is upper bounded by the number of queries with different $(a,X)$ such that $\Proj_a(X)-(a-1)n\cdot \mathbf1=(P_a,\overline{P_a})$ plus the number of queries with different $(b,Y)$ such that $\Proj_b(Y)-(b-1)n\cdot \mathbf1=(P_b,\overline{P_b})$.

On the other hand, if the decision tree successfully finds the unique fixed point of $F$, it must learn $t^*$ because of \Cref{lem:amplifier-unique}. Therefore, we can upper bound the probability that the decision tree finds the fixed point of $F$ by the probability that it correctly finds $t^*$.

Let $R$ denote the random variable representing the total number of different signs $c_t$ revealed by the queries to $F$.

First, we upper bound the probability of learning $t^*$ if the queries reveal at most $0.5\log m$ signs $c_t\in \{0,1\}$. 
We observe that because $H_1,\cdots,H_m$ are sampled independently of $t^*$, the values of the revealed queries to them do not reveal any information about $t^*$.
Therefore, conditioning on $H_1,\dots,H_m$, because any deterministic decision tree with at most $d$ 0-1 queries can have at most $2^d$ different outputs, we have 
\begin{equation*}
 \Pr\left[
 \begin{array}{c}
 \text{the decision tree finds }t^*\\
 \text{ and } R\leq 0.5\log m
 \end{array}
 \middle| H_1,\dots, H_m
 \right]
 \leq \frac{1}{m} \cdot 2^{0.5\log m} = \frac{1}{\sqrt{m}}.
 \label{eq:few-signs}
\end{equation*}
Taking the average over all possibilities of $H_1,\cdots, H_m$, we can show that the probability that the decision tree finds $t^*$ and $R\leq 0.5\log m$ is at most $1/\sqrt{m}$.

Second, we upper bound the probability that the decision tree with depth $q'$ reveals $>0.5\log m$ signs, i.e., $\Pr[R>0.5\log m]$.
Recall that we can obtain each $F(W)$ by making 4 queries, each to one of $H_1,\dots, H_m$. 
Hence, we can transform each query to $F(W)$ into 4 such queries. 
Let $\cQ$ denote the ordered list of transformed queries. 
For each $t\in[m]$, we let $E_t$ be the event that we query $P_t$ (the unique fixed point of $H_t$) to $H_t$ in $\cQ$ and we let $N_t$ denote the total number of queries to $H_t$ in $\cQ$. 
Note that the sign $c_t$ is revealed only when we query with $(a,\Proj_a(X))=\big(t-1,\ (P_{t-1},\overline{P_{t-1}})+(t-2)n\cdot \mathbf{1}\big)$ (if $t>1$) or $(b,\Proj_b(Y))=\big(t,\ (P_t,\overline{P}_t)+(t-1)n\cdot \mathbf{1}\big)$. In these cases, we query $P_{t-1}$ to $H_{t-1}$ or query $P_t$ to $H_t$ in $\cQ$.
When $c_t$ is revealed, either $E_{t-1}$ (if $t>1$) or $E_{t}$ occurs. Therefore, we have $R\leq 2\sum_{t\in [m]} \mathbf{1}(E_t)$ and thus 
\begin{align*}
    \Pr[R>0.5\log m]\leq \Pr\left[\sum_{t\in [m]}\mathbf{1}(E_t)>\frac{1}{4}\log m\right].
\end{align*}

Because the depth of the decision tree is at most $q'$, the number of queries in $\cQ$ is at most $4q'$, which means that $\sum_{t\in [m]} N_t\leq 4q'$. The number of $t\in[m]$ such that $N_t\geq q$ is at most $4q'/q$.
Also, note that when $N_t=0$, $E_t$ can't happen.
For $t$ such that $N_t\geq 1$, we let $E'_t$ denote the event that there is a query asking for $H_t(P_t)$ among the first $\leq q-1$ queries to $H_t$. 
Hence, we have 
\begin{align*}
    \Pr\left[\sum_{t\in [m]}\mathbf{1}(E_t)>\frac{1}{4}\log m\right] &\leq \Pr\left[\sum_{t\in [m]:1\leq N_t<q}\mathbf{1}(E_t)>\frac{1}{4}\log m - \frac{4q'}{q}\right]
    \\
    &\leq \Pr\left[\sum_{t\in [m]:N_t\geq 1}\mathbf{1}(E'_t)>\frac{1}{4}\log m - \frac{4q'}{q}\right]~.
\end{align*}

Consider any $t\in [m]$ such that $N_t\geq 1$.
Because $H_1,\dots, H_m$ (except $H_t$) and $t^*$ are sampled independently of $H_t$, we can simulate the following (randomized) decision tree that makes at most $q-1$ queries to $H_t$. Every time a transformed query is about $H_t$, we query it until we have asked $q-1$ queries in total; otherwise, we consider it as some random bits, taken over the randomness of $H_1,\dots, H_m$ (except $H_t$) and $t^*$; finally, we check if we have queried the fixed point of $H_t$ by the queries: if we have asked a point $x$ such that $H_t(x)=x$, we output the point; otherwise, we output arbitrarily.
According to our induction hypothesis, any decision tree with depth at most $q$ can only find the fixed point of $H_t$ with probability at most $\eps$. 
Note that this simulated decision tree finds the fixed point when $E'_t$ happens. Therefore, we have 
\begin{align*}
    \mathbb{E}\left[\sum_{t\in [m]: N_t\geq 1}\mathbf{1}(E'_t)\right] = \sum_{t\in [m]} \Pr[E'_t\text{ and }N_t\geq 1] \leq \eps \sum_{t\in [m]} \mathbb{E}[N_t] \leq 4q'\eps~.
\end{align*}
Using Markov's inequality, we have 
\begin{align*}
    \Pr\left[\sum_{t\in [m]:N_t\geq 1}\mathbf{1}(E'_t)>\frac{1}{4}\log m - \frac{4q'}{q}\right] &\leq \frac{4q'\eps}{\frac{1}{4}\log m - \frac{4q'}{q}}
    \\
    &= \frac{(q\eps\log m)/16}{\frac{1}{4}\log m - \frac{1}{16}\log m} = \frac{q\eps}{3}~.
\end{align*}
Hence, we have shown that $\Pr[R>0.5\log m]\leq \frac{q\eps}{3}$.

Finally, putting everything together, we have the following upper bound for the probability that the decision tree finds $t^*$:
\begin{align*}
    &\Pr\left[\text{the decision tree finds $t^*$}\right] \leq
    \\
    &\qquad\qquad \Pr[R>0.5\log m] + \Pr\left[\text{the decision tree finds $t^*$}\text{ and } R\leq 0.5\log m\right] \leq \frac{1}{\sqrt{m}} + \frac{q\eps}{3}~.
\end{align*}
Because the probability that the decision tree finds the fixed point of $F$ is upper bounded by the above probability, we have completed the proof. 
\end{proof}

\begin{proof}[Proof of \Cref{theorem:inductive-tail-bound}]

    Fix any $m\geq 200$. 
    The base case ($\ell=1$) follows from \Cref{lem:lower-basecase}.

Let $\ell\geq2$ and suppose that the theorem holds at level $\ell-1$. Let
$\cH=\cF_{\ell-1,m}$ and apply the amplification step to obtain $\widehat{\cF}=\Amp_m(\cH)$. Because we have $n_{\ell,m}=m n_{\ell-1,m}$ and $k_\ell=4k_{\ell-1}+2$ by their definitions, every $\widehat F\in\mathsf{supp}(\widehat{\cF})$ maps $[m]^2\times[n_{\ell,m}]^{k_\ell-2}$ to itself.
By \Cref{lem:amplifier-monotone,lem:amplifier-unique}, every such $\widehat F$ is monotone and has a unique fixed point.

To obtain $\cF_{\ell,m}$, we pad the first two coordinates that have side length $m$. For each $W\in[n_{\ell,m}]^{k_\ell}$, we let $F(W)=\widehat F(\Pi(W))$, where we sample $\widehat F \sim \widehat \cF$ and define
\[
 \Pi(W)=\bigl(\min\{W_1,m\},\min\{W_2,m\},
 W_3,\ldots,W_{k_\ell}\bigr)
\]
The function $F$ is monotone because both $\Pi$ and $\widehat F$ are
monotone. Moreover, $F$ and $\widehat F$ have the same unique fixed point because any point $W\notin [m]^2 \times [n_{\ell,m}]^{k_{\ell}-2}$ maps to $F(W)\in [m]^2 \times [n_{\ell,m}]^{k_{\ell}-2}$. 

It remains to prove the probability bound (the second bullet of the theorem). Let $q=q_{\ell-1,m}$ and $\eps=\eps_{\ell-1,m}$.
Using the definition of $q_{\ell,m}$, we have
\begin{align*}
 \frac{q_{\ell,m}}{(q_{\ell-1,m}\log m)/64}
 &=\frac{1}{8\ell^3}
   \left(\frac{\ell-1}{\ell}\right)^{3\ell}
 \leq1.
\end{align*}
Due to \Cref{lem:randomized-one-step}, any deterministic decision tree of depth at most $q_{\ell,m}$ can find the fixed point of $\widehat{F} \sim \widehat{\cF}$ with probability at most 
\begin{align*}
    \frac{2}{\sqrt m} +\frac{q_{\ell-1,m}\eps_{\ell-1,m}}3 &= \frac{2}{\sqrt{m}} + c_0\left(\frac{\log m}{512(\ell-1)^3}\right)^{\ell} \cdot m^{-c_0-\frac{1}{8(\ell-1)}}
    \\
    &\leq 2m^{-c_0-\frac{1}{8\ell}} + c_0\left(\frac{\log m}{512(\ell-1)^3}\right)^{\ell} m^{-\frac{1}{8\ell(\ell-1)}} \cdot m^{-c_0-\frac{1}{8\ell}}~.
\end{align*}
Because we can show that $\big(\frac{\log m}{512(\ell-1)^3}\big)^{\ell} < m^{1/8\ell(\ell-1)}$ by the following calculation:
\begin{align*}
\log\left(\left(\frac{\log m}{512(\ell-1)^3}\right)^\ell\right)
&\leq \ell \cdot \frac{\log m}{512(\ell-1)^3} \tag{$\log x\leq x$ for $x>0$}\\
&=\frac{\ell}{512(\ell-1)^3} \cdot \log m<\frac{1}{8\ell(\ell-1)} \cdot \log m~, \tag{$\ell\geq 2$}
\end{align*}
any deterministic decision tree of depth at most $q_{\ell,m}$ can find the fixed point of $\widehat{F} \sim \widehat{\cF}$ with probability at most $3m^{-c_0-\frac{1}{8\ell}}=\eps_{\ell,m}$. Note that any decision tree that finds a fixed point of $F$ can be simulated by a decision tree that finds a fixed point of $\widehat{F}$ with the same number of queries, and $F$ and $\widehat{F}$ have the same unique fixed point. The same probability bound also holds for $F\sim \calF_{\ell,m}$.
\end{proof}

\section*{Acknowledgment and AI disclosure}
Xi Chen and Yuhao Li were supported by NSF grants CCF-2106429 and CCF-2107187, and a grant from the AI Security Institute. 
Ruiquan Gao and Aviad Rubinstein were supported by a David and Lucile Packard Fellowship. Mihalis Yannakakis was supported by NSF grants CCF-2107187 and CCF-2212233.

The authors found both results via extensive discussions with GPT-5.6 Sol. 
The authors wrote the paper with the assistance of GPT-6 Astra and take full responsibility for all the content. 

\bibliographystyle{alpha}
\bibliography{ref}

\appendix
\section{\texorpdfstring{Proof of \Cref{theorem: upper}}{Proof of the upper bound}}\label[appendix]{app:upper-bound}

We prove \Cref{theorem: upper} by solving the recurrence in
\Cref{thm:star-recurrence} and applying
\Cref{thm:safe-tarski-equivalence,lem:tarski-star-reduction}.
The same argument allows the factor $5^k$ to be replaced by $C^k$ for any
fixed $C>4$; for simplicity, we take $C=5$.

\begin{proof}[Proof of \Cref{theorem: upper}]
Set $e(r)=\lceil\log((r+3)/3)\rceil$ for integers $r\geq0$.
We claim that there is a constant $A$ such that
$$
  \SafeTarskiStar(n,r)\leq A5^r(\log n)^{e(r)}.
$$
For $r\geq2$, write $m=\lfloor r/2\rfloor-1$. Then
$1+e(m)=e(r)$: both sides are at most an integer $j\geq1$ exactly when
$r\leq3\cdot2^j-3$. Since $e(0)=0$ and $e(1)=1$, the base cases
$\SafeTarskiStar(n,0)=1$ and $\SafeTarskiStar(n,1)=O(\log n)$, together with
\eqref{eq:main-recurrence}, give
$\SafeTarskiStar(n,r)=O((\log n)^{e(r)})$ for each fixed $r$.

Choose an absolute constant $D>0$ such that
\eqref{eq:main-recurrence} holds with multiplier $Dr2^r\log n$.
Then
$$
  Dr\,2^r5^{m-r}
  \leq\frac{Dr}{5}\left(\frac{2}{\sqrt5}\right)^r
  \longrightarrow0.
$$
Thus $Dr\,2^r5^m\leq5^r$ for all sufficiently large $r$.
Choose $A$ to cover the finitely many smaller dimensions. Induction using
\eqref{eq:main-recurrence} now gives
$$
  \SafeTarskiStar(n,r)
  \leq Dr\,2^r\log n\cdot A5^m(\log n)^{e(m)}
  \leq A5^r(\log n)^{e(r)},
$$
proving the claim.

For $k\geq3$, \Cref{lem:tarski-star-reduction} gives
\begin{equation}\label{eq:tarski-star-reduction}
  \Tarski(n,k)=O\!\left((\log n)\TarskiStar(n,k-1)\right).
\end{equation}
Combining \eqref{eq:tarski-star-reduction} with
\Cref{thm:safe-tarski-equivalence} and the claim gives
$$
  \Tarski(n,k)
  =O\!\left(5^{k-1}(\log n)^{1+e(k-1)}\right)
  =O\!\left(5^k(\log n)^{1+\lceil\log((k+2)/3)\rceil}\right).
$$
Finally, let $j=\lceil\log k\rceil\geq2$. Since
$k+2\leq2^j+2\leq3\cdot2^{j-1}$, we have
$$
  1+\left\lceil\log\frac{k+2}{3}\right\rceil
  \leq j=\lceil\log k\rceil.
$$
This yields the simplified bound
$\Tarski(n,k)=O\!\left(5^k(\log n)^{\lceil\log k\rceil}\right)$,
completing the proof.
\end{proof}

\end{document}